\newtheorem{theorem}{Theorem}
\newtheorem{assumption}{Assumption}
\newtheorem{lemma}{Lemma}
\newtheorem{proposition}{Proposition}
\newtheorem{corollary}{Corollary}
\newtheorem{definition}{Definition}
\newtheorem{example}{Example}
\newcommand{\dif}{\mathop{}\!\mathrm{d}}
\newcommand{\innp}[2]{\left\langle #1,\, #2 \right\rangle}
\newcommand{\abs}[1]{\left\lvert #1 \right\rvert}
\newcommand{\norm}[1]{\left\lVert #1 \right\rVert}
\pgfplotsset{compat=1.14}
\newcommand{\comm}[1]{}
\newcommand{\todo}[1]{}
\newif\ifanonymous
\numberwithin{equation}{section}
\title{Design-Based Inference under \\ Random Potential Outcomes}
\author{Submitted for review}
\date{}
\author{Yukai Yang\orcidlink{0000-0002-2623-8549} \\ Department of Statistics, Uppsala University}
\date{}
\begin{document}

\maketitle

\begin{abstract}
We study whether mechanism-level causal estimands, defined as expectations over latent stochastic environments, can be consistently recovered from a single realised randomised experiment. Identification alone does not guarantee recoverability. The target estimand averages over latent environments, whereas a single experiment provides only one realisation of such an environment. We show that suitably sparse local dependence induces an ergodic-type property under which cross-sectional averaging consistently recovers expectations over the latent outcome-generating mechanism. Under this structure, aggregate design-based estimators are consistent and asymptotically normal, and the variance becomes consistently estimable from a single experiment. Unlike classical finite-population inference, where Neyman-type variance estimators are structurally limited to conservative upper bounds, the proposed framework permits consistent variance estimation through the shift from fixed potential outcome schedules to stochastic mechanisms.
\end{abstract}

\noindent \textbf{Keywords:}
Recoverability; Mechanism-level estimands; Local dependence; Variance estimation; Dependency graphs.

\section{Introduction}\label{sec:introduction}

\defcitealias{neyman1923}{Neyman (1923/1990)}

In randomised experiments, researchers are typically interested in whether an observed treatment effect reflects a stable underlying mechanism rather than a favourable realisation of a particular experimental environment. This question arises naturally in applications such as randomised controlled trials, A/B testing, industrial experimentation, and policy evaluation, where the objective is not merely to describe one realised experiment but to understand the broader mechanism generating such effects across hypothetical repetitions of the experiment. The classical fixed potential outcome (FPO) framework, however, provides no mechanism for addressing this distinction.

In the classical design-based framework of \citetalias{neyman1923} and \citet{rubin1974}, potential outcomes are treated as fixed quantities, and randomness arises solely through the treatment assignment mechanism. Under this FPO formulation, the canonical estimand is
\begin{equation}\label{eq:simpleFPOtau}
\tau_{\mathrm{FPO}}
=
\frac{1}{n}\sum_{i=1}^n
\bigl\{
Y_i(1)-Y_i(0)
\bigr\},
\end{equation}
where the realised potential outcome schedule is regarded as fixed throughout the experiment. The framework therefore conditions on a particular realised experimental world and does not distinguish between inference conditional on that realised world and recoverability of a mechanism-level quantity defined across stochastic environments. Design-based identification is achieved directly through the randomisation design without requiring a probabilistic model for the outcome process.

This framework has been extended to settings involving interference, network dependence, and complex experimental designs \citep{aronow2017estimating,savje2021average,athey2021design}. These developments preserve the FPO assumption and continue to define causal estimands conditionally on realised potential outcome schedules. Consequently, the object of inference remains tied to a realised experimental environment rather than to an underlying stochastic mechanism.

In this paper, we consider a random potential outcome (RPO) formulation. The potential outcome for unit $i$ is represented by $\tilde y_i(z,\omega)$, which depends jointly on the treatment assignment $z$ and a latent stochastic environment $\omega$.
A canonical expectation-based estimand under this formulation is
\begin{equation}\label{eq:simpleRPOtau}
\tau_{\mathrm{RPO}}
=
\mathbb E_{\omega}
\left[
\frac{1}{n}\sum_{i=1}^n
\bigl\{
\tilde y_i(1,\omega)
-
\tilde y_i(0,\omega)
\bigr\}
\right].
\end{equation}
More generally, the paper studies mechanism-level treatment-effect functionals defined through expectations over latent stochastic environments rather than conditionally on realised potential outcome schedules. This differs from model-assisted approaches \citep{rubin1978bayesian,rosenbaum1983central,chernozhukov_doubledebiased_2018}, which also allow stochastic outcome structures but rely on modelling assumptions on the outcome or assignment mechanism rather than on the randomisation design itself. The classical FPO framework is recovered as the degenerate special case in which the latent environment reduces to a singleton. In this degenerate case, the recoverability problem disappears because no averaging over latent stochastic environments is required.

It is useful to distinguish the present framework from sampling-based or superpopulation formulations \citep{cassel1977,little2004calibrated,abadie2020}. Those approaches study population-level quantities under sampling uncertainty, where randomness arises through the sampling mechanism. By contrast, the estimands considered here are expectations over latent stochastic environments and therefore represent mechanism-level quantities rather than population-level quantities. Consequently, the inferential objective differs fundamentally from that of sampling-based formulations.
In particular, the local-dependence and neighbourhood-growth conditions developed in Section~\ref{sec:asymptotic} address recoverability of expectations over latent stochastic environments, a problem that is not characterised within the sampling-based framework.

Once causal estimands are defined as expectations over latent stochastic environments, a new inferential problem emerges. Although the estimand is defined through averaging over $\omega$, only one realised experiment corresponding to one draw of $\omega$ is observed. Identification alone therefore does not guarantee recoverability from a single realised experiment. The central inferential question becomes whether cross-sectional averaging across units can asymptotically recover expectations over the latent stochastic environment.

Recoverability fundamentally depends on the dependence structure induced by the latent environment. If all units share a single common latent shock, increasing the sample size provides no additional information about expectations over the stochastic environment. Cross-sectional averaging cannot recover mechanism-level quantities because the experiment still contains only one realised draw of $\omega$.
Additional probabilistic structure is therefore required to connect cross-sectional aggregation to averaging over latent stochastic environments.
In the design-based experimental settings considered here, such structure often arises naturally from the experimental design itself through clustering, spatial organisation, controlled interaction patterns, or partially engineered dependence structures.

We show that suitably sparse local dependence provides such a bridge. Under the conditions developed below, local dependence induces an ergodic-type property under which cross-sectional averaging asymptotically recovers expectations over the latent stochastic environment. A single realised experiment therefore becomes sufficient for asymptotically valid inference on mechanism-level causal estimands. The resulting asymptotic analysis is conducted on the joint probability space generated by treatment assignment and latent outcome variation.

In canonical discrete-treatment settings, the resulting estimator coincides algebraically with familiar unbiased estimators from the FPO framework, including the Horvitz--Thompson estimator \citep{horvitz_generalization_1952}. Under the stochastic framework developed here, however, the same statistic targets the mechanism-level estimand $\tau_{\mathrm{RPO}}$ and satisfies asymptotic normality under local dependence.

A further contribution of this paper concerns variance estimation. In the classical FPO framework, Neyman-type variance estimators are structurally limited to conservative upper bounds rather than consistent estimators of the variance itself. This limitation is not a technical artefact but a consequence of finite-population reasoning under FPO schedules. Within the proposed stochastic framework, this structural limitation can be overcome. By exploiting the averaging structure induced by local dependence, we develop feasible variance estimators that are consistent under stochastic outcome variation and account jointly for randomisation and latent outcome stochasticity. This permits asymptotically valid inference from a single realised experiment. The resulting improvement arises from shifting the inferential target from fixed realised schedules to stochastic outcome-generating mechanisms rather than from technical refinement within the finite-population framework itself.

Treatment-effect functionals admitting continuous linear representations remain identifiable through the randomisation design under the proposed framework.
Our representer-based construction draws on the use of representers for identification in design-based settings developed by \citet{harshaw2022riesz}.
Their framework studies identification of treatment-effect functionals under FPO.
By contrast, the present paper studies recoverability and asymptotic inference for expectation-based causal estimands under random potential outcomes.
The central question is not whether a treatment-effect functional is identified, but whether a mechanism-level estimand defined through expectations over latent stochastic environments can be consistently recovered from a single realised experiment.
This recoverability problem is specific to estimands defined through averaging over latent stochastic environments and therefore has no direct counterpart within the FPO setting, where inference is conditional on a realised potential-outcome schedule.
In this sense, representer constructions serve as mathematical infrastructure for identification rather than as the primary inferential contribution of the paper.
Accordingly, the contribution of the present paper lies not in a new representer construction, but in establishing recoverability, asymptotic normality, and consistent variance estimation for mechanism-level causal estimands under stochastic outcome-generating environments.

The remainder of the paper is organised as follows.
Section~\ref{sec:setup} develops the RPO framework, introduces the formal assumptions, and establishes identification of the estimand.
Section~\ref{sec:asymptotic} studies consistent estimation and asymptotic normality under local dependence.
Section~\ref{sec:variance} develops feasible variance estimation.
Section~\ref{sec:simulation} presents simulation evidence.
Section~\ref{sec:conclusion} concludes.

\section{Random Potential Outcomes and Mechanism-Level Estimands}\label{sec:setup}

We begin by considering a setting with $n$ units, where
\begin{assumption}[The Stochastic Setting]\label{as:setting}
each unit's potential outcome admits a representation as a general measurable mapping
$y_i(z, x_i, \epsilon_i)$
with the following arguments: $z=(z_1,\dots,z_n)$ denoting the vector of treatment assignments;
$x_i=x_i(\omega)$ representing the possibly observed covariates for unit $i$;
$\epsilon_i=\epsilon_i(\omega)$ denoting the idiosyncratic error or unobserved heterogeneity for unit $i$.
Here, $\omega \in \Omega$ is a latent random element defined on a common probability space $(\Omega,\mathcal{F}_\omega,P)$ shared by all units.
Furthermore, we define a measurable mapping from the latent space $\Omega$ to variables $(x_i, \epsilon_i)$
and represent the potential outcome via the composition
\begin{equation}\label{eq:potential-outcome}
    \tilde{y}_i(z,\omega) = y_i(z, x_i(\omega), \epsilon_i(\omega)),
\end{equation}
since $x_i$ and $\epsilon_i$ are measurable functions of $\omega$.
\end{assumption}
Assumption~\ref{as:setting} extends the classical FPO framework by modelling potential outcomes as random elements indexed by a latent stochastic environment $\omega$.
The resulting framework defines mechanism-level treatment-effect functionals on a model space indexed jointly by treatment assignments and latent environments.
Although $\omega$ is not unit-specific, its structure may include subcomponents that affect different units differently, thereby allowing for unit-level heterogeneity and flexible dependence structures in the potential outcomes.

The potential outcome for unit $i$ depends on the entire treatment vector $z$, so that interference (or spillover effects) is permitted. Notably, one may also consider the special case of no spillover effects, whereby
$y_i(z, x_i, \epsilon_i)=y_i(z_i, x_i, \epsilon_i)$,
which corresponds to a version of the Stable Unit Treatment Value Assumption (SUTVA) incorporating consistency, as introduced by \cite{rubin1980comment}.
In the present work we do not impose the no-spillover condition a priori; rather, we allow for general interference and accommodate its presence within the proposed framework.

In practice, it often suffices to allow heterogeneity across units to arise through $(x_i, \epsilon_i)$, making it unnecessary to index the structural function by $i$.
In such cases, one may work with a common measurable function $y$, while still inducing heterogeneous potential outcomes via the mappings $\tilde{y}_i(z, \omega) = y(z, x_i, \epsilon_i)$.

The structural mapping \eqref{eq:potential-outcome} is the primary object of interest in what follows. Following randomisation principles, we assume that
\begin{assumption}[Randomisation]\label{as:randomisation}
the treatment assignment vector $z$ is drawn from a known randomisation distribution that is independent of the outcome-generating process given the latent variable $\omega$.
\end{assumption}
Assumption~\ref{as:randomisation} ensures that the design distribution remains valid conditional on $\omega$, which permits identification through the randomisation design.

The framework permits both randomised and observational assignment mechanisms, depending on the dependence structure between $z$ and $\omega$.
In the randomised setting considered throughout this paper, the design ensures independence between treatment assignment and latent outcome variation.
In all cases, the framework models potential outcomes as random rather than fixed.

To define the model space used below, we first introduce the corresponding function spaces.
Let $L^p(\mathcal{Z} \times \Omega)$ denote the space of all measurable functions $u: \mathcal{Z} \times \Omega \to \mathbb{R}$,
satisfying $\mathbb{E}[|u(z,\omega)|^p]<\infty$, for some integer $p \geq 1$.
For notational convenience, we henceforth write $L^p$ in place of $L^p(\mathcal{Z} \times \Omega)$,
whenever the meaning is clear from context.
Furthermore, we denote the $p$-norm of $u \in L^p$ by
$\|u\|_p := \left( \int_{\mathcal{Z}\times\Omega} \left| u(z, \omega) \right|^p \mu(\dif z) P(\dif \omega) \right)^{1/p} = \left( \mathbb{E} \left| u(z, \omega) \right|^p \right)^{1/p}$,
where $\mu$ denotes the probability measure for randomisation.
Under Assumption~\ref{as:randomisation}, the joint law of $(z,\omega)$ factorises as $\mu(\dif z) P(\dif \omega)$, which permits rewriting the $L^p$-norm as an expectation.
We write $\|u\| := \|u\|_2$ for brevity if $u \in L^2$.

We model each unit's potential outcome function, via the structural mapping \eqref{eq:potential-outcome}, as an element of a model space $\mathcal{M}_i$,
which is a subspace of $L^2$.
Formally, we write
\begin{assumption}[Model Space]\label{as:mspace}
$\tilde{y}_i(z, \omega) \in \mathcal{M}_i \subset L^2(\mathcal{Z} \times \Omega)$.
\end{assumption}
The model space $\mathcal M_i$ incorporates both assignment randomness and outcome-level randomness through the joint variable $(z,\omega)$.

The FPO assumption can be viewed as a special case of our model,
corresponding to the degenerate setting where $\Omega = \{\omega_0\}$ for some $\omega_0$.
In this case, the latent variable is effectively fixed, and the potential outcome function
becomes a deterministic function of $z$, recovering the FPO framework.

\comm{about the inner product}

We equip $\mathcal{M}_i$ with the inner product
\begin{equation} \label{eq:inner-product}
\langle u,v \rangle := \mathbb{E}_\omega\Bigl[\mathbb{E}_{z}\bigl[u(z,\omega)\,v(z,\omega)\bigr]\Bigr] 
=\mathbb{E}\bigl[u(z,\omega)\,v(z,\omega)\bigr],
\end{equation}
where the second equality follows from the law of iterated expectations.
The associated norm is given by $\|u\| = \sqrt{\langle u, u \rangle}$, which coincides with the standard $L^2$-norm.
In the degenerate case where $\Omega = \{\omega_0\}$, the inner product reduces to
\begin{equation} \label{eq:inner-product0}
\langle u,v \rangle^0 = \mathbb{E}_{z}\bigl[u(z,\omega_0)\,v(z,\omega_0)\bigr],
\end{equation}
which coincides with the inner product used by \citet{harshaw2022riesz}.

Whenever necessary, we implicitly work with the completion of $\mathcal{M}_i$ under the inner product defined in \eqref{eq:inner-product}.
Let $\mathcal{M}_i'$ denote the space of continuous linear functionals on $\mathcal{M}_i$.
\begin{assumption}[Dual Representability of the Treatment Effect] \label{as:dual}
The treatment effect $\theta_i$ for each unit $i$ can be represented as a continuous linear functional on the model space of potential outcomes, \emph{viz.},
$\theta_i : \mathcal{M}_i \to \mathbb{R}$ with $\theta_i \in \mathcal{M}_i'$.
\end{assumption}
This assumption formalises treatment effects as continuous linear functionals on the model space of potential outcomes.
Such a formulation yields an identification construction in which the treatment effect is represented through an inner product with a representer function.
The same construction applies across both FPO and RPO settings.

Assumption~\ref{as:dual} excludes pointwise evaluation maps. In an $L^2$ model space, evaluation at a single point is not well-defined on equivalence classes and, even when viewed on a representative function space, is not continuous with respect to the $L^2$ norm.

In all cases, $\theta_i$ is a predetermined linear functional on $\mathcal{M}_i$, fixed throughout the analysis and not learned from data.
We illustrate admissible choices of $\theta_i$ with the following examples of continuous linear functionals.

\begin{example}{Expected treatment effect for binary treatment}
\begin{equation}\label{eq:example1}
\theta_i(u)
\;=\;
\mathbb{E}_\omega\bigl[u(z^{(1)},\omega) - u(z^{(0)},\omega)\bigr]
\end{equation}
representing the expected outcome difference from shifting treatment assignment from \(z^{(0)}\) to \(z^{(1)}\).
\end{example}

\begin{example}{Expected treatment effect for treatment intervals or regions}
\begin{equation}\label{eq:example2}
\theta_i(u)
\;=\;
\mathbb{E}_\omega\Bigl[\mathbb{E}_{z \in A}\bigl[u(z,\omega)\bigr]
- \mathbb{E}_{z \in B}\bigl[u(z,\omega)\bigr]\Bigr],
\end{equation}
where $A, B \subset \mathcal{Z}$ are measurable sets with $A \cap B = \emptyset$,
representing the expected outcome difference between treatment groups $A$ and $B$.
\end{example}

\begin{example}{Expected partial derivative}
\begin{equation}\label{eq:example3}
\theta_i(u) = \mathbb{E}_\omega \! \left[ \partial_z u(z_0, \omega) \right],
\end{equation}
at some point $z_0 \in \mathcal{Z}$,
representing the expected marginal effect at $z_0$.
\end{example}

\begin{example}{Test function weighted expected partial derivative}
\begin{equation}\label{eq:example4}
\theta_i(u)
=\mathbb E_\omega\!\left[\int_{\mathcal Z}\partial_z u(z,\omega)\,\varphi(z)\,\dif z\right],
\end{equation}
where $\varphi$ is a predetermined test function that weights the partial derivative over
$z\in\mathcal Z$.
In particular, taking $\varphi=\mathbf 1_A$ for a measurable set $A\subset\mathcal Z$ yields
\begin{equation}\label{eq:exam4special}
\theta_i(u)
=\mathbb E_\omega\!\left[\int_{A}\partial_z u(z,\omega)\,\dif z\right],
\end{equation}
representing the expected marginal effect averaged over region $A$.
\end{example}

The following propositions establish sufficient conditions under which the preceding examples define linear and continuous functionals.
\begin{proposition}\label{pp:example12}
Suppose that $\mu(A)>0$ and $\mu(B) > 0$. The functional \eqref{eq:example2} is linear and continuous on $\mathcal{M}_i$.
\end{proposition}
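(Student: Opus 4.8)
The plan is to exhibit $\theta_i$ as the $L^2$-inner product against an explicit, $\omega$-free Riesz representer and then invoke Cauchy--Schwarz. Linearity is immediate: since $\mu(A),\mu(B)>0$, the conditional expectations $\mathbb{E}_{z\in A}[u(z,\omega)]=\mu(A)^{-1}\int_A u(z,\omega)\,\mu(\dif z)$ and $\mathbb{E}_{z\in B}[u(z,\omega)]=\mu(B)^{-1}\int_B u(z,\omega)\,\mu(\dif z)$ are well-defined normalised integrals against $\mu$, hence linear in $u$, and the outer expectation over $\omega$ preserves linearity.

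For continuity, set $r(z,\omega):=\mathbf{1}_A(z)/\mu(A)-\mathbf{1}_B(z)/\mu(B)$, a bounded function depending on $z$ only. First I would verify $r\in L^2(\mathcal{Z}\times\Omega)$: because $A\cap B=\emptyset$ the cross term in $r^2$ vanishes, so $\|r\|^2=\int_{\mathcal Z} r(z)^2\,\mu(\dif z)=1/\mu(A)+1/\mu(B)<\infty$. Next, using Assumption~\ref{as:randomisation} so that the joint law of $(z,\omega)$ factorises as $\mu(\dif z)P(\dif\omega)$, and applying Fubini--Tonelli (legitimate since $u\in L^2\subset L^1$ against the probability measure $\mu(\dif z)P(\dif\omega)$ and $r$ is bounded, so $ur\in L^1$), I would interchange the order of integration to obtain
\[
\langle u,r\rangle=\mathbb{E}\bigl[u(z,\omega)\,r(z)\bigr]
=\mathbb{E}_\omega\Bigl[\tfrac{1}{\mu(A)}\textstyle\int_A u(z,\omega)\,\mu(\dif z)-\tfrac{1}{\mu(B)}\int_B u(z,\omega)\,\mu(\dif z)\Bigr]
=\theta_i(u).
\]
Hence $\theta_i(u)=\langle u,r\rangle$ for every $u\in\mathcal{M}_i$, and Cauchy--Schwarz yields $|\theta_i(u)|\le\|r\|\,\|u\|=\sqrt{1/\mu(A)+1/\mu(B)}\,\|u\|$, which establishes boundedness, and therefore continuity, with operator norm at most $\sqrt{1/\mu(A)+1/\mu(B)}$.

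I do not anticipate a serious obstacle: the only point requiring care is the measure-theoretic bookkeeping that licenses writing $\theta_i$ as a genuine $L^2$-inner product, namely confirming that for $P$-a.e.\ $\omega$ the section $u(\cdot,\omega)$ lies in $L^1(\mathcal{Z},\mu)$ so the inner conditional expectations are defined, and that the resulting function of $\omega$ is itself integrable; both follow from $u\in L^2(\mathcal{Z}\times\Omega)$, finiteness of $\mu$, and boundedness of $r$. Once $\theta_i=\langle\cdot,r\rangle$ is established, continuity is automatic and, as a by-product, the representer $r$ is already identified in closed form, which is convenient for the subsequent Riesz-representation constructions. I would also remark that disjointness of $A$ and $B$ is used only to obtain the clean expression for $\|r\|$; continuity would survive $\mu(A\cap B)>0$, with the representer norm merely acquiring the finite correction $-2\mu(A\cap B)/(\mu(A)\mu(B))$.
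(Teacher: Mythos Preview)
Your proposal is correct and follows essentially the same approach as the paper: both exhibit the explicit representer $r(z)=\mathbf{1}_A(z)/\mu(A)-\mathbf{1}_B(z)/\mu(B)$, verify $\|r\|^2=1/\mu(A)+1/\mu(B)<\infty$, and conclude continuity via Cauchy--Schwarz. Your version is slightly more careful about the measure-theoretic bookkeeping (Fubini, section integrability) and adds a useful remark on the role of disjointness, but the argument is otherwise identical.
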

Note that Example~\eqref{eq:example1} is a special case of Example~\eqref{eq:example2}, obtained by setting $A = \{z^{(1)}\}$ and $B = \{z^{(0)}\}$.
The condition $\mu(A) > 0$ and $\mu(B) > 0$ is commonly referred to as the \emph{positivity} or \emph{overlap} assumption in the causal inference literature.
It ensures that both treatment groups are represented in the randomisation distribution, which guarantees that the corresponding estimand is well-defined and estimable.
In our case, the positivity assumption plays a crucial role in verifying that the functional is both linear and continuous.

\begin{proposition}\label{pp:point_derivative}
Let $\mathcal Z\subset\mathbb R^d$ be open and bounded, fix $z_0\in\mathcal Z$, and let $s>1+\frac d2$.
Assume
$u\in L^2(\Omega;H^s(\mathcal Z))$\footnote{$L^2(\Omega; H^{s}(\mathcal{Z}))$ denotes the Bochner space with values in $H^{s}(\mathcal{Z})$, and $H^{s}(\mathcal{Z}) = W^{s,2}(\mathcal{Z})$ is the Sobolev space of order $s$.}.
Then the functional \eqref{eq:example3} is linear and continuous on $L^2(\Omega;H^s(\mathcal Z))$.
\end{proposition}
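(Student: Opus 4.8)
The plan is to reduce the statement to the classical Sobolev embedding $H^s(\mathcal Z)\hookrightarrow C^1$ in a neighbourhood of the interior point $z_0$, and then to lift the resulting bound to the Bochner space $L^2(\Omega;H^s(\mathcal Z))$ using standard properties of the Bochner integral. Linearity of $\theta_i$ is immediate from linearity of the derivative and of the expectation, so the whole content is continuity.

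The first step is to show that the pointwise evaluation $\delta: v\mapsto \partial_z v(z_0)$ of a fixed first-order partial derivative is a bounded linear functional on $H^s(\mathcal Z)$. Since $\mathcal Z$ is open I fix $r>0$ with $\overline{B(z_0,2r)}\subset\mathcal Z$ and pick a cutoff $\chi\in C_c^\infty(\mathcal Z)$ with $\chi\equiv 1$ on $B(z_0,r)$. For $v\in H^s(\mathcal Z)$, the product $\chi v$ extends by zero to an element of $H^s(\mathbb R^d)$ with $\norm{\chi v}_{H^s(\mathbb R^d)}\le C_\chi\norm{v}_{H^s(\mathcal Z)}$. The hypothesis $s>1+\tfrac d2$ is exactly what makes the weight $\xi\mapsto (1+\abs{\xi}^2)^{(1-s)/2}$ lie in $L^2(\mathbb R^d)$, so by Cauchy--Schwarz both $\widehat{\chi v}$ and $\xi\mapsto \xi_k\widehat{\chi v}(\xi)$ belong to $L^1(\mathbb R^d)$; Fourier inversion then gives that $\chi v$ has a $C^1_b(\mathbb R^d)$ representative with $\norm{\chi v}_{C^1(\mathbb R^d)}\le C\norm{\chi v}_{H^s(\mathbb R^d)}$. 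Because $\chi\equiv 1$ near $z_0$, the value $\partial_z v(z_0)=\partial_z(\chi v)(z_0)$ is well defined, and $\abs{\delta(v)}\le \norm{\chi v}_{C^1}\le C'\norm{v}_{H^s(\mathcal Z)}$, so $\delta\in H^s(\mathcal Z)'$.

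The second step lifts this to the Bochner space. Let $u\in L^2(\Omega;H^s(\mathcal Z))$. By strong measurability, $\omega\mapsto u(\cdot,\omega)$ is an a.e.\ limit of simple $H^s$-valued maps; composing with the continuous functional $\delta$ shows $\omega\mapsto \delta(u(\cdot,\omega))=\partial_z u(z_0,\omega)$ is measurable, and $\abs{\partial_z u(z_0,\omega)}\le \norm{\delta}\,\norm{u(\cdot,\omega)}_{H^s}$, which lies in $L^2(\Omega)\subset L^1(\Omega)$ since $P$ is a probability measure. Hence $\theta_i(u)=\mathbb E_\omega[\partial_z u(z_0,\omega)]$ is finite, and because $u$ is Bochner integrable while $\delta$ is a bounded linear functional, $\theta_i(u)=\delta\!\left(\mathbb E_\omega[u(\cdot,\omega)]\right)$. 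Therefore
\[
\abs{\theta_i(u)}\le \norm{\delta}\,\norm{\mathbb E_\omega[u(\cdot,\omega)]}_{H^s}\le \norm{\delta}\,\mathbb E_\omega\norm{u(\cdot,\omega)}_{H^s}\le \norm{\delta}\,\bigl(\mathbb E_\omega\norm{u(\cdot,\omega)}_{H^s}^2\bigr)^{1/2}=\norm{\delta}\,\norm{u}_{L^2(\Omega;H^s)},
\]
using Jensen's inequality for the Bochner norm and Cauchy--Schwarz on the probability space. This is the desired continuity, with $\norm{\theta_i}\le\norm{\delta}$.

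The main obstacle is the first step: "open and bounded" alone does not yield a global embedding $H^s(\mathcal Z)\hookrightarrow C^1(\overline{\mathcal Z})$, which would require boundary regularity such as a cone or extension property. This is circumvented by observing that $\theta_i$ only probes $u$ near the interior point $z_0$, so a compactly supported cutoff together with the clean embedding $H^s(\mathbb R^d)\hookrightarrow C^1_b(\mathbb R^d)$ suffices and no regularity of $\partial\mathcal Z$ is needed. The remaining ingredients — measurability of $\omega\mapsto\partial_z u(z_0,\omega)$, the interchange of $\delta$ with the Bochner integral, and the final norm chain — are routine.
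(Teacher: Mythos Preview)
Your argument is correct and follows the same two-step scheme as the paper: (i) Sobolev embedding makes $v\mapsto\partial_z v(z_0)$ a bounded functional on $H^s$, and (ii) Cauchy--Schwarz on the probability space lifts this to $L^2(\Omega;H^s)$. The paper simply invokes the global embedding $H^s(\mathcal Z)\hookrightarrow C^1(\overline{\mathcal Z})$ and then writes down the same chain $\abs{\theta_i(u)}\le C\,\mathbb E_\omega\norm{u(\cdot,\omega)}_{H^s}\le C\norm{u}_{L^2(\Omega;H^s)}$.

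The one genuine difference is the point you flag at the end: the paper applies the embedding on $\mathcal Z$ directly, which strictly speaking requires an extension property (e.g.\ Lipschitz boundary) not assumed in the statement. Your localisation via a cutoff $\chi\in C_c^\infty(\mathcal Z)$ and the clean embedding $H^s(\mathbb R^d)\hookrightarrow C^1_b(\mathbb R^d)$ sidesteps this, since $\theta_i$ only probes $u$ at the interior point $z_0$. So your version is the more careful one under the hypotheses as stated; the paper's proof implicitly assumes enough boundary regularity for the global embedding to hold. The additional Bochner-integral bookkeeping you include (measurability of $\omega\mapsto\partial_z u(z_0,\omega)$, interchange of $\delta$ with $\mathbb E_\omega$) is routine but welcome.
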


This result implies that, for the functional to satisfy the necessary continuity condition,
the model space must be further restricted to
$\mathcal{M}_i \subset L^2(\Omega; H^s(\mathcal{Z})) \subset L^2(\mathcal Z\times\Omega)$.
Within this space, functions admit $C^1$ representatives in the $z$-variable almost surely,
so that pointwise evaluation of derivatives is well-defined and continuous.

\begin{proposition}\label{pp:region_integral}
Let $\mathcal Z\subset\mathbb R^d$ be open and bounded and let $\varphi\in L^2(\mathcal Z)$.
Assume $u\in L^2(\Omega;H^1(\mathcal Z))$.
Then the functional \eqref{eq:example4} is linear and continuous on $L^2(\Omega;H^1(\mathcal Z))$.
\end{proposition}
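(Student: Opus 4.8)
The plan is to check linearity directly and then establish continuity through a two-stage Cauchy--Schwarz estimate, first integrating in the spatial variable $z$ against the fixed weight $\varphi$ and then integrating in the latent variable $\omega$ over the probability space $(\Omega,\mathcal F_\omega,P)$. Linearity of \eqref{eq:example4} is immediate, since weak differentiation $u\mapsto\partial_z u$ is linear on $H^1(\mathcal Z)$, integration against $\varphi$ is linear, and $\mathbb E_\omega[\cdot]$ is linear; their composition is therefore linear on $L^2(\Omega;H^1(\mathcal Z))$. Here $\partial_z$ denotes a first-order weak derivative; whether it is read as a fixed partial derivative or a directional derivative is immaterial, as in every case $\norm{\partial_z v}_{L^2(\mathcal Z)}\le\norm{v}_{H^1(\mathcal Z)}$ for $v\in H^1(\mathcal Z)$.

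For continuity, fix $\omega$. Since $u\in L^2(\Omega;H^1(\mathcal Z))$, the section $u(\cdot,\omega)$ belongs to $H^1(\mathcal Z)$ for $P$-almost every $\omega$, so $\partial_z u(\cdot,\omega)\in L^2(\mathcal Z)$ with $\norm{\partial_z u(\cdot,\omega)}_{L^2(\mathcal Z)}\le\norm{u(\cdot,\omega)}_{H^1(\mathcal Z)}$. Cauchy--Schwarz in $z$ gives
\[
\abs{\int_{\mathcal Z}\partial_z u(z,\omega)\,\varphi(z)\,\dif z}
\le\norm{\varphi}_{L^2(\mathcal Z)}\,\norm{\partial_z u(\cdot,\omega)}_{L^2(\mathcal Z)}
\le\norm{\varphi}_{L^2(\mathcal Z)}\,\norm{u(\cdot,\omega)}_{H^1(\mathcal Z)}.
\]
Before averaging over $\omega$, one checks that $\omega\mapsto\int_{\mathcal Z}\partial_z u(z,\omega)\varphi(z)\dif z$ is $\mathcal F_\omega$-measurable: the map $v\mapsto\int_{\mathcal Z}\partial_z v\,\varphi\,\dif z$ is a bounded linear functional on $H^1(\mathcal Z)$ by the displayed inequality, and $u$ is Bochner measurable as an $H^1(\mathcal Z)$-valued map, so the composition is measurable; the pointwise bound then yields integrability. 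Taking $\mathbb E_\omega$ and using Jensen's inequality (equivalently, Cauchy--Schwarz with the constant function $1$ and $P(\Omega)=1$),
\[
\abs{\theta_i(u)}
\le\mathbb E_\omega\!\big[\norm{\varphi}_{L^2(\mathcal Z)}\norm{u(\cdot,\omega)}_{H^1(\mathcal Z)}\big]
\le\norm{\varphi}_{L^2(\mathcal Z)}\big(\mathbb E_\omega[\norm{u(\cdot,\omega)}_{H^1(\mathcal Z)}^2]\big)^{1/2}
=\norm{\varphi}_{L^2(\mathcal Z)}\,\norm{u}_{L^2(\Omega;H^1(\mathcal Z))}.
\]
Thus $\theta_i$ is bounded with operator norm at most $\norm{\varphi}_{L^2(\mathcal Z)}$, hence continuous on $L^2(\Omega;H^1(\mathcal Z))$.

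The computation is routine; the only step requiring genuine care is the measurability of the $\omega$-section and the attendant identification of the iterated integral $\mathbb E_\omega[\int_{\mathcal Z}\cdots\dif z]$ with the action of a bounded functional on the Bochner space, which is disposed of exactly as above. It is worth contrasting this with Proposition~\ref{pp:point_derivative}: pointwise evaluation of a derivative there requires the embedding $H^s(\mathcal Z)\hookrightarrow C^1(\overline{\mathcal Z})$, forcing $s>1+\tfrac d2$, whereas integrating $\partial_z u$ against an $L^2$ weight is already a continuous operation on $H^1(\mathcal Z)$, so here no Sobolev embedding and no regularity beyond a single weak derivative is needed---only $u\in L^2(\Omega;H^1(\mathcal Z))$ and $\varphi\in L^2(\mathcal Z)$.
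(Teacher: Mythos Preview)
Your proof is correct and follows essentially the same route as the paper: linearity is immediate, and continuity is obtained by applying Cauchy--Schwarz first in $z$ to bound $\bigl|\int_{\mathcal Z}\partial_z u\,\varphi\,\dif z\bigr|$ by $\norm{\varphi}_{L^2(\mathcal Z)}\norm{\partial_z u(\cdot,\omega)}_{L^2(\mathcal Z)}$, and then in $\omega$ to reach $\norm{\varphi}_{L^2(\mathcal Z)}\norm{u}_{L^2(\Omega;H^1(\mathcal Z))}$. Your additional remarks on measurability of the $\omega$-section and the contrast with Proposition~\ref{pp:point_derivative} are sound but go slightly beyond what the paper records.
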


This proposition is deliberately permissive in that the test function $\varphi$ is only required to belong to $L^2(\mathcal Z)$ and is therefore not restricted to indicator functions such as $\mathbf 1_A$ in \eqref{eq:exam4special}. More general weights, including discontinuous or piecewise-defined functions, are admissible. In contrast, Dirac-type test functions corresponding to point evaluation are not elements of $L^2(\mathcal Z)$ and hence fall outside the scope of this result. Such pointwise functionals require additional regularity and are instead covered by Proposition~\ref{pp:point_derivative}.

These regularity conditions ensure that the corresponding treatment-effect functionals remain well-defined and continuous on the model space.

\comm{Riesz representation theorem}

\begin{theorem}[Riesz Representation Theorem in \(\mathcal{M}_i\)] \label{th:Riesz}
Under Assumptions~\ref{as:setting}, \ref{as:mspace}, and \ref{as:dual},  
for every \( \theta_i \in \mathcal{M}_i' \),  
there exists a unique element \( \psi_i \in \mathcal{M}_i \) such that
$\theta_i(u) = \langle u, \psi_i \rangle$ for all $u \in \mathcal{M}_i$.
\end{theorem}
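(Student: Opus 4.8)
The statement is precisely the Hilbert-space Riesz representation theorem, so the plan is to reduce it to the classical result. The only genuine subtlety is that $\mathcal{M}_i$, equipped with the inner product \eqref{eq:inner-product}, need not be complete, which the paper has already anticipated by agreeing to pass to the closure; the argument below makes that passage explicit.

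First I would work in the completion. Since $\mathcal{M}_i$ is an inner-product subspace of $L^2(\mathcal{Z}\times\Omega)$ (Assumptions~\ref{as:setting} and \ref{as:mspace}), its closure $\overline{\mathcal{M}_i}$ in $L^2$ is a closed subspace of a complete space, hence itself a Hilbert space, and $\mathcal{M}_i$ is dense in it. By Assumption~\ref{as:dual}, $\theta_i$ is bounded, $\abs{\theta_i(u)}\le C\norm{u}$ on $\mathcal{M}_i$, so by the bounded-linear-extension theorem it extends uniquely to a bounded linear functional $\bar\theta_i$ on $\overline{\mathcal{M}_i}$ with the same norm. It therefore suffices to represent $\bar\theta_i$ on the Hilbert space $\overline{\mathcal{M}_i}$ and then restrict to $\mathcal{M}_i$.

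Next, existence on $\overline{\mathcal{M}_i}$. If $\bar\theta_i\equiv 0$, take $\psi_i=0$. Otherwise let $N=\ker\bar\theta_i$, a proper closed subspace by continuity. By the orthogonal projection theorem — here completeness is essential — one has $\overline{\mathcal{M}_i}=N\oplus N^\perp$ with $N^\perp\neq\{0\}$; pick $v_0\in N^\perp$ with $\bar\theta_i(v_0)\neq 0$, normalised so that $\norm{v_0}=1$. For arbitrary $u$, the vector $u-\frac{\bar\theta_i(u)}{\bar\theta_i(v_0)}v_0$ lies in $N$ and is hence orthogonal to $v_0$; expanding $\langle u,v_0\rangle$ and rearranging gives $\bar\theta_i(u)=\bigl\langle u,\,\bar\theta_i(v_0)v_0\bigr\rangle$ (no conjugation, as the space is real). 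Thus $\psi_i:=\bar\theta_i(v_0)\,v_0\in\overline{\mathcal{M}_i}$ represents $\bar\theta_i$, and in particular $\theta_i(u)=\langle u,\psi_i\rangle$ for all $u\in\mathcal{M}_i$, which is the claimed conclusion under the paper's convention of identifying $\mathcal{M}_i$ with its closure. Uniqueness is immediate: if $\psi,\psi'$ both represent $\theta_i$ then $\langle u,\psi-\psi'\rangle=0$ for all $u$, and taking $u=\psi-\psi'$ forces $\norm{\psi-\psi'}^2=0$.

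The only point worth flagging as a potential obstacle is this completeness issue: the projection step is false on a bare inner-product space, so the reduction to $\overline{\mathcal{M}_i}$ cannot be skipped. Assumptions~\ref{as:setting} and \ref{as:mspace} do the essential work of placing $\mathcal{M}_i$ inside the honest Hilbert space $L^2(\mathcal{Z}\times\Omega)$, so that a completion within a well-defined ambient space is available and the extended functional has an unambiguous meaning.
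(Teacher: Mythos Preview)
Your argument is correct. The paper's own proof is a one-line appeal to the classical Riesz representation theorem and nothing more; you instead unpack the full argument, including the bounded-linear-extension to the closure $\overline{\mathcal{M}_i}$ and the orthogonal-decomposition construction of the representer. Both are valid: the paper relies on its earlier remark that ``whenever necessary, we implicitly work with the closure of $\mathcal{M}_i$'' to justify the direct citation, while your version makes that passage explicit and self-contained. What your route buys is that the reader need not track down which formulation of Riesz is being invoked or verify that the completion convention suffices; what the paper's route buys is brevity, since the statement really is a direct specialisation once the closure convention is accepted.
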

Theorem~\ref{th:Riesz} establishes identification of the treatment-effect functional through the corresponding \emph{Riesz representer} $\psi_i$.
This establishes identification of the mechanism-level estimand on the model space.
Whether the estimand can be consistently recovered from a single realised experiment is a separate inferential question, addressed in Section~\ref{sec:asymptotic}.

\comm{Riesz estimator}

By constructing the Riesz representer \(\psi_i\) corresponding to the treatment effect functional \(\theta_i\), we only consider those treatment effects that are representable in the dual space \(\mathcal{M}_i'\).
Consequently, we estimate the treatment effect
\begin{equation}\label{eq:treatment-for-y}
    \theta_i(\tilde{y}_i) = \langle \tilde{y}_i, \psi_i \rangle = \mathbb{E}\Bigl[\tilde{y}_i(z,\omega)\,\psi_i(z,\omega)\Bigr],
\end{equation}
by using the corresponding \emph{Riesz estimator}
\begin{equation}\label{eq:Riesz-estimator-i}
    \hat{\theta}_i(z, \omega) = \tilde{y}_i(z,\omega)\, \psi_i(z, \omega),
\end{equation}
as, in practice, we observe only one realisation of the outcome \(\tilde{y}_i(z,\omega)\) for certain $z$ and $\omega$.
It follows directly from the linearity of the Riesz representation and the joint expectation over $(z,\omega)$ that the estimator in \eqref{eq:Riesz-estimator-i} is unbiased for \eqref{eq:treatment-for-y}.
This unbiasedness holds with respect to the combined randomness of the assignment mechanism and the outcome-generating environment, and does not rely on conditioning on a realised potential outcome schedule.

In practice, the Riesz representer can be approximated using standard basis expansions and finite-dimensional projections under the known design; see Appendix~\ref{sec:estimation} for computational details.

\comm{the estimand and unbiasedness of the Riesz estimator}

\begin{definition}[Aggregate Estimand]\label{df:estimand}
The aggregate estimand considered in this paper is defined as a weighted average of individual treatment effects
\begin{equation}\label{eq:finite-estimand}
\tau_n = \sum_{i=1}^n \nu_{ni} \theta_i(\tilde{y}_i),
\end{equation}
where the weights \( \nu_{ni} \in \mathbb{R} \) satisfy \( \nu_{ni} \geq 0 \).
\end{definition}
The canonical RPO estimand $\tau_{\mathrm{RPO}}$ in \eqref{eq:simpleRPOtau} arises as a special case under particular choices of the weights.
More generally, the formulation accommodates arbitrary fixed weighting schemes $\nu_{ni}$.

The estimand $\tau_n$ is defined on the joint law of $(z,\omega)$ and remains mathematically well-defined regardless of whether treatment assignment and latent outcome variation are independent.
Different assumptions on the joint law therefore induce different causal or associational interpretations.

The classical FPO formulation is recovered as a degenerate special case,
\begin{equation}\label{eq:finite-estimand-harshaw}
\tau_n^0
\;=\;
\frac{1}{n} \sum_{i=1}^n \langle \tilde{y}_i, \psi_i \rangle^0
\;=\;
\frac{1}{n} \sum_{i=1}^n \mathbb{E}_z\Bigl[\tilde{y}_i(z,\omega_0)\,\psi_i(z,\omega_0)\Bigr],
\end{equation}
where the latent environment $\omega_0$ is degenerate and the treatment effect is deterministic.
In this setting, the average is equally weighted across units.
The canonical two-treatment average treatment effect $\tau_{\mathrm{FPO}}$ in \eqref{eq:simpleFPOtau} arises as a further special case under uniform weights.

\comm{weights}

A common choice for the weights is the uniform weighting $\nu_{ni} = 1/n$.
More generally, this formulation accommodates subgroup-specific or covariate-adjusted averages, depending on the structure of the weights.
We impose the following assumption to control the magnitude of the weights
\(\nu_{ni}\).
\begin{assumption}[Uniform Weight Upper Bound]\label{as:weights}
There exists a constant \( \bar{\nu} > 0 \) such that
$n \nu_{ni} \le \bar{\nu}$ for all $i$ and all $n \in \mathbb{N}$.
\end{assumption}
It guarantees that each weight is bounded by a constant multiple of \(1/n\),
uniformly in \(i\), without requiring the individual sequences
\(\{n\nu_{ni}\}_{n\ge i}\) to converge.
Consequently, no finite subset of indices can receive an asymptotically
dominant share of the total weight.
Note that Assumption~\ref{as:weights} implies $\sup_{n\to\infty} \sum_{i=1}^n \nu_{ni} < \infty$ but not necessarily one, which is quite general.
In the continuous analogue this corresponds to the integrability of the weight function.

Given the unit-level Riesz estimators \eqref{eq:Riesz-estimator-i} and the weights,
\begin{definition}[Aggregate Riesz Estimator]
we define the corresponding aggregate Riesz estimator for the finite-sample estimand 
\begin{equation}\label{eq:finite-estimator}
\hat{\tau}_n(z, \omega) = \sum_{i=1}^n \nu_{ni} \hat{\theta}_i(z, \omega).
\end{equation}
\end{definition}
Thus we have the unbiasedness of the Riesz estimator in the following.

\begin{theorem}[Unbiasedness of the Riesz Estimator]\label{th:unbiasedness}
The Riesz estimator defined in \eqref{eq:Riesz-estimator-i} is unbiased for the individual treatment effect,
\emph{i.e.},
$\mathbb{E}\bigl[\hat{\theta}_i(z, \omega)\bigr] = \theta_i(\tilde{y}_i)$.
Consequently, the aggregate Riesz estimator in \eqref{eq:finite-estimator} is unbiased for the finite-sample estimand \eqref{eq:finite-estimand}, \emph{i.e.},
$\mathbb{E}\Bigl[\hat{\tau}_n(z, \omega)\Bigr] = \tau_n$.
\end{theorem}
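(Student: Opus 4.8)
The plan is that both assertions follow essentially immediately once the pieces already in place are combined, and the only point requiring genuine care is checking integrability. First I would record that by Assumption~\ref{as:mspace} we have $\tilde y_i \in \mathcal M_i \subset L^2(\mathcal Z\times\Omega)$, and by Theorem~\ref{th:Riesz} the Riesz representer $\psi_i \in \mathcal M_i \subset L^2(\mathcal Z\times\Omega)$ exists and is unique. Since under Assumption~\ref{as:randomisation} the joint law of $(z,\omega)$ factorises as $\mu(\dif z)\,P(\dif\omega)$, the Cauchy--Schwarz inequality yields $\mathbb E\bigl[\lvert \tilde y_i(z,\omega)\,\psi_i(z,\omega)\rvert\bigr] \le \norm{\tilde y_i}\,\norm{\psi_i} < \infty$, so $\hat\theta_i(z,\omega) = \tilde y_i(z,\omega)\,\psi_i(z,\omega)$ lies in $L^1$ and its expectation is well-defined.

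For the individual statement I would then take expectations over $(z,\omega)$ and chain three identities in order: the definition \eqref{eq:Riesz-estimator-i} of the estimator, the explicit form of the inner product \eqref{eq:inner-product}, and the Riesz identity of Theorem~\ref{th:Riesz},
\begin{equation*}
\mathbb E\bigl[\hat\theta_i(z,\omega)\bigr]
= \mathbb E\bigl[\tilde y_i(z,\omega)\,\psi_i(z,\omega)\bigr]
= \langle \tilde y_i,\, \psi_i\rangle
= \theta_i(\tilde y_i),
\end{equation*}
which is precisely \eqref{eq:treatment-for-y}. I would emphasise that this uses no conditioning on a realised potential-outcome schedule: the equality holds with respect to the combined randomness of the assignment mechanism and the latent environment $\omega$, which is the point of working with the joint space $L^2(\mathcal Z\times\Omega)$.

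For the aggregate statement, since the sum in \eqref{eq:finite-estimator} runs over the finite set $\{1,\dots,n\}$, expectation commutes with it and with the deterministic scalars $\nu_{ni}$, so that
\begin{equation*}
\mathbb E\bigl[\hat\tau_n(z,\omega)\bigr]
= \sum_{i=1}^n \nu_{ni}\,\mathbb E\bigl[\hat\theta_i(z,\omega)\bigr]
= \sum_{i=1}^n \nu_{ni}\,\theta_i(\tilde y_i)
= \tau_n,
\end{equation*}
the last equality being Definition~\ref{df:estimand}. I would also note in passing that neither the nonnegativity of $\nu_{ni}$ nor Assumption~\ref{as:weights} is needed here; those enter only the later asymptotic arguments.

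The main obstacle, to the extent there is one, is confined to the integrability step, which leans on both factors lying in $L^2$ and on the product form of the joint law under Assumption~\ref{as:randomisation}; everything else is a direct consequence of the Riesz representation and linearity of the expectation. The substantive difficulties specific to the stochastic setting --- controlling the randomness of the representer $\psi_i$ across units, and the operator-norm and joint $L^2$ modes of convergence --- do not appear in this unbiasedness statement and are deferred to the asymptotic analysis.
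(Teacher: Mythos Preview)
Your proof is correct and follows essentially the same route as the paper's: definition of $\hat\theta_i$, the inner-product identity, the Riesz representation, and linearity of expectation for the aggregate. The one notable difference is that you devote care to establishing integrability via Cauchy--Schwarz, whereas the paper's proof explicitly asserts that the equality ``does not require any integrability assumptions'' and ``holds even if $\tilde y_i(z,\omega)\,\psi_i(z,\omega)\notin L^1$''; your version is arguably the more careful one, since under Assumptions~\ref{as:mspace} and Theorem~\ref{th:Riesz} both factors are in $L^2$ and the product is automatically integrable, so the paper's caveat is moot in this setting.
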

In the stochastic setting, the Riesz representer $\psi_i$ depends on the latent environment $\omega$ and is therefore itself random.
In any realised experiment, estimation is necessarily conditional on a single realised environment.
The asymptotic theory developed below accounts jointly for randomness in both $\tilde y_i$ and $\psi_i$.

The preceding results establish identification and unbiasedness on the model space.
However, these properties alone do not guarantee consistent recovery of the estimand from a single realised experiment.
Additional probabilistic structure is required to connect cross-sectional aggregation to averaging over the latent stochastic environment.
The next section develops conditions under which such recoverability holds.

\section{Recoverability under Local Dependence}\label{sec:asymptotic}

Although the estimand is identified through the randomisation design, recoverability from a single realised experiment requires additional probabilistic structure linking cross-sectional averaging to averaging over latent stochastic environments.
The results below give a boundary between identifiable but unrecoverable estimands and identifiable and recoverable estimands.

The local dependence conditions introduced below induce an ergodic-type averaging structure under which cross-sectional aggregation asymptotically recovers the expectation-based estimand from a single realised experiment.
These conditions should therefore be interpreted as inferential regularity conditions for recoverability rather than as structural assumptions on treatment interactions or causal mechanisms.

Cross-sectional dependence is modelled through local dependency neighbourhoods following \citet{chen2004normal}.
The local dependence structure is imposed only on the observable pairs $(\tilde y_i,\psi_i)$ appearing in the estimator construction, rather than directly on the latent stochastic environment itself.

\begin{definition}[Dependency Neighbourhoods]\label{df:dep-neighbour}
For each unit \( i \), let \( \{ M_{ik} \}_{k \in I_i} \) be the collection of all subsets \( M_{ik} \subset \{1, 2, \dots, n\} \) such that the pair of random variables \( (\tilde{y}_i, \psi_i) \) is independent of the set
$\left\{ (\tilde{y}_j, \psi_j) : j \notin M_{ik} \right\}$.
The \emph{dependency neighbourhood} of unit \( i \) is defined as the intersection over all such sets,
$N_i := \bigcap_{k \in I_i} M_{ik}$.
\end{definition}

This definition follows the dependency-neighbourhood formulation of \citet[Definition 3.1]{ross2011fundamentals}.
We denote by \( |N_i| \) the number of units in the dependency neighbourhood of unit \( i \), referred to as the neighbourhood size for unit \( i \).
Let \( D_n = \max_{1 \leq i \leq n} |N_i| \) and \( d_n = n^{-1} \sum_{i=1}^{n} |N_i| \) denote, respectively, the maximum and average neighbourhood sizes across the sample.
These quantities play a central role in the asymptotic analysis below.

In practice, the specification of the dependency neighbourhoods \(N_i\) relies on substantive knowledge about the qualitative structure of dependence in the experimental environment rather than direct observation of the latent stochastic environment itself.
In many design-based experimental settings, such dependence structures are not arbitrary features inferred post hoc from observational data, but arise directly from the experimental design through clustering, spatial allocation, temporal ordering, production architecture, or controlled interaction structures.
Examples include classroom-based interventions, village-level randomisation, spatial field experiments, factory or production-line experiments, and large-scale A/B testing systems with partially known infrastructure interactions.

The local dependence assumptions considered here should therefore be interpreted as inferential regularity conditions induced or justified by the experimental design itself, rather than as unrestricted assumptions on a generic observational dependence structure.
The framework does not require modelling or observing the latent environment $\omega$ directly.
Instead, it formalises when a designed experiment contains sufficient distributed information across units to asymptotically recover mechanism-level quantities from a single realised experiment.


\begin{assumption}[Maximum Dependency Neighbourhood Size]\label{as:mns}
The maximum dependency neighbourhood size $D_n$ satisfies
$D_n = o(n^{d})$
for some $d > 0$.
\end{assumption}

\begin{assumption}[Average Dependency Neighbourhood Size]\label{as:ans}
The average dependency neighbourhood size $d_n$ satisfies either (a) $d_n = o(n)$ or (b) $\sup_{n \in \mathbb{N}} d_n < \infty$.
\end{assumption}
These assumptions regulate the growth of the dependency neighbourhoods and constitute the sparsity conditions used throughout the asymptotic analysis.

\comm{max-$p$ norm}

In addition, for a sequence of functions $u_1, \dots, u_n \in L^p$, where $p > 1$, we define the max-$p$ norm as
$\norm{u}_{\max,p}^n := \max_{1 \leq i \leq n} \norm{u_i}_p$.
The sequence $\norm{u}_{\max, p}^n$ is clearly non-decreasing in $n$,
and, as such, is bounded if and only if it converges as $n \to \infty$.

\comm{consistency}

\begin{assumption}[Boundedness of Norms]\label{as:bounded-norms}
For some $r \ge 1$, there exist real numbers \(p, q \in [r, \infty]\) satisfying $1/p + 1/q = 1/r$ such that
$\sup_{i\in\mathbb{N}} \norm{\tilde{y}_i}_{p} < \infty$ and 
$\sup_{i\in\mathbb{N}} \norm{\psi_i}_{q}  < \infty$.
\end{assumption}
This assumption allows flexibility in the choice of \( r \).
It requires uniform boundedness of the \( p \)th moments of \( \tilde{y}_i \) and the \( q \)th moments of \( \psi_i \), thereby ruling out heavy-tailed behaviour.

The following theorem establishes consistency of the aggregate Riesz estimator.
\begin{theorem}[Consistency and Recoverability]\label{th:consistency}
Under Assumptions~\ref{as:setting}, \ref{as:mspace}--\ref{as:weights}, \ref{as:ans}$(a)$, and \ref{as:bounded-norms}$_{(r=2)}$,
the aggregate Riesz estimator in \eqref{eq:finite-estimator} is consistent in mean square for $\tau_n$.
Moreover, if Assumption~\ref{as:ans}(b) holds, then
$\hat{\tau}_n - \tau_n = O_p(n^{-1/2})$.
\end{theorem}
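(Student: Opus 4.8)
The plan is to deduce both assertions from a single second-moment bound, namely $\mathrm{Var}(\hat\tau_n)\le C^2\bar\nu^2 d_n/n$ for a finite constant $C$. By Theorem~\ref{th:unbiasedness} we have $\mathbb{E}[\hat\tau_n]=\tau_n$, so $\mathbb{E}\bigl[(\hat\tau_n-\tau_n)^2\bigr]=\mathrm{Var}(\hat\tau_n)$; thus ``consistency in mean square'' is exactly the statement $\mathrm{Var}(\hat\tau_n)\to0$, while the rate claim will follow from $\mathrm{Var}(\hat\tau_n)=O(n^{-1})$ by Chebyshev's inequality.

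First I would expand
\[
\mathrm{Var}(\hat\tau_n)=\sum_{i=1}^n\sum_{j=1}^n \nu_{ni}\nu_{nj}\,\mathrm{Cov}\bigl(\hat\theta_i,\hat\theta_j\bigr)
\]
and discard off-neighbourhood terms using Definition~\ref{df:dep-neighbour}: if $j\notin N_i=\bigcap_{k\in I_i}M_{ik}$, then $j\notin M_{ik}$ for some $k$, so $(\tilde y_i,\psi_i)$ is independent of $(\tilde y_j,\psi_j)$; since $\hat\theta_i=\tilde y_i\psi_i$ and $\hat\theta_j=\tilde y_j\psi_j$ are measurable functions of these pairs, they are independent and $\mathrm{Cov}(\hat\theta_i,\hat\theta_j)=0$. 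Hence the double sum collapses to the pairs with $j\in N_i$.

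Next I would bound the surviving covariances uniformly in $(i,j)$. By Cauchy--Schwarz (and the fact that centring does not increase the $L^2$ norm), $\abs{\mathrm{Cov}(\hat\theta_i,\hat\theta_j)}\le\|\hat\theta_i\|_2\|\hat\theta_j\|_2$, and by the generalised H\"older inequality with exponents $p,q\in[2,\infty]$ satisfying $1/p+1/q=1/2$ --- the $r=2$ instance of Assumption~\ref{as:bounded-norms} --- one gets $\|\hat\theta_i\|_2=\|\tilde y_i\psi_i\|_2\le\|\tilde y_i\|_p\|\psi_i\|_q\le C:=\bigl(\sup_i\|\tilde y_i\|_p\bigr)\bigl(\sup_i\|\psi_i\|_q\bigr)<\infty$. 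Combining this with Assumption~\ref{as:weights} ($0\le\nu_{ni}\le\bar\nu/n$) and the identity $\sum_{i=1}^n\abs{N_i}=n\,d_n$ gives
\[
\mathrm{Var}(\hat\tau_n)\;\le\;C^2\sum_{i=1}^n\sum_{j\in N_i}\nu_{ni}\nu_{nj}\;\le\;\frac{C^2\bar\nu^2}{n^2}\sum_{i=1}^n\abs{N_i}\;=\;\frac{C^2\bar\nu^2 d_n}{n}.
\]
Under Assumption~\ref{as:ans}(a), $d_n/n\to0$, so $\mathrm{Var}(\hat\tau_n)\to0$ and $\hat\tau_n$ is mean-square consistent; under Assumption~\ref{as:ans}(b), $\mathrm{Var}(\hat\tau_n)\le C^2\bar\nu^2(\sup_n d_n)/n=O(n^{-1})$, and Chebyshev's inequality applied to $\hat\tau_n-\tau_n$ yields $\hat\tau_n-\tau_n=O_p(n^{-1/2})$.

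There is no deep obstacle here --- the argument is in essence a variance calculation --- so the steps requiring the most care are organisational. The first is the choice of conjugate H\"older exponents: one must take $p,q$ with $1/p+1/q=1/2$ (rather than $1/p+1/q=1$) so that the uniform moment bounds of Assumption~\ref{as:bounded-norms}$_{(r=2)}$ control the \emph{products} $\tilde y_i\psi_i$, which are the objects actually appearing in $\hat\tau_n$, rather than $\tilde y_i$ and $\psi_i$ in isolation; the boundary cases $p=\infty$ or $q=\infty$ are read with the essential supremum. The second is that the reduction to a sum over dependency neighbourhoods must precede any bounding of covariances, since applying $\abs{\mathrm{Cov}(\hat\theta_i,\hat\theta_j)}\le C^2$ over all $n^2$ pairs would only give $\mathrm{Var}(\hat\tau_n)\le C^2\bar\nu^2$, which does not vanish; it is the local-dependence sparsity, entering through $\sum_i\abs{N_i}=n d_n$, that supplies the decay.
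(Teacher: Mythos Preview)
Your proof is correct and follows essentially the same route as the paper: the paper packages the variance bound as a separate Proposition~\ref{pp:sigma2} (using Cauchy--Schwarz, an AM--GM symmetrisation, and the H\"older step of Lemma~\ref{lm:r-bound}) and then invokes it, whereas you carry out the same computation inline and bypass the AM--GM step by bounding each $\|\hat\theta_i\|_2$ directly with the uniform constant $C$. The resulting inequality $\mathrm{Var}(\hat\tau_n)\le C^2\bar\nu^2 d_n/n$ and the conclusions drawn from it are identical.
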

Consistency here means that cross-sectional averaging asymptotically recovers expectations over latent stochastic environments from a single realised experiment.
This differs fundamentally from the classical FPO framework, where the estimand is conditioned on the realised environment itself.
Without suitable restrictions on the dependence structure, recoverability may fail even asymptotically.
For example, if all units share a common latent shock, increasing the sample size does not increase information about expectations over the latent stochastic environment.

Let $\sigma_n^2 := \mathbb{V}[\hat{\tau}_n(z,\omega)]$ denote the variance of the aggregate Riesz estimator.
Consistency alone does not imply asymptotic normality, so additional conditions are required.
The following non-degeneracy assumption imposes a lower bound on the variance of the aggregate Riesz estimator and plays a critical role in ensuring asymptotic normality.
When combined with the previous assumptions, it provides sufficient conditions for the aggregate Riesz estimator to satisfy a central limit theorem.
\begin{assumption}[Variance Lower Bound]\label{as:lowerbound-sigma}
There exists \( \sigma_0 > 0 \) such that \( \inf_{n \in \mathbb{N}} \sqrt{n} \sigma_n \geq \sigma_0 \).
\end{assumption}
This condition ensures that the asymptotic variance does not degenerate under root-$n$ scaling.

We require further the following assumption to establish the asymptotic normality result,
which concerns the existence of finite fourth moments.
\begin{assumption}[Finite Fourth Moment]\label{as:ffm}
The individual treatment effect estimator satisfies either
(a) $\mathbb{E}\left[\hat{\theta}_i(z, \omega)^4\right] < \infty$
or (b) $\sup_{i \in \mathbb{N}} \mathbb{E}\left[\hat{\theta}_i(z, \omega)^4\right] < \infty$.
\end{assumption}

The following theorem establishes asymptotic normality of the aggregate Riesz estimator.
\begin{theorem}[Asymptotic Normality]\label{th:normality}
Under Assumptions~\ref{as:setting}, \ref{as:mspace}--\ref{as:weights}, \ref{as:mns}\(_{(d=1/4)}\), \ref{as:bounded-norms}\(_{(r=3)}\), \ref{as:bounded-norms}\(_{(r=4)}\), \ref{as:lowerbound-sigma}, and \ref{as:ffm}$(a)$,
the aggregate Riesz estimator defined in \eqref{eq:finite-estimator} satisfies
$\sigma_n^{-1} \left( \hat{\tau}_n(z, \omega) - \tau_n \right) \xrightarrow{d} \mathcal{N}(0, 1)$.
\end{theorem}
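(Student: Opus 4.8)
The plan is to apply Stein's method for sums of locally dependent random variables, in the dependency-graph formulation of \citet{ross2011fundamentals} (Theorem~3.6) and \citet{chen2004normal}, to the centred triangular array
\[
X_{ni} := \nu_{ni}\bigl(\hat\theta_i(z,\omega) - \theta_i(\tilde y_i)\bigr),
\qquad i = 1,\dots,n,
\]
so that $\hat\tau_n(z,\omega) - \tau_n = \sum_{i=1}^n X_{ni}$ with $\mathbb{E}[X_{ni}] = 0$ by Theorem~\ref{th:unbiasedness} and $\mathbb{V}\bigl[\sum_i X_{ni}\bigr] = \sigma_n^2$ by definition. Since $\hat\theta_i = \tilde y_i\psi_i$ is a measurable function of the pair $(\tilde y_i,\psi_i)$, Definition~\ref{df:dep-neighbour} endows $\{X_{ni}\}_{i=1}^n$ with a dependency graph whose neighbourhood of $i$ is $N_i$ and whose maximal degree is governed by $D_n = \max_i |N_i|$. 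Stein's method then delivers a Wasserstein bound of the schematic form
\[
d_{\mathrm W}\!\left(\sigma_n^{-1}{\textstyle\sum_{i=1}^n} X_{ni},\ \mathcal N(0,1)\right)
\le
C\,\frac{D_n^{2}}{\sigma_n^{3}}\sum_{i=1}^n \mathbb{E}|X_{ni}|^{3}
+
C\,\frac{D_n^{3/2}}{\sigma_n^{2}}\Bigl(\sum_{i=1}^n \mathbb{E} X_{ni}^{4}\Bigr)^{1/2},
\]
and it suffices to show that both terms vanish; Wasserstein convergence to $\mathcal N(0,1)$ then implies the stated convergence in distribution.

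First I would obtain uniform moment bounds on $\hat\theta_i - \theta_i(\tilde y_i)$. By Cauchy--Schwarz, $|\theta_i(\tilde y_i)| = |\langle\tilde y_i,\psi_i\rangle| \le \|\tilde y_i\|\,\|\psi_i\|$, and for $r\in\{3,4\}$ Hölder's inequality under the joint law of $(z,\omega)$ (a probability measure by Assumption~\ref{as:randomisation}) gives $\|\hat\theta_i\|_r = \|\tilde y_i\psi_i\|_r \le \|\tilde y_i\|_p\|\psi_i\|_q$ for the conjugate pair $1/p+1/q = 1/r$ supplied by Assumption~\ref{as:bounded-norms}. Since $p,q \ge r \ge 3 > 2$, the same assumption also bounds the $L^2$-norms (so in particular $\sigma_n^2 < \infty$), and the triangle inequality yields $\sup_i \mathbb{E}|\hat\theta_i - \theta_i(\tilde y_i)|^{3} \le C_3 < \infty$ and $\sup_i \mathbb{E}|\hat\theta_i - \theta_i(\tilde y_i)|^{4} \le C_4 < \infty$, the latter also ensured by Assumption~\ref{as:ffm}. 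Combining with $\nu_{ni} \le \bar\nu/n$ from Assumption~\ref{as:weights} gives $\sum_i \mathbb{E}|X_{ni}|^{3} \le C_3\bar\nu^3 n^{-2}$ and $\sum_i \mathbb{E} X_{ni}^{4} \le C_4\bar\nu^4 n^{-3}$, while Assumption~\ref{as:lowerbound-sigma} gives $\sigma_n^{-1} \le \sigma_0^{-1} n^{1/2}$. Substituting into the Stein bound, the first term is $O(D_n^2 n^{-1/2})$ and the second is $O(D_n^{3/2} n^{-1/2})$. Assumption~\ref{as:mns} with $d = 1/4$ gives $D_n = o(n^{1/4})$, so $D_n^2 n^{-1/2} = o(1)$ and, a fortiori, $D_n^{3/2} n^{-1/2} = o(1)$; hence the Wasserstein distance tends to zero and $\sigma_n^{-1}(\hat\tau_n(z,\omega) - \tau_n) \xrightarrow{d} \mathcal N(0,1)$.

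I expect the main obstacle to be bookkeeping rather than any deep step. The two delicate points are: (i) confirming that the local-dependence hypothesis of the Stein bound is genuinely met by $\{X_{ni}\}$ --- the independence of $(\tilde y_i,\psi_i)$ from the pairs outside $N_i$ must transfer to the products $\hat\theta_i$ and, in the sharper forms of the bound, to the twice-enlarged neighbourhoods $\bigcup_{j\in N_i} N_j$, whose size is at most $D_n^2$; and (ii) checking that $d = 1/4$ is precisely the threshold needed, i.e. that the binding constraint is the cubic term $D_n^2/\sqrt n$ rather than the quartic one. Making the interplay between $\nu_{ni} = O(1/n)$, $\sigma_n^{-1} = O(\sqrt n)$, the uniform moment bounds from Assumption~\ref{as:bounded-norms}, and the neighbourhood growth $D_n$ line up is the crux; the numerical constants are immaterial.
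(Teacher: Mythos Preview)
Your proposal is correct and follows essentially the same approach as the paper: both apply the Stein dependency-graph bound of \citet{ross2011fundamentals} to the centred array $X_{ni}=\nu_{ni}(\hat\theta_i-\theta_i(\tilde y_i))$, control the third- and fourth-moment sums via H\"older/Lemma~\ref{lm:r-bound} and the weight bound $\nu_{ni}\le\bar\nu/n$, insert $\sigma_n^{-1}\le\sigma_0^{-1}\sqrt n$, and conclude via $D_n=o(n^{1/4})$. The only cosmetic difference is that the paper inserts an additional Cauchy--Schwarz step on $\sum_i\nu_{ni}^r\,\mathbb{E}|\cdot|^r$ before invoking the uniform moment bound, whereas you use the uniform bound directly---your route is slightly more direct and gives the same orders.
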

Both Assumptions~\ref{as:bounded-norms}\(_{(r=3)}\) and~\ref{as:bounded-norms}\(_{(r=4)}\) are required here, as
boundedness at $r = 4$ does not imply the same at $r = 3$ over $\mathcal{Z} \times \Omega$, due to the potential unboundedness of the joint space.

Theorem~\ref{th:normality} justifies asymptotically valid inference based on the standardised statistic
$T_n = \hat{\tau}_n(z,\omega)/\sigma_n$.
The sparsity restriction therefore reflects a trade-off between dependence complexity and recoverability of mechanism-level information from a single realised experiment.

\section{Variance Estimation under Local Dependence}\label{sec:variance}

Existing design-based approaches under interference often rely on conservative variance bounds rather than consistent estimation of the exact sampling variance.
This limitation is structural in the classical FPO framework, where inference conditions on a realised schedule of potential outcomes rather than averaging over latent outcome-generating environments.

Under the present framework, the inferential target is instead defined over the joint law of the treatment assignment and latent stochastic environment.
The variance therefore becomes a population-level quantity associated with the stochastic mechanism itself rather than with a fixed realised schedule.
In this sense, the proposed framework changes the inferential structure of design-based inference.

Recoverability of the variance from a single realised experiment nevertheless remains nontrivial.
Consistent estimation becomes possible only because the local dependence structure introduced in Section~\ref{sec:asymptotic} induces an ergodic-type averaging mechanism across units.
The improvement from conservative variance bounds to consistent variance estimation is therefore not merely a technical refinement, but a consequence of changing the inferential target from fixed potential outcome schedules to stochastic mechanisms.

Existing approaches under interference typically focus on conservative variance control under structured dependence.
For example, \citet{aronow2017estimating} rely on known exposure mappings and strictly positive joint inclusion probabilities; \citet{yu_estimating_2022} assume bounded-degree network structures; and \citet{liu2014large} consider two-stage designs with bounded group sizes.
\citet{harshaw2022riesz} construct an unbiased estimator for an upper variance bound using a tensor-product Riesz representation, while \citet{harshaw2024optimizing} develop optimisation procedures for tightening such conservative bounds across experimental designs.

In the present framework, dependence arises through the latent stochastic environment $\omega$ and is reflected through the observable pairs $(\tilde y_i,\psi_i)$.
Although the latent environment itself is unobserved, qualitative features of the dependence structure are often informed by the experimental design or application context.
In many settings, the design itself partially determines which units are likely to exhibit meaningful interaction or dependence.

This section develops feasible variance estimators for the aggregate Riesz estimator under local dependence.
The variance structure depends explicitly on the dependency neighbourhoods introduced in Section~\ref{sec:asymptotic}, and consistent estimation of the exact sampling variance therefore requires structural information about the underlying dependence pattern.
The proposed estimator exploits sparsity of the dependency neighbourhood structure to consistently recover the sampling variance under suitable neighbourhood growth conditions.

The variance estimator is directly implementable from observed data once the representer $\psi_i(\cdot,\omega_0)$ has been approximated.
Given the observed realised environment $\omega_0$, the quantity $\hat\theta_i(z,\omega_0)$ is computable from \eqref{eq:Riesz-estimator-i}.
In practice, $\psi_i(\cdot,\omega_0)$ may be approximated using standard finite-dimensional projections; see Appendix~\ref{sec:estimation} for computational details.

Under sharp null hypotheses specifying $\theta_i(\tilde y_i)$, define the centred quantity
\begin{equation}
\zeta_i(z, \omega)
=
\hat{\theta}_i(z,\omega) - \theta_i(\tilde y_i).
\end{equation}
Its realised value $\zeta_i(z,\omega_0)$ is then observable and can be used for variance estimation.

The proposed approach retains only those cross terms corresponding to pairs believed to exhibit dependence, while setting all remaining terms to zero.
This yields a feasible variance estimator adapted to the assumed dependency neighbourhood structure.

Let \( \vec{\zeta}_n = (\zeta_1, \dots, \zeta_n)' \) denote a vector of centred random variables with covariance matrix \( \Sigma_n = \mathbb{E}[\vec{\zeta}_n \vec{\zeta}_n'] \).
From a single realised experiment, we observe only the outer product
\( \hat{\Sigma}_n = \vec{\zeta}_n \vec{\zeta}_n' \),
rather than the population covariance matrix \( \Sigma_n \) itself.
Without additional structure, consistent estimation of the full covariance matrix is impossible.
Even under independence, the entries of \( \hat{\Sigma}_n \) typically have variance of order one, implying
\begin{equation}
\mathbb{E}\bigl[\| \hat{\Sigma}_n - \Sigma_n \|_F^2\bigr]
\ge
c n^2
\end{equation}
for some constant \( c > 0 \),
where $\norm{\cdot}_F$ denotes the Frobenius norm.
Thus the estimation error diverges with $n$.

Consistent covariance estimation therefore requires additional structure.
Examples include repeated independent experimental replications, weak stationarity combined with temporal or spatial ordering, factor structures enabling low-rank regularisation, or asymptotically vanishing second-order dependence.
Such assumptions are unavailable in the single-realisation design-based setting considered here.
The proposed approach instead targets only those covariance components associated with locally dependent pairs.

Suppose the dependency structure is given by
\begin{equation}\label{eq:dependency-structure}
\mathcal{E}_n = \{(i,j) \in \{1, \dots, n\}^2 : \zeta_i \text{ and } \zeta_j \text{ are dependent} \}.
\end{equation}
We define the local-dependence variance estimator
\begin{equation}\label{eq:plugin-vestimator}
\hat{\sigma}_n^2 = \sum_{(i,j) \in \mathcal{E}_n} \nu_{ni} \nu_{nj} \zeta_i \zeta_j,
\end{equation}
and the population variance
\begin{equation}\label{eq:population-variance}
\sigma_n^2 = \sum_{(i,j) \in \mathcal{E}_n} \nu_{ni} \nu_{nj} \mathbb{E}[\zeta_i \zeta_j].
\end{equation}
The following theorem shows that, under mild regularity conditions, \( \hat{\sigma}_n^2 \) consistently estimates \( \sigma_n^2 \).

\begin{theorem}[Consistency of the Variance Estimator]\label{th:consistency-variance}
Under Assumptions~\ref{as:setting}, \ref{as:mspace}--\ref{as:weights}, and \ref{as:ffm}$(b)$,
\( n \hat{\sigma}_n^2 - n \sigma_n^2 = O_p(n^{-1/2} D_n^{3/2})\). In particular, under Assumption~\ref{as:mns}$_{(d=1/3)}$,  $n \hat{\sigma}_n^2 - n \sigma_n^2$ converges to zero in mean square .
\end{theorem}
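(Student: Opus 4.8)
Set $W_n := n\hat{\sigma}_n^2 - n\sigma_n^2 = n\sum_{(i,j)\in\mathcal{E}_n}\nu_{ni}\nu_{nj}\bigl(\zeta_i\zeta_j-\mathbb{E}[\zeta_i\zeta_j]\bigr)$. The plan is to prove the single estimate $\mathbb{E}[W_n^2]\le C'\,D_n^3/n$, with $C'$ depending only on $\bar\nu$ and on $\sup_i\mathbb{E}[\hat{\theta}_i(z,\omega)^4]$. Granting this, Markov's inequality applied to $W_n^2$ gives $W_n = O_p\bigl((D_n^3/n)^{1/2}\bigr)=O_p(n^{-1/2}D_n^{3/2})$, the first assertion; and if in addition $D_n=o(n^{1/3})$ (Assumption~\ref{as:mns}$_{(d=1/3)}$), then $D_n^3/n\to0$, so $\mathbb{E}[W_n^2]\to0$, i.e.\ $n\hat{\sigma}_n^2-n\sigma_n^2\to0$ in mean square.

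For the estimate I would assemble three facts. (i) \emph{Weights}: by Assumption~\ref{as:weights}, $0\le\nu_{ni}\le\bar\nu/n$, so any product $\nu_{ni}\nu_{nj}\nu_{nk}\nu_{nl}\le\bar\nu^4/n^4$. (ii) \emph{Moments}: since $\zeta_i=\hat{\theta}_i(z,\omega)-\theta_i(\tilde y_i)$ and $\theta_i(\tilde y_i)=\mathbb{E}[\hat{\theta}_i(z,\omega)]$ by Theorem~\ref{th:unbiasedness}, Jensen's inequality and $(a+b)^4\le8(a^4+b^4)$ give $\sup_i\mathbb{E}[\zeta_i^4]\le16\sup_i\mathbb{E}[\hat{\theta}_i(z,\omega)^4]=:C<\infty$ under Assumption~\ref{as:ffm}$(b)$; applying the Cauchy--Schwarz inequality twice then bounds each summand, $\bigl|\mathrm{Cov}(\zeta_i\zeta_j,\zeta_k\zeta_l)\bigr|\le\bigl(\mathbb{E}[\zeta_i^4]\,\mathbb{E}[\zeta_j^4]\,\mathbb{E}[\zeta_k^4]\,\mathbb{E}[\zeta_l^4]\bigr)^{1/4}\le C$. (iii) \emph{Dependence}: each $\zeta_i$ is a measurable function of $(\tilde y_i,\psi_i)$ and of the common design, so by the dependency-neighbourhood structure of Definition~\ref{df:dep-neighbour} the pair $(\zeta_i,\zeta_j)$ is independent of $(\zeta_k,\zeta_l)$ — hence their centred products are uncorrelated — unless $\{i,j\}$ and $\{k,l\}$ are joined by an edge of $\mathcal{E}_n$, where we take $\mathcal{E}_n$ to include the diagonal so that overlapping index pairs are also covered. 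Equivalently, $\mathrm{Cov}(\zeta_i\zeta_j,\zeta_k\zeta_l)$ can be nonzero only when $\{k,l\}$ meets the $\mathcal{E}_n$-neighbourhood of $\{i,j\}$.

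It then remains to expand and count. Writing
\[
\mathbb{E}[W_n^2]=n^2\sum_{(i,j)\in\mathcal{E}_n}\ \sum_{(k,l)\in\mathcal{E}_n}\nu_{ni}\nu_{nj}\nu_{nk}\nu_{nl}\,\mathrm{Cov}(\zeta_i\zeta_j,\zeta_k\zeta_l),
\]
I would use that each unit is $\mathcal{E}_n$-adjacent to at most $D_n$ others to conclude: there are at most $nD_n$ pairs $(i,j)\in\mathcal{E}_n$; for any fixed such pair, the $\mathcal{E}_n$-neighbourhood of $\{i,j\}$ contains at most $2D_n$ units, one of which must be $k$ or $l$, and the remaining member of the pair $(k,l)\in\mathcal{E}_n$ then ranges over at most $D_n$ choices, so at most $4D_n^2$ pairs $(k,l)$ survive. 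Hence at most $4nD_n^3$ terms are nonzero, and combining with (i) and (ii),
\[
\mathbb{E}[W_n^2]\ \le\ n^2\cdot\frac{\bar\nu^4}{n^4}\cdot C\cdot 4nD_n^3\ =\ \frac{4C\bar\nu^4\,D_n^3}{n},
\]
which is the desired estimate.

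The main obstacle is step (iii) together with the counting above. One must argue that off-graph cross-covariances genuinely vanish, which requires the \emph{joint} (set-versus-set) independence implicit in the dependency-neighbourhood notion rather than mere pairwise independence of the $\zeta_i$, and should be handled carefully — in particular because $\zeta_i$ also depends on the shared assignment $z$, so $\mathcal{E}_n$ must be understood as the dependency graph of the $\zeta_i$ themselves (with max degree $D_n$), not merely that of the $(\tilde y_i,\psi_i)$. The combinatorics must then be tracked so that the count is $\Theta(nD_n^3)$: a cruder pairing of $\mathcal{E}_n$ with itself gives $O(n^2D_n^2)$, which is too lossy to yield convergence, while $O(nD_n^2)$ would be too optimistic. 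It is precisely the exponent $3$ here that produces both the rate $O_p(n^{-1/2}D_n^{3/2})$ and the threshold $d=1/3$ in Assumption~\ref{as:mns}.
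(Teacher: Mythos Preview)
Your proposal is correct and follows essentially the same route as the paper: the paper packages the same second-moment computation into a separate lemma (Lemma~\ref{lm:delta-weighted}), bounding $\mathbb{V}[\Delta_n]$ by counting at most $nD_n$ pairs in $\mathcal{E}_n$, at most $O(D_n^2)$ dependent partner pairs each, and using the weight and fourth-moment bounds to get $\mathbb{V}[\Delta_n]\le C\,\bar\nu^4 D_n^3/n$, then invoking Chebyshev. Your explicit flagging of the joint (set-versus-set) independence requirement and of the need to read $\mathcal{E}_n$ as the dependency graph of the $\zeta_i$ is apt; the paper relies on this implicitly via Definition~\ref{df:dep-neighbour} (in the sense of Ross's dependency graphs), so that $\zeta_i$, being a measurable function of $(\tilde y_i,\psi_i)$, inherits the neighbourhood bound $D_n$.
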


\begin{corollary}[Asymptotic Normality]\label{co:normality}
Under Assumptions~\ref{as:setting}, \ref{as:mspace}--\ref{as:weights}, \ref{as:mns}\(_{(d=1/4)}\), \ref{as:bounded-norms}\(_{(r=3)}\), \ref{as:bounded-norms}\(_{(r=4)}\), \ref{as:lowerbound-sigma}, and \ref{as:ffm}$(b)$,
the aggregate Riesz estimator defined in \eqref{eq:finite-estimator} satisfies 
$\hat{\sigma}_n^{-1} \left( \hat{\tau}_n(z, \omega) - \tau_n \right) \xrightarrow{d} \mathcal{N}(0, 1)$.
\end{corollary}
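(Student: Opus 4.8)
The plan is to combine the asymptotic normality of Theorem~\ref{th:normality} with the variance-estimator consistency of Theorem~\ref{th:consistency-variance} via Slutsky's theorem, taking care that the two are being applied to \emph{compatible} quantities. Concretely, Theorem~\ref{th:normality} gives $\sigma_n^{-1}(\hat\tau_n - \tau_n) \xrightarrow{d} \mathcal N(0,1)$ under the stated assumptions (noting that $\ref{as:ffm}(b) \Rightarrow \ref{as:ffm}(a)$, so the hypotheses of Corollary~\ref{co:normality} do imply those of Theorem~\ref{th:normality}). It therefore suffices to show $\hat\sigma_n / \sigma_n \xrightarrow{p} 1$, since then $\hat\sigma_n^{-1}(\hat\tau_n - \tau_n) = (\sigma_n/\hat\sigma_n)\cdot \sigma_n^{-1}(\hat\tau_n-\tau_n) \xrightarrow{d} \mathcal N(0,1)$ by Slutsky.

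First I would reconcile the two notions of variance in play. The $\sigma_n^2$ appearing in Theorem~\ref{th:normality} is $\mathbb V[\hat\tau_n(z,\omega)]$, whereas the $\sigma_n^2$ in \eqref{eq:population-variance} is $\sum_{(i,j)\in\mathcal E_n}\nu_{ni}\nu_{nj}\mathbb E[\zeta_i\zeta_j]$. Under the sharp null that pins down $\theta_i(\tilde y_i)$, we have $\zeta_i = \hat\theta_i - \theta_i(\tilde y_i)$ with $\mathbb E[\zeta_i]=0$ by Theorem~\ref{th:unbiasedness}, so $\hat\tau_n - \tau_n = \sum_i \nu_{ni}\zeta_i$ and hence $\mathbb V[\hat\tau_n] = \sum_{i,j}\nu_{ni}\nu_{nj}\mathbb E[\zeta_i\zeta_j]$; because $\zeta_i$ and $\zeta_j$ are independent (hence uncorrelated) whenever $(i,j)\notin\mathcal E_n$, the off-$\mathcal E_n$ cross terms vanish and the two expressions for $\sigma_n^2$ coincide. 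I would state this identification explicitly at the start of the proof.

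Next I would establish $\hat\sigma_n/\sigma_n \xrightarrow{p} 1$. From Theorem~\ref{th:consistency-variance} with Assumption~\ref{as:mns}$_{(d=1/4)}$ (which gives $D_n = o(n^{1/4})$, a fortiori $D_n^{3/2} = o(n^{3/8}) = o(n^{1/2})$), we get $n\hat\sigma_n^2 - n\sigma_n^2 = o_p(1)$, i.e.\ $n(\hat\sigma_n^2 - \sigma_n^2)\xrightarrow{p}0$. Assumption~\ref{as:lowerbound-sigma} gives $n\sigma_n^2 \ge \sigma_0^2 > 0$ uniformly, so dividing, $\hat\sigma_n^2/\sigma_n^2 - 1 = (n\hat\sigma_n^2 - n\sigma_n^2)/(n\sigma_n^2) \xrightarrow{p} 0$. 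Since $x\mapsto\sqrt x$ is continuous at $1$, the continuous mapping theorem yields $\hat\sigma_n/\sigma_n\xrightarrow{p}1$ (on the event $\hat\sigma_n^2 > 0$, which has probability tending to one; on its complement one can set the ratio to $1$ without affecting the limit). Then Slutsky's theorem delivers the claim.

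The main obstacle is not analytic but bookkeeping: the proof hinges entirely on the two $\sigma_n^2$'s being the same object, which in turn requires that the centring $\theta_i(\tilde y_i)$ used to define $\zeta_i$ is exactly the quantity estimated unbiasedly by $\hat\theta_i$ — true by Theorem~\ref{th:unbiasedness} — and that $\mathcal E_n$ contains all pairs of dependent (not merely correlated) $\zeta_i,\zeta_j$, which is guaranteed by the dependency-neighbourhood structure of Definition~\ref{df:dep-neighbour} applied to the pair $(\tilde y_i,\psi_i)$ (since $\zeta_i$ is a function of $(z,\tilde y_i,\psi_i)$ and, conditional on $\omega$, of the independently drawn $z$). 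I would also double-check that Assumption~\ref{as:mns}$_{(d=1/4)}$, which appears in the corollary's hypothesis list, indeed implies the $d=1/3$ form needed to invoke the mean-square (hence in-probability) conclusion of Theorem~\ref{th:consistency-variance} — it does, since $o(n^{1/4})\subset o(n^{1/3})$. No genuinely new estimates are needed beyond what the cited theorems already supply.
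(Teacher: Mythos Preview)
Your proposal is correct and follows essentially the same route as the paper: decompose $\hat\sigma_n^{-1}(\hat\tau_n-\tau_n)$ as the product of $\sigma_n^{-1}(\hat\tau_n-\tau_n)$ and $\sigma_n/\hat\sigma_n$, invoke Theorem~\ref{th:normality} for the first factor and Theorem~\ref{th:consistency-variance} together with Assumption~\ref{as:lowerbound-sigma} for the second, then apply Slutsky. Your additional bookkeeping---reconciling the two expressions for $\sigma_n^2$, noting that \ref{as:ffm}(b) implies \ref{as:ffm}(a), and checking that $d=1/4$ suffices for the $d=1/3$ condition in Theorem~\ref{th:consistency-variance}---is more explicit than the paper's own proof, but the underlying argument is identical.
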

Corollary~\ref{co:normality} establishes that the aggregate Riesz estimator is asymptotically normal under local dependence, provided that the dependency neighbourhoods are sufficiently sparse and the variance estimator \( \hat{\sigma}_n^2 \) is properly constructed.

\comm{correlation variance estimator}

We next consider variance estimators constructed under progressively weaker information about the dependence structure.
A natural question is whether one can use a reduced dependency structure based solely on correlation.
In many experimental settings, it is difficult to explicitly define or justify a full dependency neighbourhood structure \( \mathcal{E}_n \) based on latent interference or unobserved design constraints.
However, domain knowledge or structural assumptions may suggest which unit-level estimators \( \zeta_i \) and \( \zeta_j \) are likely to be uncorrelated, even if their full dependence is unknown.
This motivates constructing a variance estimator by summing only over pairs believed to exhibit non-negligible second-order dependence, resulting in a conservative correlation-based estimator.
Specifically, consider the set
\begin{equation}\label{eq:correlation-structure}
\mathcal{E}_n^c = \{(i,j) \in \{1, \dots, n\}^2 : \mathbb{E}[\zeta_i \zeta_j] \neq 0 \},
\end{equation}
and define the corresponding correlation-based variance estimator
\begin{equation}\label{eq:plugin-vestimator2}
\hat{\sigma}_{cn}^2 = \sum_{(i,j) \in \mathcal{E}_n^c} \nu_{ni} \nu_{nj} \zeta_i \zeta_j.
\end{equation}
The next result shows that restricting the estimator to correlated pairs preserves consistency under the same sparsity conditions.
\begin{theorem}[Consistency of the Variance Estimator]\label{th:consistency-variance2}
Under Assumptions~\ref{as:setting}, \ref{as:mspace}--\ref{as:weights}, and \ref{as:ffm}$(b)$,
\( n \hat{\sigma}_{cn}^2 - n \sigma_n^2 = O_p(n^{-1/2} D_n^{3/2})\). In particular, under Assumption~\ref{as:mns}$_{(d=1/3)}$,  $n \hat{\sigma}_{cn}^2 - n \sigma_n^2$ converges to zero in mean square .
\end{theorem}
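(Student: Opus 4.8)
The plan is to note that Theorem~\ref{th:consistency-variance2} follows from the argument already used for Theorem~\ref{th:consistency-variance}, once two elementary observations are in place. First, each $\zeta_i$ is centred: unbiasedness of the Riesz estimator (Theorem~\ref{th:unbiasedness}) gives $\mathbb{E}[\hat\theta_i(z,\omega)]=\theta_i(\tilde y_i)$, hence $\mathbb{E}[\zeta_i]=0$ and $\mathbb{E}[\zeta_i\zeta_j]=\mathrm{Cov}(\zeta_i,\zeta_j)$. If $\zeta_i$ and $\zeta_j$ are independent this covariance vanishes, so such a pair cannot lie in $\mathcal{E}_n^c$; contrapositively $\mathcal{E}_n^c\subseteq\mathcal{E}_n$. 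Second, by the very definition \eqref{eq:correlation-structure} of $\mathcal{E}_n^c$, every pair in $\mathcal{E}_n\setminus\mathcal{E}_n^c$ has $\mathbb{E}[\zeta_i\zeta_j]=0$, so narrowing the sum in \eqref{eq:population-variance} from $\mathcal{E}_n$ to $\mathcal{E}_n^c$ does not change $\sigma_n^2$. Thus $\hat\sigma_{cn}^2$ targets the same $\sigma_n^2$ as $\hat\sigma_n^2$, and
\[
n\hat\sigma_{cn}^2-n\sigma_n^2=n\sum_{(i,j)\in\mathcal{E}_n^c}\nu_{ni}\nu_{nj}\bigl(\zeta_i\zeta_j-\mathbb{E}[\zeta_i\zeta_j]\bigr),
\]
a centred sum over a subset of the pairs appearing in $\hat\sigma_n^2$.

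The next step is to bound the second moment of this quantity exactly as in the proof of Theorem~\ref{th:consistency-variance}. Expanding the square yields a sum over quadruples $(i,j,k,l)$, with $(i,j),(k,l)\in\mathcal{E}_n^c$, of terms $n^2\,\nu_{ni}\nu_{nj}\nu_{nk}\nu_{nl}\,\mathrm{Cov}(\zeta_i\zeta_j,\,\zeta_k\zeta_l)$. Such a covariance vanishes unless some index in $\{i,j\}$ is a dependency partner of some index in $\{k,l\}$ in the sense of Definition~\ref{df:dep-neighbour}; this is a dependence statement, unaffected by having replaced $\mathcal{E}_n$ with $\mathcal{E}_n^c$, so the number of surviving quadruples is governed by the maximal neighbourhood size $D_n$ precisely as before, and since $\mathcal{E}_n^c\subseteq\mathcal{E}_n$ every relevant count is dominated by the corresponding count for $\mathcal{E}_n$. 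Using Assumption~\ref{as:weights} to replace the weights by $O(n^{-1})$, and Assumption~\ref{as:ffm}$(b)$ with Cauchy--Schwarz to bound $\mathbb{E}[\zeta_i^2\zeta_j^2]\le\sqrt{\mathbb{E}[\zeta_i^4]\,\mathbb{E}[\zeta_j^4]}$ uniformly in $i,j$ (noting $\theta_i(\tilde y_i)=\mathbb{E}[\hat\theta_i]$ controls the constant part of $\zeta_i$), the identical bookkeeping gives $\mathbb{E}\bigl[(n\hat\sigma_{cn}^2-n\sigma_n^2)^2\bigr]=O(n^{-1}D_n^3)$, i.e.\ $n\hat\sigma_{cn}^2-n\sigma_n^2=O_p(n^{-1/2}D_n^{3/2})$. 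Under Assumption~\ref{as:mns}$_{(d=1/3)}$ one has $D_n=o(n^{1/3})$, whence $n^{-1}D_n^3=o(1)$ and mean-square convergence follows.

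Since the whole argument is a transcription of the proof of Theorem~\ref{th:consistency-variance} with the summation index set narrowed, I do not expect a substantive obstacle; the one point requiring care is to keep two roles distinct. The set $\mathcal{E}_n^c$ merely determines which products $\zeta_i\zeta_j$ enter the estimator, whereas the set controlling which cross-covariances survive in the variance expansion remains the dependency graph, a dependence rather than a correlation notion and the one tied to $D_n$. An equivalent and equally short route, which I would mention but not adopt, is to write $\hat\sigma_{cn}^2=\hat\sigma_n^2-R_n$ with $R_n=\sum_{(i,j)\in\mathcal{E}_n\setminus\mathcal{E}_n^c}\nu_{ni}\nu_{nj}\zeta_i\zeta_j$, invoke Theorem~\ref{th:consistency-variance} for $\hat\sigma_n^2$, and observe that $R_n$ has zero mean termwise, so $nR_n=O_p(n^{-1/2}D_n^{3/2})$ by the same second-moment computation.
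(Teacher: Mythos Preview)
Your proposal is correct and follows essentially the same route as the paper: both arguments reduce to observing that $\mathcal{E}_n^c\subseteq\mathcal{E}_n$ and that the variance bound established for $\Delta_n$ (via counting dependent pairs and the uniform fourth-moment bound) applies verbatim to the subsum $\Delta_n^c$, since that bound is obtained by summing absolute values of covariances. Your explicit separation of the two roles---$\mathcal{E}_n^c$ as the summation index set versus the dependency graph as what governs surviving cross-covariances---is if anything clearer than the paper's own Lemma~\ref{lm:vvest}, which first attempts a direct comparison $\mathbb{V}[\Delta_n]-\mathbb{V}[\Delta_n^c]$ before falling back on the same absolute-value argument.
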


\begin{corollary}[Asymptotic Normality]\label{co:normality2}
Under Assumptions~\ref{as:setting}, \ref{as:mspace}--\ref{as:weights}, \ref{as:mns}\(_{(d=1/4)}\), \ref{as:bounded-norms}\(_{(r=3)}\), \ref{as:bounded-norms}\(_{(r=4)}\), \ref{as:lowerbound-sigma}, and \ref{as:ffm}$(b)$,
the aggregate Riesz estimator defined in \eqref{eq:finite-estimator} satisfies
$\hat{\sigma}_{cn}^{-1} \left( \hat{\tau}_n(z, \omega) - \tau_n \right) \xrightarrow{d} \mathcal{N}(0, 1)$.
\end{corollary}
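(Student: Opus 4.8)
The plan is to derive Corollary~\ref{co:normality2} by combining the two main results already in hand --- the asymptotic normality of the aggregate Riesz estimator under its true standardisation (Theorem~\ref{th:normality}) and the consistency of the correlation-based variance estimator (Theorem~\ref{th:consistency-variance2}) --- via Slutsky's theorem. Since Assumption~\ref{as:ffm}$(b)$ implies Assumption~\ref{as:ffm}$(a)$, the hypotheses invoked by the corollary contain exactly those required by Theorem~\ref{th:normality}, so $\sigma_n^{-1}(\hat{\tau}_n(z,\omega) - \tau_n) \xrightarrow{d} \mathcal{N}(0,1)$ is available immediately.

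The first substantive step is a bookkeeping check: I would verify that the quantity $\sigma_n^2$ in \eqref{eq:population-variance} coincides with $\mathbb{V}[\hat{\tau}_n(z,\omega)]$, the variance standardising the estimator in Theorem~\ref{th:normality} and bounded below in Assumption~\ref{as:lowerbound-sigma}. Writing $\hat{\tau}_n - \tau_n = \sum_{i=1}^n \nu_{ni}\zeta_i$ with $\mathbb{E}[\zeta_i] = 0$ by Theorem~\ref{th:unbiasedness}, we get $\mathbb{V}[\hat{\tau}_n] = \sum_{i,j}\nu_{ni}\nu_{nj}\mathbb{E}[\zeta_i\zeta_j]$, and every pair $(i,j)\notin\mathcal{E}_n^c$ contributes zero by the definition \eqref{eq:correlation-structure}; hence $\mathbb{V}[\hat{\tau}_n] = \sum_{(i,j)\in\mathcal{E}_n^c}\nu_{ni}\nu_{nj}\mathbb{E}[\zeta_i\zeta_j]$, which is precisely $\sigma_n^2$ in \eqref{eq:population-variance} (the extra dependent-but-uncorrelated pairs in $\mathcal{E}_n$ again contribute zero). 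Thus all three results refer to the same $\sigma_n^2$.

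Next I would establish $\hat{\sigma}_{cn}^2/\sigma_n^2 \xrightarrow{p} 1$. By Theorem~\ref{th:consistency-variance2}, $n\hat{\sigma}_{cn}^2 - n\sigma_n^2 = O_p(n^{-1/2}D_n^{3/2})$; under Assumption~\ref{as:mns}$_{(d=1/4)}$ one has $D_n = o(n^{1/4})$, so $n^{-1/2}D_n^{3/2} = o(n^{-1/8}) \to 0$ and therefore $n\hat{\sigma}_{cn}^2 - n\sigma_n^2 = o_p(1)$ (the mean-square conclusion of Theorem~\ref{th:consistency-variance2} also holds, since $d = 1/4 < 1/3$, but convergence in probability is all that is needed). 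Combining with Assumption~\ref{as:lowerbound-sigma}, which gives $n\sigma_n^2 \ge \sigma_0^2 > 0$,
\[
\abs{\frac{\hat{\sigma}_{cn}^2}{\sigma_n^2} - 1} = \frac{\abs{n\hat{\sigma}_{cn}^2 - n\sigma_n^2}}{n\sigma_n^2} \le \frac{\abs{n\hat{\sigma}_{cn}^2 - n\sigma_n^2}}{\sigma_0^2} = o_p(1).
\]
In particular $n\hat{\sigma}_{cn}^2 \ge \sigma_0^2/2 > 0$ on an event of probability tending to one, so $\hat{\sigma}_{cn} := (\hat{\sigma}_{cn}^2)^{1/2}$ is well defined there; on the vanishing complementary event one may set the statistic to $0$, which does not affect the weak limit. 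By the continuous mapping theorem $\sigma_n/\hat{\sigma}_{cn} \xrightarrow{p} 1$, and Slutsky's theorem then gives $\hat{\sigma}_{cn}^{-1}(\hat{\tau}_n(z,\omega) - \tau_n) = (\sigma_n/\hat{\sigma}_{cn})\,\sigma_n^{-1}(\hat{\tau}_n(z,\omega) - \tau_n) \xrightarrow{d} \mathcal{N}(0,1)$.

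There is no deep obstacle, since both ingredient theorems are already proved; the work lies entirely in the bookkeeping above --- confirming that Theorems~\ref{th:normality} and~\ref{th:consistency-variance2} and Assumption~\ref{as:lowerbound-sigma} concern the same $\sigma_n^2$, checking that the $d = 1/4$ neighbourhood-growth rate more than suffices to annihilate the $O_p(n^{-1/2}D_n^{3/2})$ variance-estimation error, and handling the fact that $\hat{\sigma}_{cn}^2$, while not manifestly nonnegative, is positive with probability approaching one so that $\hat{\sigma}_{cn}^{-1}$ is eventually well defined. If anything is delicate it is only this last point, dispatched by the degenerate-event argument just given.
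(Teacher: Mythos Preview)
Your proposal is correct and follows essentially the same route as the paper: the paper's proof of Corollary~\ref{co:normality2} simply says ``same argument as Corollary~\ref{co:normality}, replacing $\hat{\sigma}_n^2$ with $\hat{\sigma}_{cn}^2$,'' and that argument is exactly the Slutsky decomposition $\hat{\sigma}_{cn}^{-1}(\hat{\tau}_n-\tau_n)=(\sigma_n/\hat{\sigma}_{cn})\cdot\sigma_n^{-1}(\hat{\tau}_n-\tau_n)$ you write out. Your additional bookkeeping --- verifying that \eqref{eq:population-variance} coincides with $\mathbb{V}[\hat{\tau}_n]$, checking that $d=1/4$ annihilates the $O_p(n^{-1/2}D_n^{3/2})$ term, and handling the eventual positivity of $\hat{\sigma}_{cn}^2$ --- is more careful than the paper, which leaves these points implicit.
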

Although the correlation-based estimator omits many cross-terms in the full variance expression, it remains valid provided that omitted pairs correspond to units whose second-order dependence is negligible.

\comm{Applicability to FPO Settings.}

Although the variance estimators are developed under stochastic potential outcomes, the construction also applies to the FPO framework as a degenerate special case obtained by taking $\Omega$ to be a singleton.
In that setting, randomness arises solely through the assignment mechanism.
The distinction is therefore not the algebraic form of the variance estimator itself, but the inferential interpretation of the target variance.
Under FPO, the variance is conditional on a fixed realised schedule of potential outcomes.
Under the present framework, the variance reflects uncertainty over the latent stochastic mechanism generating the potential outcomes.
The proposed framework therefore permits consistent estimation of mechanism-level sampling variability under stochastic environments.

In the remainder of the paper, we adopt the correlation-based variance estimator \( \hat{\sigma}_{cn}^2 \), while emphasising that the overall methodology still relies on the local dependency assumption, even though it is not explicitly invoked in the variance formula. Crucially, while uncorrelatedness justifies omitting second-order cross terms in the estimator, it does not imply the vanishing of higher-order joint moments across units. Therefore, a corresponding correlation-based neighbourhood structure cannot be meaningfully defined for the full dependency graph. 

In practice, this means we continue to assume a local dependency structure satisfying Assumption~\ref{as:mns} with \( d = 1/4 \), while approximating \( \sigma_n^2 \) by summing only over pairs \( (i,j) \) for which \( \zeta_i \) and \( \zeta_j \) are correlated.

\comm{in practice how to estimate the variance}

Suppose that, in practice, the correlation structure is approximated by a set of index pairs
\( \tilde{\mathcal{E}}_n \subset \{1, \dots, n\}^2 \), serving as a practical substitute for \eqref{eq:correlation-structure}. To ensure valid inference, the construction of \( \tilde{\mathcal{E}}_n \) should be conservative, satisfying
\begin{assumption}[Conservative Set of Index Pairs] \label{as:conserv}
$\mathcal{E}_n^c \subset \tilde{\mathcal{E}}_n \subset \mathcal{E}_n$.
\end{assumption}
Based on this, we define a sparse matrix \( \hat{\Sigma}_n^d \in \mathbb{R}^{n \times n} \) that retains only the entries corresponding to the index set \( \tilde{\mathcal{E}}_n \), representing the estimated dependency structure.
\begin{equation}
(\hat{\Sigma}_n^d)_{ij} \;=\; 
\begin{cases}
\zeta_i \zeta_j & \text{if } (i,j) \in \tilde{\mathcal{E}}_n, \\
0 & \text{otherwise}.
\end{cases}
\end{equation}
The corresponding variance estimator is then given by
\begin{equation}
\tilde{\sigma}_{n}^2 \;=\; \nu_n' \hat{\Sigma}_n^d \nu_n,
\end{equation}
where \( \nu_n = (\nu_{n1}, \dots, \nu_{nn})' \) is the vector of aggregation weights.

By appropriately selecting the conservative set of index pairs, the corresponding variance estimator remains consistent.
\begin{theorem}[Consistency of the Variance Estimator]\label{th:consistency-variance3}
Under Assumptions~\ref{as:setting}, \ref{as:mspace}--\ref{as:weights}, \ref{as:ffm}$(b)$, and \ref{as:conserv},
\( n \tilde{\sigma}_{n}^2 - n \sigma_n^2 = O_p(n^{-1/2} D_n^{3/2})\). In particular, under Assumption~\ref{as:mns}$_{(d=1/3)}$,  $n \tilde{\sigma}_{n}^2 - n \sigma_n^2$ converges to zero in mean square .
\end{theorem}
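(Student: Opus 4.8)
The plan is to adapt the second-moment argument already used for Theorems~\ref{th:consistency-variance} and~\ref{th:consistency-variance2}; the only genuinely new ingredient is the reduction afforded by the conservative sandwich in Assumption~\ref{as:conserv}. First I would note that $\mathbb{E}[\zeta_i\zeta_j]=0$ whenever $(i,j)\notin\mathcal{E}_n^c$, by the very definition of $\mathcal{E}_n^c$ in \eqref{eq:correlation-structure}, and that $\mathcal{E}_n^c\subset\tilde{\mathcal{E}}_n\subset\mathcal{E}_n$. Hence the population variance in \eqref{eq:population-variance} may be re-expressed over the working index set,
\[
\sigma_n^2=\sum_{(i,j)\in\mathcal{E}_n}\nu_{ni}\nu_{nj}\,\mathbb{E}[\zeta_i\zeta_j]=\sum_{(i,j)\in\mathcal{E}_n^c}\nu_{ni}\nu_{nj}\,\mathbb{E}[\zeta_i\zeta_j]=\sum_{(i,j)\in\tilde{\mathcal{E}}_n}\nu_{ni}\nu_{nj}\,\mathbb{E}[\zeta_i\zeta_j],
\]
the last step adding only vanishing terms. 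Therefore
\[
\tilde{\sigma}_n^2-\sigma_n^2=\sum_{(i,j)\in\tilde{\mathcal{E}}_n}\nu_{ni}\nu_{nj}\bigl(\zeta_i\zeta_j-\mathbb{E}[\zeta_i\zeta_j]\bigr),
\]
a centred sum indexed by a subset of the dependency graph, which is precisely the structure handled in the earlier variance theorems; in particular the expectation of $\tilde{\sigma}_n^2$ matches $\sigma_n^2$ exactly.

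Next I would bound $\mathbb{E}\bigl[(n\tilde{\sigma}_n^2-n\sigma_n^2)^2\bigr]=n^2\sum_{(i,j),(k,l)\in\tilde{\mathcal{E}}_n}\nu_{ni}\nu_{nj}\nu_{nk}\nu_{nl}\,\operatorname{Cov}(\zeta_i\zeta_j,\zeta_k\zeta_l)$. Since $\zeta_i$ is a measurable function of $(\tilde y_i,\psi_i)$, the dependency graph of the $\zeta_i$ is a subgraph of the one underlying Definition~\ref{df:dep-neighbour}, so it has maximal degree at most $D_n$; consequently $|\tilde{\mathcal{E}}_n|\le|\mathcal{E}_n|\le nD_n$, and $\operatorname{Cov}(\zeta_i\zeta_j,\zeta_k\zeta_l)=0$ unless the edges $\{i,j\}$ and $\{k,l\}$ are joined in that graph, i.e.\ unless $\{k,l\}\cap(N_i\cup N_j)\neq\emptyset$. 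A routine count then leaves at most $O(nD_n^3)$ non-vanishing quadruples: at most $nD_n$ choices of $(i,j)$ and, for each, at most $O(D_n^2)$ compatible $(k,l)$. Each surviving covariance is bounded by a constant, since by Cauchy--Schwarz and the generalised Hölder inequality $|\operatorname{Cov}(\zeta_i\zeta_j,\zeta_k\zeta_l)|\le 2\sup_i\mathbb{E}[\zeta_i^4]=:2M$, with $M<\infty$ following from Assumption~\ref{as:ffm}$(b)$ together with $\sup_i|\theta_i(\tilde y_i)|=\sup_i|\mathbb{E}[\hat\theta_i]|\le\sup_i(\mathbb{E}[\hat\theta_i^4])^{1/4}<\infty$ and Minkowski's inequality. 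Combining with $\nu_{ni}\nu_{nj}\nu_{nk}\nu_{nl}\le(\bar\nu/n)^4$ from Assumption~\ref{as:weights} yields
\[
\mathbb{E}\bigl[(n\tilde{\sigma}_n^2-n\sigma_n^2)^2\bigr]\le n^2\cdot O(nD_n^3)\cdot(\bar\nu/n)^4\cdot 2M=O\!\bigl(D_n^3/n\bigr),
\]
so $n\tilde{\sigma}_n^2-n\sigma_n^2=O_p(n^{-1/2}D_n^{3/2})$ by Markov's inequality, and under Assumption~\ref{as:mns} with $d=1/3$ one has $D_n^3/n\to0$, giving convergence to zero in mean square.

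The arithmetic here is routine; the only real work lies in justifying the covariance-vanishing and counting steps, which require (i) that the $\zeta_i$-dependency graph inherits the degree bound $D_n$, immediate from measurability of $\zeta_i$ in $(\tilde y_i,\psi_i)$, and (ii) the \emph{pair}-versus-\emph{pair} independence of $(\zeta_i,\zeta_j)$ and $(\zeta_k,\zeta_l)$ across unjoined edges, rather than merely the single-unit-versus-complement independence stated in Definition~\ref{df:dep-neighbour}; this is the standard dependency-graph property, which I would invoke accordingly. With these in hand the remainder is identical in form to the proofs of Theorems~\ref{th:consistency-variance} and~\ref{th:consistency-variance2}: because $\tilde{\mathcal{E}}_n$ lies between $\mathcal{E}_n^c$ and $\mathcal{E}_n$, the estimator $\tilde{\sigma}_n^2$ simultaneously inherits the exact matching of its expectation with $\sigma_n^2$ (as for $\hat\sigma_{cn}^2$) and the degree-$D_n$ sparsity (as for $\hat\sigma_n^2$), so no regularity conditions beyond those already listed are needed.
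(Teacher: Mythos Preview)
Your proposal is correct and follows essentially the same route as the paper: both recognise that $n\tilde\sigma_n^2-n\sigma_n^2$ is a centred sum over $\tilde{\mathcal E}_n\subset\mathcal E_n$ and bound its second moment via the dependency-graph counting argument of Lemma~\ref{lm:delta-weighted}. The paper's proof condenses this to a one-line sandwich observation---since $\mathcal E_n^c\subset\tilde{\mathcal E}_n\subset\mathcal E_n$, the variance upper bound for $\tilde\Delta_n$ lies between those already established for $\Delta_n^c$ and $\Delta_n$---whereas you spell the computation out directly; both are valid, and your explicit treatment of the centring step and the fourth-moment constant is if anything more careful than the paper's.
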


\begin{corollary}[Asymptotic Normality]\label{co:normality3}
Under Assumptions~\ref{as:setting}, \ref{as:mspace}--\ref{as:weights}, \ref{as:mns}\(_{(d=1/4)}\), \ref{as:bounded-norms}\(_{(r=3)}\), \ref{as:bounded-norms}\(_{(r=4)}\), \ref{as:lowerbound-sigma}, \ref{as:ffm}$(b)$, and \ref{as:conserv},
the aggregate Riesz estimator defined in \eqref{eq:finite-estimator} satisfies
$\tilde{\sigma}_{n}^{-1} \left( \hat{\tau}_n(z, \omega) - \tau_n \right) \xrightarrow{d} \mathcal{N}(0, 1)$.
\end{corollary}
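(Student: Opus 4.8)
The plan is to derive the corollary as a direct consequence of two results already in hand — the asymptotic normality of the $\sigma_n$-standardised estimator (Theorem~\ref{th:normality}) and the consistency of the feasible variance estimator (Theorem~\ref{th:consistency-variance3}) — glued together by Slutsky's theorem. First I would check that the bookkeeping is sound: the assumptions invoked in Corollary~\ref{co:normality3} contain exactly those of Theorem~\ref{th:normality} once one notes that \ref{as:ffm}$(b)$ strengthens \ref{as:ffm}$(a)$, so Theorem~\ref{th:normality} applies and gives $\sigma_n^{-1}\bigl(\hat\tau_n(z,\omega)-\tau_n\bigr)\xrightarrow{d}\mathcal N(0,1)$ with $\sigma_n^2=\mathbb V[\hat\tau_n(z,\omega)]$. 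I would also record the identification that the ``population variance'' \eqref{eq:population-variance} coincides with this $\sigma_n^2$: since $\mathbb E[\zeta_i]=0$ by Theorem~\ref{th:unbiasedness} and $\zeta_i$ depends on $\omega$ only through $(\tilde y_i,\psi_i)$, every pair $(i,j)\notin\mathcal E_n$ contributes a vanishing cross term to $\mathbb V\bigl[\sum_i\nu_{ni}\zeta_i\bigr]$, so $\sigma_n^2=\sum_{(i,j)\in\mathcal E_n}\nu_{ni}\nu_{nj}\mathbb E[\zeta_i\zeta_j]=\mathbb V[\hat\tau_n(z,\omega)]$, which makes the two theorems refer to the same quantity.

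Next I would convert the additive bound of Theorem~\ref{th:consistency-variance3} into a multiplicative one. Under Assumption~\ref{as:mns} with $d=1/4$ we have $D_n^{3/2}=o(n^{3/8})$, so $n\tilde\sigma_n^2-n\sigma_n^2=O_p(n^{-1/2}D_n^{3/2})=o_p(n^{-1/8})=o_p(1)$ (indeed the difference converges to zero in mean square, since $d=1/4$ is stronger than the $d=1/3$ used there). Writing
\[
\frac{\tilde\sigma_n^2}{\sigma_n^2}=1+\frac{n\tilde\sigma_n^2-n\sigma_n^2}{n\sigma_n^2},
\]
Assumption~\ref{as:lowerbound-sigma} gives $n\sigma_n^2\ge\sigma_0^2>0$, hence the second summand is $o_p(1)$ and $\tilde\sigma_n^2/\sigma_n^2\xrightarrow{p}1$. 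In particular $\tilde\sigma_n^2>0$ with probability tending to one, so $\tilde\sigma_n$ is well-defined on that event and $\sigma_n/\tilde\sigma_n\xrightarrow{p}1$.

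Finally I would decompose
\[
\tilde\sigma_n^{-1}\bigl(\hat\tau_n(z,\omega)-\tau_n\bigr)=\frac{\sigma_n}{\tilde\sigma_n}\cdot\sigma_n^{-1}\bigl(\hat\tau_n(z,\omega)-\tau_n\bigr),
\]
and invoke Slutsky's theorem: the first factor tends to $1$ in probability and the second tends to $\mathcal N(0,1)$ in distribution, so the product tends to $\mathcal N(0,1)$, which is the assertion. There is no genuine obstacle here; the only points requiring a little care are the assumption bookkeeping among the three cited results, the fact that $\tilde\sigma_n^2=\nu_n'\hat\Sigma_n^d\nu_n$ need not be nonnegative in finite samples (handled by the ``with probability tending to one'' clause, which suffices for a convergence-in-distribution conclusion), and confirming that the sharper rate condition $d=1/4$ inherited from Theorem~\ref{th:normality} already forces the variance error to be $o_p(1)$.
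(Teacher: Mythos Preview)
Your proposal is correct and follows essentially the same route as the paper: the paper's proof of Corollary~\ref{co:normality3} simply points to the argument for Corollary~\ref{co:normality}, which writes $\hat\sigma_n^{-1}(\hat\tau_n-\tau_n)$ as the product of $\sigma_n^{-1}(\hat\tau_n-\tau_n)$ and $\sigma_n/\hat\sigma_n$ and applies Slutsky using Theorem~\ref{th:normality} and the relevant variance-consistency theorem together with Assumption~\ref{as:lowerbound-sigma}. Your additional remarks on assumption bookkeeping, the identification of $\sigma_n^2$ with \eqref{eq:population-variance}, and the finite-sample nonnegativity of $\tilde\sigma_n^2$ are sensible elaborations that the paper leaves implicit.
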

Note that consistency is preserved even if some independent pairs are mistakenly included in $\tilde{\mathcal{E}}_n$, provided that the total number of dependent pairs, including both genuinely dependent and erroneously included independent pairs, remains within the sparsity condition $D_n = o(n^{1/4})$.

In the special case where all units are believed to be mutually independent,
\emph{i.e.}, each \( \zeta_i \) is independent of every other \( \zeta_j \),
\( \tilde{\mathcal{E}}_n \) consists only of diagonal pairs \( (i,i) \).
Then \( \hat{\Sigma}_n^d \) reduces to a diagonal matrix, and the variance estimator and the true variances simplify to
$\tilde{\sigma}_n^2 = \sum_{i=1}^n \nu_{ni}^2 \zeta_i^2$, and
$\sigma_n^2 = \sum_{i=1}^n \nu_{ni}^2 \mathbb{E}[\zeta_i^2]$,
respectively.
This recovers the classical variance estimator and asymptotic normality result under independence as a special case of the locally dependent framework considered here.

From a functional perspective, the target variance $\sigma_n^2$ can be viewed as a continuous linear functional of second-order moments, such as $\Sigma_n$ or the covariances $\operatorname{Cov}(\zeta_i,\zeta_j)$.
Because these moments are not directly observable from a single experimental realisation, the estimator replaces them with observed products $\zeta_i\zeta_j$, yielding the plug-in form in \eqref{eq:plugin-vestimator}.
Consistency follows from structural assumptions on the dependence neighbourhood and boundedness of higher-order terms.

Taken together, the preceding results establish consistent estimation, asymptotic normality, and feasible variance estimation for the aggregate Riesz estimator under local dependence.

\section{Simulation Study}\label{sec:simulation}

This section evaluates the finite-sample behaviour of the aggregate Riesz estimator and the proposed variance estimators under local dependence.
The simulations examine empirical coverage, rejection frequencies, and power under increasing dependence neighbourhood sizes.
Two data-generating mechanisms are considered, a baseline model without spillovers and a stochastic interference model with random exposure effects.
Inference is conducted using the aggregate Riesz estimator together with the variance estimators developed in Section~\ref{sec:variance}.

\comm{Data-generating process.}

We consider two data-generating mechanisms.
Throughout, treatment assignment is generated independently as $z_i\sim\mathrm{Bernoulli}(0.5)$ for each unit $i\in\{1,\dots,n\}$.
Unit-level covariates and parameters are drawn independently according to
\begin{equation}
x_i \stackrel{\text{i.i.d.}}{\sim} \mathcal{N}(0,1),
\qquad
\alpha_i \stackrel{\text{i.i.d.}}{\sim} \mathcal{N}(0,1),
\qquad
\beta_i \stackrel{\text{i.i.d.}}{\sim} \mathcal{N}(1,1).
\end{equation}
Inference is conducted under a sharp-null formulation that specifies unit-level contrasts.
In the size experiments, the centring term is taken to be the true unit-level contrast $\theta_i(\tilde y_i)$.
In the power experiments, we instead impose the null specification $\theta_i(\tilde y_i)\equiv 0$ and evaluate rejection frequencies under this centring.

Local dependence is introduced by partitioning the sample into blocks, where units in the same block share a common latent shock. Specifically, the number of blocks is $B_n = \lfloor n^{1 - d} \rfloor$ for a fixed parameter $d \in \{0, 0.1, 0.2, 0.25, 0.3\}$. The error for unit $i$ is
\begin{equation}
\varepsilon_i = \gamma_i \, \eta_{b(i)} + \nu_i,
\end{equation}
where $b(i)$ denotes the block index to which unit $i$ belongs, $\eta_{b(i)} \sim \mathcal{N}(0, 1)$ is the shared block-level shock, $\nu_i \sim \mathcal{N}(0,1)$ is an idiosyncratic noise term, and $\gamma_i \in \{-1,1\}$ is a random sign (independent Rademacher) to allow both positive and negative correlations between units in the same block.
This induces within-block dependence and across-block independence.

\comm{first case}

In the first case, potential outcomes are generated via
\begin{equation}
\tilde y_i(z,\omega)
=
\alpha_i + \beta_i z_i + \delta_i x_i(\omega) + \varepsilon_i(\omega),
\end{equation}
where $\delta_i$ are drawn \emph{i.i.d.}\ from $\mathcal{N}(0,1)$,
$\varepsilon_i$ is an outcome-level noise term exhibiting local dependence described below.
In this specification, the individual treatment effect satisfies
$\theta_i(\tilde y_i) = \beta_i$,
so that the finite-population FPO and RPO estimands coincide.

\comm{second case}

As an additional illustration, we also consider a stochastic interference design with random exposure effects.
\begin{example}[Network Intervention with Stochastic Spillovers]\label{ex:network-spillover}
Let $G_n(\omega)$ be a random graph on vertex set $\{1,\dots,n\}$, where $\omega$ collects the random edges and possibly additional latent shocks.
Write $A(\omega)$ for its adjacency matrix and assume that $A_{ii}(\omega)=1$ for all $i$.
Let $N_i(\omega)=\{j:A_{ij}(\omega)=1\}$ denote the realised neighbourhood of unit $i$, so that $i\in N_i(\omega)$ and $|N_i(\omega)|\ge 1$.

Consider a treatment assignment vector $z\in\{0,1\}^n$ drawn from a known randomisation design, and define the realised exposure
\begin{equation}\label{eq:exposure}
e_i(z,\omega)=\frac{1}{|N_i(\omega)|}\sum_{j\in N_i(\omega)} z_j,
\end{equation}
that is, the fraction of treated units in the closed neighbourhood of unit $i$, including unit $i$ itself.
A simple stochastic spillover model is
\begin{equation}\label{eq:network-dgp}
\tilde y_i(z,\omega)=\alpha_i+\beta_i z_i+\gamma_i \, e_i(z,\omega)+\varepsilon_i(\omega),
\end{equation}
where $\varepsilon_i(\omega)$ represent idiosyncratic outcome-level randomness, and the spillover term depends on the realised graph through $e_i(z,\omega)$.
\end{example}

Local dependence is introduced through the same block structure as in the baseline design.
A random graph $G_n(\omega)$ is generated independently within each block.
For any pair of distinct units $i$ and $j$ belonging to the same block, an undirected edge between $i$ and $j$ is formed independently with probability $p_{\mathrm{edge}}=0.3$.
No edges are formed between units belonging to different blocks.
$A_{ij}(\omega)=1$ if an edge is present between units $i$ and $j$, and $A_{ij}(\omega)=0$ otherwise.
By construction, $A_{ii}(\omega)=0$ for all $i$.

Given a treatment assignment vector $z\in\{0,1\}^n$, we obtain the realised neighbourhood $N_i(\omega)$ and then compute the realised exposure $e_i(z,\omega)$ according to \eqref{eq:exposure}.
Potential outcomes are generated according to \eqref{eq:network-dgp},
with $\gamma_i = 0.5$ controlling the strength of spillover effects.
Unlike the classical FPO framework, the interference structure itself is stochastic through the latent environment $\omega$, since both the realised neighbourhoods and exposure mappings depend on the random graph $G_n(\omega)$.

In this network setting, the FPO estimand is
\begin{equation}
\tau_n^{\mathrm{FPO}}
= \frac{1}{n} \sum_{i=1}^n \left[
\beta_i + \gamma_i \bigl\{ e_i(z^1,\omega) - e_i(z^0,\omega) \bigr\} \right]
= \frac{1}{n} \sum_{i=1}^n \left[
\beta_i + \frac{\gamma_i}{|N_i(\omega)|} \right]
,
\end{equation}
and the corresponding RPO estimand is
\begin{equation}
\tau_n^{\mathrm{RPO}}
= \frac{1}{n} \sum_{i=1}^n \left[
\beta_i
+
\gamma_i\,\mathbb{E}_\omega\!\bigl\{  e_i(z^1,\omega) - e_i(z^0,\omega) \bigr\} \right]
= \frac{1}{n} \sum_{i=1}^n \left[
\beta_i + \gamma_i \frac{1-(1-p_{\mathrm{edge}})^{m_i}}{m_i\, p_{\mathrm{edge}}} \right],
\end{equation}
where the final equality follows from Proposition~\ref{pp:inv-degree} in Appendix~\ref{ap:proof}
and $m_i$ is the block size.

\comm{Estimation.}

Proposition~\ref{pp:example12} establishes that, under the two-treatment Bernoulli design, the relevant treatment-effect functional is linear and continuous on the outcome model space.
The proof also gives the corresponding Riesz representer, which takes the Horvitz--Thompson form
\begin{equation}
\psi_i(z,\omega_0)
=
\frac{z}{\mu_1}
-
\frac{1-z}{\mu_0},
\end{equation}
where $\mu_1=\mathbb{E}[z_i]=0.5$ and $\mu_0=1-\mu_1$.
The simulations therefore do not involve auxiliary estimation of $\psi_i$.

The estimator is
$\hat{\tau}_n = \frac{1}{n} \sum_{i=1}^n Y_i \psi_i(z_i, \omega_0)$,
and the variance estimator follows \eqref{eq:plugin-vestimator}
with $\zeta_i = Y_i \psi_i(z_i) - \theta_i(\tilde{y}_i)$ and $\nu_{ni} = 1/n$.
That is, we sum pairwise products of residuals over all unit pairs within the same block.

In this simulation setting, the variance estimators \( \hat{\sigma}_{cn}^2 \) and \( \tilde{\sigma}_{n}^2 \) coincide with the local-dependence estimator \( \hat{\sigma}_n^2 \).
This is because the dependency structure is fully captured by shared block-level shocks, and all non-zero cross-moments correspond to block-sharing units. Therefore, all simulation results reflect the performance of both estimators. In general, however, the correlation-based estimator yields improved efficiency in settings where weakly dependent pairs can be excluded without loss of accuracy.

\comm{Simulation design.}

We consider sample sizes $n \in \{100, 200, 500, 1000\}$ and dependency growth parameters $d \in \{0.0, 0.1, 0.2, 0.25, 0.3\}$,
where the largest value $d = 0.3$ lies outside the range permitted by Assumption~\ref{as:mns} $(d \le 1/4)$ in Theorem~\ref{th:normality}. 
For each $(n,d)$ pair, we run $2000$ independent replications. 
For each replication, we compute the estimator $\hat{\tau}_n$ and the standard error $\hat{\sigma}_n$, and conduct inference under two specifications of the centring term, the correctly specified sharp null $\theta_i(\tilde y_i)$ and the misspecified null $\theta_i(\tilde y_i)\equiv 0$. 
We record the following metrics:
\begin{itemize}
\item Empirical coverage of 95\% confidence intervals, defined as the fraction of replications satisfying $|\hat{\tau}_n - \tau_n| \leq 1.96 \hat{\sigma}_n$, under the correct specification;
\item Rejection rates at the 1\%, 5\%, and 10\% levels for both specifications.
\end{itemize}

\comm{Coverage.}

Figures~\ref{fig:coverage1} and~\ref{fig:coverage2} report the empirical coverage of nominal 95\% confidence intervals across different dependency growth rates $d$ and sample sizes $n$.

\begin{figure}[h!]
\centering
\begin{tikzpicture}[x=1pt,y=1pt]
\definecolor{fillColor}{RGB}{255,255,255}
\path[use as bounding box,fill=fillColor,fill opacity=0.00] (0,0) rectangle (419.17,231.26);
\begin{scope}
\path[clip] (  0.00,  0.00) rectangle (419.17,231.26);
\definecolor{fillColor}{RGB}{255,255,255}

\path[fill=fillColor] ( -0.00,  0.00) rectangle (419.17,231.26);
\end{scope}
\begin{scope}
\path[clip] ( 30.59, 42.44) rectangle (409.17,184.81);
\definecolor{drawColor}{gray}{0.92}

\path[draw=drawColor,line width= 0.3pt,line join=round] ( 30.59, 70.48) --
	(409.17, 70.48);

\path[draw=drawColor,line width= 0.3pt,line join=round] ( 30.59,113.62) --
	(409.17,113.62);

\path[draw=drawColor,line width= 0.3pt,line join=round] ( 30.59,156.77) --
	(409.17,156.77);

\path[draw=drawColor,line width= 0.6pt,line join=round] ( 30.59, 48.91) --
	(409.17, 48.91);

\path[draw=drawColor,line width= 0.6pt,line join=round] ( 30.59, 92.05) --
	(409.17, 92.05);

\path[draw=drawColor,line width= 0.6pt,line join=round] ( 30.59,135.19) --
	(409.17,135.19);

\path[draw=drawColor,line width= 0.6pt,line join=round] ( 30.59,178.34) --
	(409.17,178.34);

\path[draw=drawColor,line width= 0.6pt,line join=round] ( 74.27, 42.44) --
	( 74.27,184.81);

\path[draw=drawColor,line width= 0.6pt,line join=round] (147.08, 42.44) --
	(147.08,184.81);

\path[draw=drawColor,line width= 0.6pt,line join=round] (219.88, 42.44) --
	(219.88,184.81);

\path[draw=drawColor,line width= 0.6pt,line join=round] (292.68, 42.44) --
	(292.68,184.81);

\path[draw=drawColor,line width= 0.6pt,line join=round] (365.48, 42.44) --
	(365.48,184.81);
\definecolor{drawColor}{RGB}{0,0,0}
\definecolor{fillColor}{gray}{0.85}

\path[draw=drawColor,line width= 0.6pt,fill=fillColor] ( 49.70,-1935.70) rectangle ( 60.62,125.49);

\path[draw=drawColor,line width= 0.6pt,fill=fillColor] (122.50,-1935.70) rectangle (133.42,129.80);

\path[draw=drawColor,line width= 0.6pt,fill=fillColor] (195.31,-1935.70) rectangle (206.23,135.19);

\path[draw=drawColor,line width= 0.6pt,fill=fillColor] (268.11,-1935.70) rectangle (279.03,131.96);

\path[draw=drawColor,line width= 0.6pt,fill=fillColor] (340.91,-1935.70) rectangle (351.83,114.70);
\definecolor{fillColor}{gray}{0.65}

\path[draw=drawColor,line width= 0.6pt,fill=fillColor] ( 62.44,-1935.70) rectangle ( 73.36,136.27);

\path[draw=drawColor,line width= 0.6pt,fill=fillColor] (135.24,-1935.70) rectangle (146.17,143.82);

\path[draw=drawColor,line width= 0.6pt,fill=fillColor] (208.05,-1935.70) rectangle (218.97,121.17);

\path[draw=drawColor,line width= 0.6pt,fill=fillColor] (280.85,-1935.70) rectangle (291.77,130.88);

\path[draw=drawColor,line width= 0.6pt,fill=fillColor] (353.65,-1935.70) rectangle (364.57,131.96);
\definecolor{fillColor}{gray}{0.45}

\path[draw=drawColor,line width= 0.6pt,fill=fillColor] ( 75.18,-1935.70) rectangle ( 86.10,131.96);

\path[draw=drawColor,line width= 0.6pt,fill=fillColor] (147.99,-1935.70) rectangle (158.91,133.04);

\path[draw=drawColor,line width= 0.6pt,fill=fillColor] (220.79,-1935.70) rectangle (231.71,122.25);

\path[draw=drawColor,line width= 0.6pt,fill=fillColor] (293.59,-1935.70) rectangle (304.51,124.41);

\path[draw=drawColor,line width= 0.6pt,fill=fillColor] (366.39,-1935.70) rectangle (377.31,135.19);
\definecolor{fillColor}{gray}{0.25}

\path[draw=drawColor,line width= 0.6pt,fill=fillColor] ( 87.92,-1935.70) rectangle ( 98.84,117.94);

\path[draw=drawColor,line width= 0.6pt,fill=fillColor] (160.73,-1935.70) rectangle (171.65,127.64);

\path[draw=drawColor,line width= 0.6pt,fill=fillColor] (233.53,-1935.70) rectangle (244.45,127.64);

\path[draw=drawColor,line width= 0.6pt,fill=fillColor] (306.33,-1935.70) rectangle (317.25,130.88);

\path[draw=drawColor,line width= 0.6pt,fill=fillColor] (379.13,-1935.70) rectangle (390.06,143.82);

\path[draw=drawColor,line width= 0.6pt,dash pattern=on 4pt off 4pt ,line join=round] ( 30.59,113.62) -- (409.17,113.62);
\end{scope}
\begin{scope}
\path[clip] (  0.00,  0.00) rectangle (419.17,231.26);
\definecolor{drawColor}{gray}{0.30}

\node[text=drawColor,anchor=base east,inner sep=0pt, outer sep=0pt, scale=  0.88] at ( 25.64, 45.88) {0.92};

\node[text=drawColor,anchor=base east,inner sep=0pt, outer sep=0pt, scale=  0.88] at ( 25.64, 89.02) {0.94};

\node[text=drawColor,anchor=base east,inner sep=0pt, outer sep=0pt, scale=  0.88] at ( 25.64,132.16) {0.96};

\node[text=drawColor,anchor=base east,inner sep=0pt, outer sep=0pt, scale=  0.88] at ( 25.64,175.31) {0.98};
\end{scope}
\begin{scope}
\path[clip] (  0.00,  0.00) rectangle (419.17,231.26);
\definecolor{drawColor}{gray}{0.30}

\node[text=drawColor,anchor=base,inner sep=0pt, outer sep=0pt, scale=  0.88] at ( 74.27, 31.43) {0};

\node[text=drawColor,anchor=base,inner sep=0pt, outer sep=0pt, scale=  0.88] at (147.08, 31.43) {0.1};

\node[text=drawColor,anchor=base,inner sep=0pt, outer sep=0pt, scale=  0.88] at (219.88, 31.43) {0.2};

\node[text=drawColor,anchor=base,inner sep=0pt, outer sep=0pt, scale=  0.88] at (292.68, 31.43) {0.25};

\node[text=drawColor,anchor=base,inner sep=0pt, outer sep=0pt, scale=  0.88] at (365.48, 31.43) {0.3};
\end{scope}
\begin{scope}
\path[clip] (  0.00,  0.00) rectangle (419.17,231.26);
\definecolor{drawColor}{RGB}{0,0,0}

\node[text=drawColor,anchor=base,inner sep=0pt, outer sep=0pt, scale=  1.10] at (219.88, 12.14) {Dependency growth rate $d$};
\end{scope}
\begin{scope}
\path[clip] (  0.00,  0.00) rectangle (419.17,231.26);
\definecolor{drawColor}{RGB}{0,0,0}

\node[text=drawColor,anchor=base west,inner sep=0pt, outer sep=0pt, scale=  1.00] at (132.10,205.09) {$n=$};
\end{scope}
\begin{scope}
\path[clip] (  0.00,  0.00) rectangle (419.17,231.26);
\definecolor{drawColor}{RGB}{0,0,0}
\definecolor{fillColor}{gray}{0.85}

\path[draw=drawColor,line width= 0.6pt,fill=fillColor] (154.87,202.02) rectangle (167.90,215.05);
\end{scope}
\begin{scope}
\path[clip] (  0.00,  0.00) rectangle (419.17,231.26);
\definecolor{drawColor}{RGB}{0,0,0}
\definecolor{fillColor}{gray}{0.65}

\path[draw=drawColor,line width= 0.6pt,fill=fillColor] (193.52,202.02) rectangle (206.55,215.05);
\end{scope}
\begin{scope}
\path[clip] (  0.00,  0.00) rectangle (419.17,231.26);
\definecolor{drawColor}{RGB}{0,0,0}
\definecolor{fillColor}{gray}{0.45}

\path[draw=drawColor,line width= 0.6pt,fill=fillColor] (232.17,202.02) rectangle (245.20,215.05);
\end{scope}
\begin{scope}
\path[clip] (  0.00,  0.00) rectangle (419.17,231.26);
\definecolor{drawColor}{RGB}{0,0,0}
\definecolor{fillColor}{gray}{0.25}

\path[draw=drawColor,line width= 0.6pt,fill=fillColor] (270.82,202.02) rectangle (283.85,215.05);
\end{scope}
\begin{scope}
\path[clip] (  0.00,  0.00) rectangle (419.17,231.26);
\definecolor{drawColor}{RGB}{0,0,0}

\node[text=drawColor,anchor=base west,inner sep=0pt, outer sep=0pt, scale=  0.88] at (174.11,205.51) {100};
\end{scope}
\begin{scope}
\path[clip] (  0.00,  0.00) rectangle (419.17,231.26);
\definecolor{drawColor}{RGB}{0,0,0}

\node[text=drawColor,anchor=base west,inner sep=0pt, outer sep=0pt, scale=  0.88] at (212.76,205.51) {200};
\end{scope}
\begin{scope}
\path[clip] (  0.00,  0.00) rectangle (419.17,231.26);
\definecolor{drawColor}{RGB}{0,0,0}

\node[text=drawColor,anchor=base west,inner sep=0pt, outer sep=0pt, scale=  0.88] at (251.41,205.51) {500};
\end{scope}
\begin{scope}
\path[clip] (  0.00,  0.00) rectangle (419.17,231.26);
\definecolor{drawColor}{RGB}{0,0,0}

\node[text=drawColor,anchor=base west,inner sep=0pt, outer sep=0pt, scale=  0.88] at (290.06,205.51) {1000};
\end{scope}
\end{tikzpicture}
\caption{Empirical coverage of 95\% confidence intervals under local dependence (baseline design).}
\label{fig:coverage1}
\end{figure}

\begin{figure}[h!]
\centering
\begin{tikzpicture}[x=1pt,y=1pt]
\definecolor{fillColor}{RGB}{255,255,255}
\path[use as bounding box,fill=fillColor,fill opacity=0.00] (0,0) rectangle (419.17,231.26);
\begin{scope}
\path[clip] (  0.00,  0.00) rectangle (419.17,231.26);
\definecolor{fillColor}{RGB}{255,255,255}

\path[fill=fillColor] ( -0.00,  0.00) rectangle (419.17,231.26);
\end{scope}
\begin{scope}
\path[clip] ( 30.59, 42.44) rectangle (409.17,184.81);
\definecolor{drawColor}{gray}{0.92}

\path[draw=drawColor,line width= 0.3pt,line join=round] ( 30.59, 70.48) --
	(409.17, 70.48);

\path[draw=drawColor,line width= 0.3pt,line join=round] ( 30.59,113.62) --
	(409.17,113.62);

\path[draw=drawColor,line width= 0.3pt,line join=round] ( 30.59,156.77) --
	(409.17,156.77);

\path[draw=drawColor,line width= 0.6pt,line join=round] ( 30.59, 48.91) --
	(409.17, 48.91);

\path[draw=drawColor,line width= 0.6pt,line join=round] ( 30.59, 92.05) --
	(409.17, 92.05);

\path[draw=drawColor,line width= 0.6pt,line join=round] ( 30.59,135.19) --
	(409.17,135.19);

\path[draw=drawColor,line width= 0.6pt,line join=round] ( 30.59,178.34) --
	(409.17,178.34);

\path[draw=drawColor,line width= 0.6pt,line join=round] ( 74.27, 42.44) --
	( 74.27,184.81);

\path[draw=drawColor,line width= 0.6pt,line join=round] (147.08, 42.44) --
	(147.08,184.81);

\path[draw=drawColor,line width= 0.6pt,line join=round] (219.88, 42.44) --
	(219.88,184.81);

\path[draw=drawColor,line width= 0.6pt,line join=round] (292.68, 42.44) --
	(292.68,184.81);

\path[draw=drawColor,line width= 0.6pt,line join=round] (365.48, 42.44) --
	(365.48,184.81);
\definecolor{drawColor}{RGB}{0,0,0}
\definecolor{fillColor}{gray}{0.85}

\path[draw=drawColor,line width= 0.6pt,fill=fillColor] ( 49.70,-1935.70) rectangle ( 60.62,140.59);

\path[draw=drawColor,line width= 0.6pt,fill=fillColor] (122.50,-1935.70) rectangle (133.42,143.82);

\path[draw=drawColor,line width= 0.6pt,fill=fillColor] (195.31,-1935.70) rectangle (206.23,149.22);

\path[draw=drawColor,line width= 0.6pt,fill=fillColor] (268.11,-1935.70) rectangle (279.03,148.14);

\path[draw=drawColor,line width= 0.6pt,fill=fillColor] (340.91,-1935.70) rectangle (351.83,155.69);
\definecolor{fillColor}{gray}{0.65}

\path[draw=drawColor,line width= 0.6pt,fill=fillColor] ( 62.44,-1935.70) rectangle ( 73.36,142.74);

\path[draw=drawColor,line width= 0.6pt,fill=fillColor] (135.24,-1935.70) rectangle (146.17,155.69);

\path[draw=drawColor,line width= 0.6pt,fill=fillColor] (208.05,-1935.70) rectangle (218.97,143.82);

\path[draw=drawColor,line width= 0.6pt,fill=fillColor] (280.85,-1935.70) rectangle (291.77,163.24);

\path[draw=drawColor,line width= 0.6pt,fill=fillColor] (353.65,-1935.70) rectangle (364.57,154.61);
\definecolor{fillColor}{gray}{0.45}

\path[draw=drawColor,line width= 0.6pt,fill=fillColor] ( 75.18,-1935.70) rectangle ( 86.10,147.06);

\path[draw=drawColor,line width= 0.6pt,fill=fillColor] (147.99,-1935.70) rectangle (158.91,151.37);

\path[draw=drawColor,line width= 0.6pt,fill=fillColor] (220.79,-1935.70) rectangle (231.71,145.98);

\path[draw=drawColor,line width= 0.6pt,fill=fillColor] (293.59,-1935.70) rectangle (304.51,145.98);

\path[draw=drawColor,line width= 0.6pt,fill=fillColor] (366.39,-1935.70) rectangle (377.31,150.30);
\definecolor{fillColor}{gray}{0.25}

\path[draw=drawColor,line width= 0.6pt,fill=fillColor] ( 87.92,-1935.70) rectangle ( 98.84,134.12);

\path[draw=drawColor,line width= 0.6pt,fill=fillColor] (160.73,-1935.70) rectangle (171.65,160.00);

\path[draw=drawColor,line width= 0.6pt,fill=fillColor] (233.53,-1935.70) rectangle (244.45,153.53);

\path[draw=drawColor,line width= 0.6pt,fill=fillColor] (306.33,-1935.70) rectangle (317.25,148.14);

\path[draw=drawColor,line width= 0.6pt,fill=fillColor] (379.13,-1935.70) rectangle (390.06,155.69);

\path[draw=drawColor,line width= 0.6pt,dash pattern=on 4pt off 4pt ,line join=round] ( 30.59,113.62) -- (409.17,113.62);
\end{scope}
\begin{scope}
\path[clip] (  0.00,  0.00) rectangle (419.17,231.26);
\definecolor{drawColor}{gray}{0.30}

\node[text=drawColor,anchor=base east,inner sep=0pt, outer sep=0pt, scale=  0.88] at ( 25.64, 45.88) {0.92};

\node[text=drawColor,anchor=base east,inner sep=0pt, outer sep=0pt, scale=  0.88] at ( 25.64, 89.02) {0.94};

\node[text=drawColor,anchor=base east,inner sep=0pt, outer sep=0pt, scale=  0.88] at ( 25.64,132.16) {0.96};

\node[text=drawColor,anchor=base east,inner sep=0pt, outer sep=0pt, scale=  0.88] at ( 25.64,175.31) {0.98};
\end{scope}
\begin{scope}
\path[clip] (  0.00,  0.00) rectangle (419.17,231.26);
\definecolor{drawColor}{gray}{0.30}

\node[text=drawColor,anchor=base,inner sep=0pt, outer sep=0pt, scale=  0.88] at ( 74.27, 31.43) {0};

\node[text=drawColor,anchor=base,inner sep=0pt, outer sep=0pt, scale=  0.88] at (147.08, 31.43) {0.1};

\node[text=drawColor,anchor=base,inner sep=0pt, outer sep=0pt, scale=  0.88] at (219.88, 31.43) {0.2};

\node[text=drawColor,anchor=base,inner sep=0pt, outer sep=0pt, scale=  0.88] at (292.68, 31.43) {0.25};

\node[text=drawColor,anchor=base,inner sep=0pt, outer sep=0pt, scale=  0.88] at (365.48, 31.43) {0.3};
\end{scope}
\begin{scope}
\path[clip] (  0.00,  0.00) rectangle (419.17,231.26);
\definecolor{drawColor}{RGB}{0,0,0}

\node[text=drawColor,anchor=base,inner sep=0pt, outer sep=0pt, scale=  1.10] at (219.88, 12.14) {Dependency growth rate $d$};
\end{scope}
\begin{scope}
\path[clip] (  0.00,  0.00) rectangle (419.17,231.26);
\definecolor{drawColor}{RGB}{0,0,0}

\node[text=drawColor,anchor=base west,inner sep=0pt, outer sep=0pt, scale=  1.00] at (132.10,205.09) {$n=$};
\end{scope}
\begin{scope}
\path[clip] (  0.00,  0.00) rectangle (419.17,231.26);
\definecolor{drawColor}{RGB}{0,0,0}
\definecolor{fillColor}{gray}{0.85}

\path[draw=drawColor,line width= 0.6pt,fill=fillColor] (154.87,202.02) rectangle (167.90,215.05);
\end{scope}
\begin{scope}
\path[clip] (  0.00,  0.00) rectangle (419.17,231.26);
\definecolor{drawColor}{RGB}{0,0,0}
\definecolor{fillColor}{gray}{0.65}

\path[draw=drawColor,line width= 0.6pt,fill=fillColor] (193.52,202.02) rectangle (206.55,215.05);
\end{scope}
\begin{scope}
\path[clip] (  0.00,  0.00) rectangle (419.17,231.26);
\definecolor{drawColor}{RGB}{0,0,0}
\definecolor{fillColor}{gray}{0.45}

\path[draw=drawColor,line width= 0.6pt,fill=fillColor] (232.17,202.02) rectangle (245.20,215.05);
\end{scope}
\begin{scope}
\path[clip] (  0.00,  0.00) rectangle (419.17,231.26);
\definecolor{drawColor}{RGB}{0,0,0}
\definecolor{fillColor}{gray}{0.25}

\path[draw=drawColor,line width= 0.6pt,fill=fillColor] (270.82,202.02) rectangle (283.85,215.05);
\end{scope}
\begin{scope}
\path[clip] (  0.00,  0.00) rectangle (419.17,231.26);
\definecolor{drawColor}{RGB}{0,0,0}

\node[text=drawColor,anchor=base west,inner sep=0pt, outer sep=0pt, scale=  0.88] at (174.11,205.51) {100};
\end{scope}
\begin{scope}
\path[clip] (  0.00,  0.00) rectangle (419.17,231.26);
\definecolor{drawColor}{RGB}{0,0,0}

\node[text=drawColor,anchor=base west,inner sep=0pt, outer sep=0pt, scale=  0.88] at (212.76,205.51) {200};
\end{scope}
\begin{scope}
\path[clip] (  0.00,  0.00) rectangle (419.17,231.26);
\definecolor{drawColor}{RGB}{0,0,0}

\node[text=drawColor,anchor=base west,inner sep=0pt, outer sep=0pt, scale=  0.88] at (251.41,205.51) {500};
\end{scope}
\begin{scope}
\path[clip] (  0.00,  0.00) rectangle (419.17,231.26);
\definecolor{drawColor}{RGB}{0,0,0}

\node[text=drawColor,anchor=base west,inner sep=0pt, outer sep=0pt, scale=  0.88] at (290.06,205.51) {1000};
\end{scope}
\end{tikzpicture}
\caption{Empirical coverage of 95\% confidence intervals under network interference and stochastic spillovers.}
\label{fig:coverage2}
\end{figure}

Figure~\ref{fig:coverage1} corresponds to the baseline design in which the FPO and RPO estimands coincide.
Across all configurations, empirical coverage is slightly above the nominal $0.95$, indicating mild finite-sample conservativeness.
For small samples ($n\in \{100,200\}$), coverage shows no clear monotone pattern in $d$, consistent with higher sampling variability.
For larger $n$, especially $n=1000$, coverage increases gradually with $d$, reflecting the growing conservativeness of the variance estimator as dependence neighbourhoods expand.
Within the theoretical regime $d\le 1/4$ (Theorem~\ref{th:normality}), coverage remains close to nominal and uniformly satisfactory.

Figure~\ref{fig:coverage2} reports coverage under network interference.
Coverage is uniformly above 95\%, indicating stronger conservativeness than in the baseline case.
This persists even at $d=0$, suggesting that interference through the exposure mapping $e_i(z,\omega)$ affects calibration beyond what is captured by neighbourhood growth alone.
When dependence exceeds the theoretical range ($d=0.3$), coverage approaches 97\%, consistent with a gradual shift towards conservativeness as assumptions are violated.
Overall, coverage remains stable and well controlled, supporting the feasibility of design-based inference for $\tau_n$ under network interference.

\comm{Test size and power}

Tables~\ref{tab:rejection1} and~\ref{tab:rejection2} report empirical rejection frequencies at the 1\%, 5\%, and 10\% nominal levels for the size and power experiments, across all combinations of sample size $n$ and dependency growth rate $d$, for the two data-generating mechanisms described above.

\begin{table}[!ht]
\centering
\caption{ Empirical rejection rates under local dependence (baseline design).
Results are based on 2000 Monte Carlo replications for each $(n,d)$ configuration.}
\label{tab:rejection1}

\begin{tabular}{cc cccc cccc}
\toprule
& & \multicolumn{4}{c}{\textbf{Size experiment}} 
  & \multicolumn{4}{c}{\textbf{Power experiment}} \\
\cmidrule(lr){3-6}\cmidrule(lr){7-10}
\textbf{level} & \textbf{$d$} 
& $n=100$ & $200$ & $500$ & $1000$
& $n=100$ & $200$ & $500$ & $1000$ \\
\midrule

\multirow{5}{*}{1\%}
& 0.00 & 0.0095 & 0.0090 & 0.0070 & 0.0085 & 0.340 & 0.730 & 0.997 & 1.000 \\
& 0.10 & 0.0090 & 0.0065 & 0.0085 & 0.0110 & 0.335 & 0.741 & 0.994 & 1.000 \\
& 0.20 & 0.0050 & 0.0065 & 0.0070 & 0.0075 & 0.314 & 0.705 & 0.991 & 1.000 \\
& 0.25 & 0.0050 & 0.0095 & 0.0060 & 0.0050 & 0.285 & 0.710 & 0.990 & 1.000 \\
& 0.30 & 0.0060 & 0.0065 & 0.0055 & 0.0055 & 0.268 & 0.690 & 0.992 & 1.000 \\

\midrule
\multirow{5}{*}{5\%}
& 0.00 & 0.0445 & 0.0395 & 0.0415 & 0.0480 & 0.619 & 0.894 & 1.000 & 1.000 \\
& 0.10 & 0.0425 & 0.0360 & 0.0410 & 0.0435 & 0.618 & 0.902 & 1.000 & 1.000 \\
& 0.20 & 0.0400 & 0.0465 & 0.0460 & 0.0435 & 0.600 & 0.892 & 0.999 & 1.000 \\
& 0.25 & 0.0415 & 0.0420 & 0.0450 & 0.0420 & 0.597 & 0.894 & 0.998 & 1.000 \\
& 0.30 & 0.0495 & 0.0415 & 0.0400 & 0.0360 & 0.572 & 0.885 & 0.998 & 1.000 \\

\midrule
\multirow{5}{*}{10\%}
& 0.00 & 0.0865 & 0.0905 & 0.0890 & 0.0935 & 0.744 & 0.942 & 1.000 & 1.000 \\
& 0.10 & 0.0975 & 0.0875 & 0.0845 & 0.0875 & 0.740 & 0.948 & 1.000 & 1.000 \\
& 0.20 & 0.0860 & 0.1040 & 0.0890 & 0.0850 & 0.714 & 0.938 & 1.000 & 1.000 \\
& 0.25 & 0.0950 & 0.0960 & 0.0950 & 0.0905 & 0.732 & 0.942 & 1.000 & 1.000 \\
& 0.30 & 0.0980 & 0.0905 & 0.0940 & 0.0870 & 0.719 & 0.944 & 1.000 & 1.000 \\

\bottomrule
\end{tabular}
\end{table}

\begin{table}[htbp]
\centering
\caption{Empirical rejection rates under network interference with stochastic spillovers.
Results are based on 2000 Monte Carlo replications for each $(n,d)$ configuration.}
\label{tab:rejection2}
\begin{tabular}{cc cccc cccc}
\toprule
& & \multicolumn{4}{c}{\textbf{Size experiment}} 
  & \multicolumn{4}{c}{\textbf{Power experiment}} \\
\cmidrule(lr){3-6}\cmidrule(lr){7-10}
\textbf{level} & \textbf{$d$} 
& $n=100$ & $200$ & $500$ & $1000$
& $n=100$ & $200$ & $500$ & $1000$ \\
\midrule

\multirow{5}{*}{1\%}
& 0.00 & 0.0070 & 0.0060 & 0.0030 & 0.0090 & 0.872 & 0.997 & 1.000 & 1.000 \\
& 0.10 & 0.0055 & 0.0055 & 0.0070 & 0.0055 & 0.781 & 0.992 & 1.000 & 1.000 \\
& 0.20 & 0.0040 & 0.0035 & 0.0050 & 0.0030 & 0.707 & 0.979 & 1.000 & 1.000 \\
& 0.25 & 0.0020 & 0.0045 & 0.0045 & 0.0060 & 0.621 & 0.977 & 1.000 & 1.000 \\
& 0.30 & 0.0025 & 0.0040 & 0.0025 & 0.0035 & 0.638 & 0.954 & 1.000 & 1.000 \\

\midrule
\multirow{5}{*}{5\%}
& 0.00 & 0.0375 & 0.0365 & 0.0345 & 0.0405 & 0.968 & 1.000 & 1.000 & 1.000 \\
& 0.10 & 0.0360 & 0.0305 & 0.0325 & 0.0285 & 0.938 & 0.999 & 1.000 & 1.000 \\
& 0.20 & 0.0335 & 0.0360 & 0.0350 & 0.0315 & 0.917 & 0.998 & 1.000 & 1.000 \\
& 0.25 & 0.0340 & 0.0270 & 0.0350 & 0.0340 & 0.875 & 0.998 & 1.000 & 1.000 \\
& 0.30 & 0.0305 & 0.0310 & 0.0330 & 0.0305 & 0.886 & 0.996 & 1.000 & 1.000 \\

\midrule
\multirow{5}{*}{10\%}
& 0.00 & 0.0760 & 0.0770 & 0.0730 & 0.0770 & 0.988 & 1.000 & 1.000 & 1.000 \\
& 0.10 & 0.0725 & 0.0805 & 0.0800 & 0.0710 & 0.974 & 1.000 & 1.000 & 1.000 \\
& 0.20 & 0.0725 & 0.0820 & 0.0780 & 0.0715 & 0.960 & 1.000 & 1.000 & 1.000 \\
& 0.25 & 0.0850 & 0.0680 & 0.0720 & 0.0720 & 0.941 & 1.000 & 1.000 & 1.000 \\
& 0.30 & 0.0735 & 0.0790 & 0.0720 & 0.0780 & 0.949 & 1.000 & 1.000 & 1.000 \\

\bottomrule
\end{tabular}
\end{table}

In the baseline setting, empirical rejection frequencies in the size experiment remain close to their nominal levels across all sample sizes and dependency regimes.
At the 5\% level, rejection rates typically lie between approximately 4\% and 5\%, with similar behaviour observed at the 1\% and 10\% levels, and no systematic drift as $n$ increases.
These results indicate that the proposed procedure maintains appropriate size control under local dependence when the conditions of Corollary~\ref{co:normality} are satisfied.

Across most configurations, rejection frequencies are slightly below nominal levels.
This behaviour is consistent with the coverage results in Figure~\ref{fig:coverage1} and suggests that conservativeness is driven primarily by variance estimation rather than estimator bias.
When the dependency growth rate exceeds the theoretical threshold in Assumption~\ref{as:mns}, mild over-rejection emerges, most notably at $d=0.3$, consistent with the requirement that dependence neighbourhoods grow sufficiently slowly.

In the power experiment, rejection frequencies increase monotonically with sample size.
For moderate $n$, power remains substantial even under stronger dependence, and for $n\ge 500$ rejection rates are essentially one across all significance levels, confirming that the procedure achieves high power without compromising size control within the theoretical regime.

In the network-interference setting, size performance remains stable and close to nominal levels across configurations.
As in the baseline design, slight under-rejection is observed, mirroring the coverage results in Figure~\ref{fig:coverage2}.
Notably, stochastic spillovers and outcome-level randomness do not introduce additional size distortions relative to the baseline case.

Power properties remain strong under network interference.
Rejection frequencies increase rapidly with $n$, and for $n\ge 200$ power is close to one at conventional significance levels.
As in the baseline design, power decreases modestly with stronger dependence for small $n$, reflecting information loss due to local dependence.
Overall, the results demonstrate that the proposed procedure retains high power while maintaining accurate size control when targeting $\tau_n$ under network interference.

\comm{overall}

Overall, the simulation results are consistent with the asymptotic theory developed in Sections~\ref{sec:asymptotic} and~\ref{sec:variance}.
Within the dependence regimes covered by the theoretical assumptions, the aggregate Riesz estimator maintains stable empirical coverage and accurate size control.
Power increases with sample size across all dependence configurations.
The results also indicate that the proposed variance estimators remain stable under moderate local dependence.

\section{Conclusion}\label{sec:conclusion}

Absent suitable probabilistic structure, a single realised experiment does not contain sufficient information to recover expectation-based causal estimands defined over latent stochastic environments.

This paper studies conditions under which such recoverability becomes possible within a design-based framework under RPO.
Potential outcomes are modelled as random functions of a latent stochastic environment, and causal estimands are defined as expectations over the induced stochastic mechanism.
Treatment-effect functionals admitting continuous linear representations are identified through the randomisation design via their associated Riesz representers.
Under suitable local dependence and neighbourhood growth conditions, cross-sectional averaging asymptotically substitutes for averaging over the latent stochastic environment.
This yields consistency and asymptotic normality of the aggregate Riesz estimator for $\tau_n$ from a single realised experiment.

The paper further establishes feasible variance estimation under local dependence.
The proposed variance estimators are consistent for the true sampling variance and therefore permit asymptotically valid inference without relying on conservative finite-population variance bounds.

The simulation results support the asymptotic theory.
Within the theoretical range of neighbourhood growth, the proposed procedure exhibits stable empirical coverage and accurate size control under local dependence.

The framework developed in this paper is intended for randomised experiments under structured stochastic dependence.
More broadly, the paper reframes design-based inference under RPO as a recoverability problem for mechanism-level causal quantities, where suitable probabilistic structure induced by the experimental design permits asymptotically valid inference from a single realised experiment.

\comm{citation not in text}
\nocite{riesz1907}

\comm{citations in Appendxi}
\nocite{chen2007sieve, chen_pouzo2012moment, van2000asymptotic}
\nocite{adams2003sobolev}
\nocite{billingsley1999convergence}
\nocite{Chernozhukov02122025}

\subsection*{Acknowledgements}

I am grateful to Fredrik Sävje for his valuable discussions and support.

\subsection*{Data Availability Statement}

No external or empirical datasets were used in this study.
All data arise from simulations conducted by the author.
The complete codebase, including replication scripts and plotting routines, is publicly available at
\href{https://github.com/yukai-yang/RieszRE_Experiments}{\url{github.com/yukai-yang/RieszRE_Experiments}} under the MIT license.

\bibliographystyle{chicago}
\bibliography{references}

\newpage

\appendix

\section{Notation Summary}

\renewcommand{\arraystretch}{1.3}
\begin{table}[h!]
\centering
\caption{Summary of notation used in the paper.}
\begin{tabular}{ll}
\toprule
\textbf{Symbol} & \textbf{Description} \\
\midrule
\( n \) & Number of units in the experiment \\
\( z \in \mathcal{Z} \) & Treatment assignment vector \\
\( \mu \) & Probability measure for randomisation \\
\( \omega \in \Omega \) & Latent variable representing outcome-level randomness \\
\( P \) & Probability measure for the latent random variable $\omega$ \\
\( \mathcal{F}_x \) & Sigma-algebra generated by the random variable $x$ \\
\( \tilde{y}_i(z, \omega) \) & Potential outcome of unit \( i \) under treatment \( z \) and latent variable \( \omega \) \\
\( \theta_i(\tilde{y}_i) \) & Linear functional defining the treatment effect for unit \( i \) \\
\( \hat{\theta}_i(z, \omega) \) & Riesz estimator for the treatment effect for unit \( i \) \\
\( \zeta_i \) & The difference \( \hat{\theta}_i(z,\omega) - \theta_i(\tilde{y}_i) \) \\
\( \tau_n \) & Aggregate estimand (e.g., average treatment effect) \\
\( \hat{\tau}_n \) & Aggregate Riesz estimator for \( \tau_n \) \\
\( \psi_i(z, \omega) \) & Riesz representer for unit \( i \) \\
\( \hat{\psi}_i(z, \omega) \) & Estimated Riesz representer (e.g., via basis expansion) \\
\( \mathcal{M}_i \) & Model space for unit \( i \)'s outcome function \\
\( N_i \) & Dependency neighbourhood of unit \( i \) \\
\( D_n \) & Maximum neighbourhood size across units \\
\( d_n \) & Average neighbourhood size across units \\
\( \sigma_n^2 \) & Variance of \( \hat{\tau}_n \) \\
\( \mathcal{E}_n \) & Set of dependent pairs $(i,j)$ for $\zeta_i$ and $\zeta_j$ \\
\( \hat{\sigma}_n^2 \) & Variance estimator of \( \hat{\tau}_n \) using \( \mathcal{E}_n \) \\
\( \mathcal{E}_n^c \) & Set of correlated pairs $(i,j)$ for $\zeta_i$ and $\zeta_j$ \\
\( \hat{\sigma}_{cn}^2 \) & Variance estimator of \( \hat{\tau}_n \) using \( \mathcal{E}_n^c \) \\
\( \tilde{\mathcal{E}}_n \) & Conservative set of index pairs \\
\( \tilde{\sigma}_n^2 \) & Variance estimator of \( \hat{\tau}_n \) using \( \tilde{\mathcal{E}}_n \) \\
\bottomrule
\end{tabular}
\label{tab:notation}
\end{table}

\section{Lemmas and Proofs} \label{ap:proof}

\begin{proof}[Proof of Proposition~\ref{pp:example12}]
The linearity for both the two examples follows directly from the facts that integrals are linear and that we are taking differences of linear functionals.

For continuity, we need to rewrite \eqref{eq:example2} clearly
\begin{equation}
\theta_i(u)
= \iint_{\mathcal{Z}\times\Omega}\left(\frac{\mathbf{1}_A(z)}{\mu(A)} - \frac{\mathbf{1}_B(z)}{\mu(B)}\right)u(z,\omega)\,\mu(\mathrm{d}z)\,P(\mathrm{d}\omega) = \innp{u}{\psi},
\end{equation}
and
\begin{equation}
\psi(z, \omega) = \frac{\mathbf{1}_A(z)}{\mu(A)} - \frac{\mathbf{1}_B(z)}{\mu(B)}.
\end{equation}
We can see that $\psi\in L^2(\mathcal{Z}\times\Omega)$ as
\begin{eqnarray*}
\|\psi\|_{L^2}^2
&=& \int_\Omega\int_\mathcal{Z}\left|\frac{\mathbf{1}_A(z)}{\mu(A)} - \frac{\mathbf{1}_B(z)}{\mu(B)}\right|^2\,\mu(\dif z)\,P(\dif \omega)\\
&=& \int_{\mathcal{Z}}\left|\frac{\mathbf{1}_A(z)}{\mu(A)} - \frac{\mathbf{1}_B(z)}{\mu(B)}\right|^2\,\mu(\mathrm{d}z)
= \frac{1}{\mu(A)}+\frac{1}{\mu(B)} < \infty.
\end{eqnarray*}
Now continuity is immediate due to the Cauchy--Schwarz inequality 
\begin{equation*}
|\theta_i(u)| = |\langle u,\psi\rangle| \leq \norm{u} \, \norm{\psi}.
\end{equation*}
\end{proof}

\begin{proof}[Proof of Proposition~\ref{pp:point_derivative}]
Linearity follows from the linearity of differentiation and expectation.

Since $s>1+\frac d2$, the Sobolev embedding theorem yields a continuous embedding
$H^s(\mathcal Z)\hookrightarrow C^1(\overline{\mathcal Z})$; see, e.g.,
\citet[Chapter~5]{adams2003sobolev}. Hence, for almost every $\omega\in\Omega$,
the function $z\mapsto u(z,\omega)$ admits a $C^1(\overline{\mathcal Z})$ representative,
so that $\partial_z u(z_0,\omega)$ is well-defined. Moreover, by continuity of the embedding,
there exists a constant $C>0$ such that
\[
\bigl|\partial_z u(z_0,\omega)\bigr|\le C\|u(\cdot,\omega)\|_{H^s(\mathcal Z)}
\quad\text{for a.e. }\omega.
\]
Therefore,
\begin{align*}
|\theta_i(u)|
&= \left|\mathbb E_\omega\!\left[\partial_z u(z_0,\omega)\right]\right|
\le \mathbb E_\omega\!\left[\bigl|\partial_z u(z_0,\omega)\bigr|\right] \\
&\le C\,\mathbb E_\omega\!\left[\|u(\cdot,\omega)\|_{H^s(\mathcal Z)}\right]
\le C\Bigl(\mathbb E_\omega\|u(\cdot,\omega)\|_{H^s(\mathcal Z)}^2\Bigr)^{1/2}
= C\|u\|_{L^2(\Omega;H^s(\mathcal Z))},
\end{align*}
where the last inequality is Cauchy--Schwarz. This proves continuity.
\end{proof}

\begin{proof}[Proof of Proposition~\ref{pp:region_integral}]
Linearity is immediate.
For a.e. $\omega$, $\partial_z u(\cdot,\omega)\in L^2(\mathcal Z)$. Hence the integral is well-defined.
By Cauchy--Schwarz,
\begin{equation*}
\left|\int_{\mathcal Z}\partial_z u(z,\omega)\,\varphi(z)\,dz\right|
\le \|\partial_z u(\cdot,\omega)\|_{L^2(\mathcal Z)}\|\varphi\|_{L^2(\mathcal Z)}.
\end{equation*}
Taking expectations and applying Cauchy--Schwarz in $\omega$ yields
\begin{equation*}
|\theta_i(u)|
\le \|\varphi\|_{L^2(\mathcal Z)}\Bigl(\mathbb E\|\partial_z u(\cdot,\omega)\|_{L^2(\mathcal Z)}^2\Bigr)^{1/2}
\le \|\varphi\|_{L^2(\mathcal Z)}\,\|u\|_{L^2(\Omega;H^1(\mathcal Z))}.
\end{equation*}
\end{proof}

\begin{proof}[Proof of Theorem~\ref{th:Riesz}]
The result follows directly from the Riesz representation theorem.
\end{proof}

\begin{proof}[Proof of Theorem~\ref{th:unbiasedness}]
By definition of the Riesz estimator in \eqref{eq:Riesz-estimator-i} and the linearity of expectation, we have
\[
\mathbb{E}\bigl[\hat{\theta}_i(z, \omega)\bigr] = \theta_i(\tilde{y}_i),
\]
which further implies
\[
\mathbb{E}\bigl[\hat{\tau}_n(z, \omega)\bigr]
= \sum_{i=1}^n \nu_{ni} \, \mathbb{E}\bigl[\hat{\theta}_i(z, \omega)\bigr]
= \sum_{i=1}^n \nu_{ni} \, \theta_i(\tilde{y}_i)
= \tau_n.
\]
This equality follows from the construction of the estimator and does not require any integrability assumptions. In particular, it holds even if $\tilde{y}_i(z, \omega)\, \psi_i(z, \omega) \notin L^1$.
\end{proof}

\begin{lemma}\label{lm:nu-bound}
Suppose the weights \( \nu_{ni} \) satisfy Assumption~\ref{as:weights}. Then for any integer \( r \geq 1 \), the following bound holds:
\begin{equation}\label{eq:sumnu-O}
\sum_{i=1}^{n} \nu_{ni}^r = O(n^{1 - r}).
\end{equation}
\end{lemma}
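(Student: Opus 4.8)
The plan is to exploit the pointwise bound supplied by Assumption~\ref{as:weights} to convert a power sum into a single ordinary sum of the weights, which is itself controlled by the same assumption. First I would note that for each $i$ and each $n$ we have $\nu_{ni} = (n\nu_{ni})/n \le \bar{\nu}/n$, since $\nu_{ni}\ge 0$ and $n\nu_{ni}\le\bar{\nu}$. Next, for $r\ge 1$ write $\nu_{ni}^r = \nu_{ni}^{\,r-1}\cdot\nu_{ni}$ and bound the first factor uniformly: $\nu_{ni}^{\,r-1}\le(\bar{\nu}/n)^{r-1}$. Summing over $i$ then gives
\begin{equation*}
\sum_{i=1}^n \nu_{ni}^r \;\le\; \left(\frac{\bar{\nu}}{n}\right)^{r-1}\sum_{i=1}^n \nu_{ni}.
\end{equation*}

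To finish, I would again invoke Assumption~\ref{as:weights} to control the remaining linear sum: $\sum_{i=1}^n \nu_{ni} \le \sum_{i=1}^n (\bar{\nu}/n) = \bar{\nu}$. Combining the two displays yields $\sum_{i=1}^n \nu_{ni}^r \le \bar{\nu}^{\,r}\,n^{-(r-1)} = \bar{\nu}^{\,r}\,n^{1-r}$, and since $\bar{\nu}$ does not depend on $n$ this is exactly the claimed bound $\sum_{i=1}^n \nu_{ni}^r = O(n^{1-r})$. The case $r=1$ is covered trivially by the same chain, reducing to $\sum_i\nu_{ni}\le\bar{\nu}=O(1)$.

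There is no substantive obstacle here: the only ingredient is the uniform $O(1/n)$ control on each weight, and the argument is a two-line application of that bound together with nonnegativity. The one point worth stating carefully is that the estimate is uniform in $n$ (not merely asymptotic), so that the constant $\bar{\nu}^{\,r}$ is genuinely independent of $n$; this is immediate from the phrasing of Assumption~\ref{as:weights}, which quantifies over all $n\in\mathbb{N}$.
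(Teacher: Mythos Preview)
Your proof is correct and is essentially the same as the paper's: both use the pointwise bound $\nu_{ni}\le \bar{\nu}/n$ from Assumption~\ref{as:weights} to obtain $\sum_{i=1}^n \nu_{ni}^r \le \bar{\nu}^r n^{1-r}$. The paper applies the bound to all $r$ factors at once rather than splitting off one factor and summing, but the argument and the resulting constant are identical.
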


\begin{proof}
By Assumption~\ref{as:weights}, we have
\[
\sum_{i=1}^{n} \nu_{ni}^r \le \sum_{i=1}^{n} \left( \frac{\bar{\nu}}{n} \right)^r = n \cdot \left( \frac{\bar{\nu}}{n} \right)^r = \bar{\nu}^r \cdot n^{1 - r},
\]
which implies the claimed order bound.
\end{proof}

\begin{lemma}\label{lm:r-bound}
Under Assumptions~\ref{as:setting}, \ref{as:mspace}, and \ref{as:dual}, 
the \( r \)th central moment of the treatment effect Riesz estimator,
for some integer $r \geq 1$, if it exists, satisfies
\begin{equation}\label{eq:r-bound1}
\mathbb{E} \left| \hat{\theta}_i(z, \omega) - \theta_i(\tilde{y}_i) \right|^r 
\leq 
\left( 2  \norm{\tilde{y}_i}_p \, \norm{\psi_i}_q \right)^r,
\end{equation}
for any \( p, q \in [1, \infty] \) satisfying \( 1/p + 1/q = 1/r \).
In particular, when $r=2$,
\begin{equation}\label{eq:r-bound2}
\mathbb{E} \left[ \hat{\theta}_i(z, \omega) - \theta_i(\tilde{y}_i) \right]^2 
\leq 
\norm{\tilde{y}_i}_p^2 \, \norm{\psi_i}_q^2.
\end{equation}
The inequalities \eqref{eq:r-bound1} and \eqref{eq:r-bound2} hold trivially if either $\| \tilde{y}_i \|_p$ or $\| \psi_i \|_q$ has no finite upper bound.
\end{lemma}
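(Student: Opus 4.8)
The plan is to write the centred estimator explicitly and then chain the triangle inequality in $L^r$ with H\"older's inequality. From \eqref{eq:Riesz-estimator-i} and \eqref{eq:treatment-for-y}, the quantity to be bounded is $\hat\theta_i(z,\omega)-\theta_i(\tilde y_i)=\tilde y_i(z,\omega)\,\psi_i(z,\omega)-\mathbb{E}\!\left[\tilde y_i(z,\omega)\,\psi_i(z,\omega)\right]$, in which the subtracted term is a constant. First I would note that, since $(\mathcal{Z}\times\Omega,\,\mu\otimes P)$ is a probability space, Minkowski's inequality gives $\norm{\tilde y_i\psi_i-\mathbb{E}[\tilde y_i\psi_i]}_r\le\norm{\tilde y_i\psi_i}_r+\abs{\mathbb{E}[\tilde y_i\psi_i]}$, and since $\abs{\mathbb{E}[\tilde y_i\psi_i]}\le\norm{\tilde y_i\psi_i}_1\le\norm{\tilde y_i\psi_i}_r$ by Jensen's inequality and the monotonicity of $L^p$-norms on a probability space, this collapses to $\norm{\tilde y_i\psi_i-\mathbb{E}[\tilde y_i\psi_i]}_r\le 2\norm{\tilde y_i\psi_i}_r$.

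Next I would bound $\norm{\tilde y_i\psi_i}_r$ by H\"older's inequality. The constraint $1/p+1/q=1/r$ with $p,q\in[1,\infty]$ forces $p\ge r$ and $q\ge r$, so $p/r$ and $q/r$ are conjugate exponents; applying H\"older to $\abs{\tilde y_i}^r\,\abs{\psi_i}^r$ with these exponents yields $\norm{\tilde y_i\psi_i}_r^r=\mathbb{E}\!\left[\abs{\tilde y_i}^r\abs{\psi_i}^r\right]\le\norm{\tilde y_i}_p^r\,\norm{\psi_i}_q^r$, hence $\norm{\tilde y_i\psi_i}_r\le\norm{\tilde y_i}_p\,\norm{\psi_i}_q$, with the usual essential-supremum conventions when $p$ or $q$ equals $\infty$. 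Raising to the $r$th power and combining with the previous bound gives \eqref{eq:r-bound1}. The trivial clause is then immediate: if either $\norm{\tilde y_i}_p$ or $\norm{\psi_i}_q$ is infinite, the right-hand side equals $+\infty$ and the inequality holds vacuously, which also subsumes the remark that the estimate is asserted only when the left-hand moment exists.

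For the sharper constant in \eqref{eq:r-bound2} I would not invoke the triangle inequality at all, but instead use that the centred second moment is a variance: $\mathbb{E}\!\left[(\hat\theta_i-\theta_i(\tilde y_i))^2\right]=\mathbb{E}[(\tilde y_i\psi_i)^2]-(\mathbb{E}[\tilde y_i\psi_i])^2\le\mathbb{E}[(\tilde y_i\psi_i)^2]$, and then H\"older with conjugate exponents $p/2$ and $q/2$ gives $\mathbb{E}[(\tilde y_i\psi_i)^2]\le\norm{\tilde y_i}_p^2\,\norm{\psi_i}_q^2$.

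I do not anticipate a genuine obstacle here, as the argument is elementary; the only points requiring care are verifying that $1/p+1/q=1/r$ indeed forces $p/r,q/r\ge1$ so that H\"older applies, and treating the $p=\infty$ or $q=\infty$ boundary cases with the standard conventions. It is also worth recording that $\theta_i(\tilde y_i)=\mathbb{E}[\tilde y_i\psi_i]$ is well-defined, since $\tilde y_i,\psi_i\in\mathcal{M}_i\subset L^2$ by Assumption~\ref{as:mspace} and Theorem~\ref{th:Riesz}, so $\tilde y_i\psi_i\in L^1$ by Cauchy--Schwarz.
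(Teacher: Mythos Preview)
Your proposal is correct and follows essentially the same approach as the paper: the paper defers the proof of \eqref{eq:r-bound1} to Lemma~C.3 of \citet{harshaw2022riesz} (H\"older on the product space), which amounts to precisely the triangle-inequality-plus-H\"older chain you wrote out, and your variance-based argument for \eqref{eq:r-bound2} is identical to the paper's own derivation. You have simply supplied the details that the paper omits by citation.
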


\begin{proof}
The argument parallels Lemma C.3 in \citet{harshaw2022riesz}. 
Applying Hölder's inequality on the product space \( \mathcal{Z} \times \Omega \) yields
\[
\mathbb{E} \left| \tilde y_i(z,\omega)\psi_i(z,\omega) \right|^r
\le
\|\tilde y_i\|_p^r \|\psi_i\|_q^r,
\]
for any \( p,q \in [1,\infty] \) satisfying \(1/p+1/q=1/r\).
The stated bound then follows from the triangle inequality together with the representation
\[
\theta_i(\tilde y_i)
=
\mathbb E\!\left[\tilde y_i(z,\omega)\psi_i(z,\omega)\right].
\]


Regarding inequality \eqref{eq:r-bound2},
\begin{align*}
    \mathbb{E} \left[ \hat{\theta}_i(z, \omega) - \theta_i(\tilde{y}_i) \right]^2 
&= \mathbb{E}[\hat{\theta}_i(z, \omega)]^2 - (\theta_i(\tilde{y}_i))^2 
\leq \mathbb{E}\left[ \tilde{y}_i(z, \omega) \psi_i(z, \omega) \right]^2 \\
&= \norm{\tilde{y}_i(z, \omega) \psi_i(z, \omega)}^2
\leq \left( \norm{\tilde{y}_i(z, \omega)}_p \, \norm{\psi_i(z, \omega)}_q \right)^2.
\end{align*}
\end{proof}

We now state a proposition that provides an upper bound on the variance
of the aggregate Riesz estimator $\hat{\tau}_n(z, \omega)$.
This result follows from the setup and notation introduced above, and it serves to justify or at least suggest the consistency of the estimator under appropriate conditions.
\begin{proposition}[Variance Upper Bound]\label{pp:sigma2}
Under Assumptions~\ref{as:setting} and \ref{as:mspace}--\ref{as:weights}, 
the following inequality holds
\begin{equation}\label{eq:sigma2-bound}
\sigma_n^2 \leq \frac{\bar{\nu}^2}{n^2} \sum_{i=1}^{n} \abs{N_i} \, \norm{\tilde{y}_i}_p^2 \, \norm{\psi_i}_q^2,
\end{equation}
for any \( p, q \in [1, \infty] \) satisfying \( 1/p + 1/q = 1/2 \).
The inequality holds trivially if either $\| \tilde{y}_i \|_p$ or $\| \psi_i \|_q$ has no finite upper bound.

Furthermore, we define $S_2 := \left\{ (p, q) \in [1, \infty]^2 :\, \frac{1}{p} + \frac{1}{q} = \frac{1}{2} \right\}$,
and then we have
\begin{equation}\label{eq:sigma2-unibound}
\sigma_n^2 \leq \frac{\bar{\nu}^2 d_n}{n} \left( \inf_{(p, q) \in S_2} \, \norm{\tilde{y}_i}_{\max, p}^n \, \norm{\psi_i}_{\max, q}^n\right)^2.
\end{equation}
The inequality hold trivially if $\inf_{p, q \in S} \, \norm{\tilde{y}_i}_{\max, p}^n \, \norm{\psi_i}_{\max, q}^n$ has no finite upper bound.
\end{proposition}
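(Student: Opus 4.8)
The plan is to expand $\sigma_n^2$ as a weighted double sum of covariances of the unit-level Riesz estimators, discard the terms corresponding to pairs outside the dependency neighbourhoods, bound each surviving covariance by a controlled sum of second moments, and then collapse the remaining double sum by counting. Writing $\zeta_i := \hat{\theta}_i(z,\omega) - \theta_i(\tilde y_i)$, Theorem~\ref{th:unbiasedness} gives $\hat\tau_n - \tau_n = \sum_{i=1}^n \nu_{ni}\zeta_i$, so
\[
\sigma_n^2 \;=\; \mathbb{E}\Bigl[\bigl(\hat\tau_n - \tau_n\bigr)^2\Bigr]
\;=\; \sum_{i=1}^n\sum_{j=1}^n \nu_{ni}\nu_{nj}\,\mathbb{E}[\zeta_i\zeta_j].
\]
By the dependency-neighbourhood structure of Definition~\ref{df:dep-neighbour} (together with the factorisation of the joint law of $(z,\omega)$ built into the setup), the pair $(\tilde y_i,\psi_i)$ is independent of $(\tilde y_j,\psi_j)$ whenever $j\notin N_i$, and $\zeta_i$ is determined by $(\tilde y_i,\psi_i)$ and the assignment, so $\mathbb{E}[\zeta_i\zeta_j]=0$ for such pairs. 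Swapping the roles of $i$ and $j$ shows that a nonzero covariance requires both $j\in N_i$ and $i\in N_j$; hence for each fixed $i$ at most $|N_i|$ indices $j$ contribute, and for each fixed $j$ at most $|N_j|$ indices $i$ contribute.

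Next I would bound the surviving covariances. Since $\nu_{ni}\ge 0$, passing to absolute values only enlarges the sum, and Cauchy--Schwarz followed by the arithmetic--geometric-mean inequality gives $|\mathbb{E}[\zeta_i\zeta_j]| \le \tfrac12(\mathbb{E}[\zeta_i^2] + \mathbb{E}[\zeta_j^2])$. Applying the $r=2$ bound \eqref{eq:r-bound2} of Lemma~\ref{lm:r-bound} and writing $b_i := \norm{\tilde y_i}_p^2\,\norm{\psi_i}_q^2$ for a fixed admissible pair $(p,q)$ yields $|\mathbb{E}[\zeta_i\zeta_j]|\le \tfrac12(b_i+b_j)$, while Assumption~\ref{as:weights} gives $\nu_{ni}\nu_{nj}\le \bar\nu^2/n^2$. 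Combining,
\[
\sigma_n^2 \;\le\; \frac{\bar\nu^2}{n^2}\sum_{i=1}^n\sum_{j:\,\mathbb{E}[\zeta_i\zeta_j]\neq 0}\tfrac12\bigl(b_i+b_j\bigr)
\;\le\; \frac{\bar\nu^2}{n^2}\Bigl(\tfrac12\sum_{i=1}^n |N_i| b_i + \tfrac12\sum_{j=1}^n |N_j| b_j\Bigr)
\;=\; \frac{\bar\nu^2}{n^2}\sum_{i=1}^n |N_i|\,\norm{\tilde y_i}_p^2\,\norm{\psi_i}_q^2,
\]
which is \eqref{eq:sigma2-bound}; if some $\norm{\tilde y_i}_p$ or $\norm{\psi_i}_q$ is infinite the inequality is vacuous, as noted after Lemma~\ref{lm:r-bound}.

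For the uniform version \eqref{eq:sigma2-unibound} I would start from \eqref{eq:sigma2-bound}, replace $\norm{\tilde y_i}_p$ by $\norm{\tilde y_i}_{\max, p}^n$ and $\norm{\psi_i}_q$ by $\norm{\psi_i}_{\max, q}^n$ for each $i\le n$ (legitimate by definition of the max-$p$ norm), pull these two constants out of the sum, and use $\sum_{i=1}^n |N_i| = n\,d_n$ from the definition of $d_n$. This gives $\sigma_n^2 \le \tfrac{\bar\nu^2 d_n}{n}\bigl(\norm{\tilde y_i}_{\max, p}^n\,\norm{\psi_i}_{\max, q}^n\bigr)^2$ for every $(p,q)\in S_2$; since $t\mapsto t^2$ is nondecreasing on $[0,\infty)$, the infimum over $S_2$ can be moved inside the square, yielding \eqref{eq:sigma2-unibound}.

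The only step that is more than bookkeeping is the reduction of the double sum to the dependency neighbourhoods, i.e.\ showing $\mathbb{E}[\zeta_i\zeta_j]=0$ for $j\notin N_i$; this is precisely where the independence structure of Definition~\ref{df:dep-neighbour} (and, implicitly, the independence of the treatment assignment from the latent environment) is genuinely used. Everything downstream — Cauchy--Schwarz, the arithmetic--geometric-mean inequality, the $r=2$ moment bound of Lemma~\ref{lm:r-bound}, the weight bound of Assumption~\ref{as:weights}, and the counting identity $\sum_i |N_i| = n d_n$ — is entirely routine.
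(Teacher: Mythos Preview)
Your argument is correct and follows essentially the same route as the paper's proof: expand $\sigma_n^2$ as a double sum of covariances, restrict to dependency neighbourhoods, apply Cauchy--Schwarz and AM--GM to bound each covariance by $\tfrac12(\|\zeta_i\|^2+\|\zeta_j\|^2)$, invoke Lemma~\ref{lm:r-bound} with $r=2$, and count via $\sum_i |N_i|=n d_n$. The only cosmetic difference is that the paper retains the weights $\nu_{ni}\nu_{nj}$ through the AM--GM step and only then bounds $\sum_{j\in N_i}\nu_{nj}\le |N_i|\bar\nu/n$, whereas you replace $\nu_{ni}\nu_{nj}\le \bar\nu^2/n^2$ up front; both orderings land on the same bound.
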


\begin{proof}[Proof of Proposition~\ref{pp:sigma2}]
Let $\zeta_i = \hat{\theta}_i(z, \omega) - \theta_i(\tilde{y}_i)$ and we observe that $\mathbb{E}[\zeta_i] = 0$ by Theorem~\ref{th:unbiasedness}.
Then,
\begin{align*}
\sigma_n^2 &= \mathbb{V}[\hat{\tau}_n(z, \omega) - \tau_n] = \mathbb{V}\left[\sum_{i=1}^{n} \nu_{ni} \zeta_i \right]
= \mathbb{E}\left[ \sum_{i=1}^{n} \nu_{ni} \zeta_i \right]^2 \\
&= \sum_{i=1}^{n} \sum_{j=1}^{n} \nu_{ni} \nu_{nj}
\mathbf{1}(i \in N_j) \mathbf{1}(j \in N_i)
\mathbb{E} \left[ \zeta_i \zeta_j \right] \\
&\le \sum_{i=1}^{n} \sum_{j=1}^{n} \nu_{ni} \nu_{nj}
\mathbf{1}(i \in N_j) \mathbf{1}(j \in N_i)
\norm{\zeta_i} \, \norm{\zeta_j} &&\text{(Cauchy--Schwarz)} \\
&\leq \sum_{i=1}^{n} \nu_{ni} \sum_{j=1}^{n} \frac{1}{2} \nu_{nj}
\left( \mathbf{1}(j \in N_i) \norm{\zeta_i}^2 +
\mathbf{1}(i \in N_j) \norm{\zeta_j}^2 \right) &&\text{(AM-GM)} \\
&= \sum_{i=1}^{n} \nu_{ni} \left(\sum_{j \in N_i} \nu_{nj}\right) \norm{\zeta_i}^2
\leq \frac{\bar{\nu}^2}{n^2} \sum_{i=1}^{n} \abs{N_i} \norm{\zeta_i}^2 &&\text{(Lemma~\ref{lm:nu-bound})}\\
&\leq \frac{\bar{\nu}^2}{n^2} \sum_{i=1}^{n} \abs{N_i} \, \norm{\tilde{y}_i}_p^2 \, \norm{\psi_i}_q^2 &&\text{(Lemma~\ref{lm:r-bound})} 
\end{align*}
Note that AM-GM stands for arithmetic-geometric mean inequality.

To obtain the uniform bound, we apply the max-\( p \) norm, and note that \( d_n = n^{-1} \sum_{i=1}^{n} |N_i| \), yielding
\begin{align*}
\sigma_n^2 \leq \frac{\bar{\nu}^2}{n^2} \left(\norm{\tilde{y}_i}_{\max, p}^n \, \norm{\psi_i}_{\max, q}^n\right)^2 \sum_{i=1}^{n} \abs{N_i} = \frac{\bar{\nu}^2 d_n}{n} \left(\norm{\tilde{y}_i}_{\max, p}^n \, \norm{\psi_i}_{\max, q}^n\right)^2.
\end{align*}
Since this inequality holds for any \( p, q \in [1, \infty] \) satisfying \(1/p + 1/q = 1/2\),
we have a more conservative upper bound
\begin{align*}
\sigma_n^2 \leq \frac{\bar{\nu}^2 d_n}{n} \left( \inf_{p, q \in S_2} \, \norm{\tilde{y}_i}_{\max, p}^n \, \norm{\psi_i}_{\max, q}^n\right)^2.
\end{align*}
\end{proof}

Proposition~\ref{pp:sigma2} provides an upper bound on the variance of the aggregate estimator. It shows that the variance decreases with sample size under boundedness of unit-level norms and local sparsity. The result highlights the interaction between dependence complexity and unit-level variability in determining inferential precision.

It implies that the variance may decay as the sample size increases,
provided that the norms $\norm{\tilde{y}_i}_{\max, p}^n$ and $\norm{\psi_i}_{\max, q}^n$ remain bounded if there exists such a pair $p, q \in [1, \infty]$ satisfying $1/p + 1/q = 1/2$,
and that the neighbourhood sizes $\abs{N_i}$ are uniformly small.
The bound highlights the importance of controlling both the complexity of dependence and the magnitude of unit-level variation in order to obtain valid inference.

\begin{proof}[Proof of Theorem~\ref{th:consistency}]
The result follows directly from Proposition~\ref{pp:sigma2}.
Specifically, the bound in \eqref{eq:sigma2-unibound}, together with Assumption~\ref{as:bounded-norms} with \( r = 2 \), implies that
\[
\mathbb{E}\left[\hat{\tau}_n(z, \omega) - \tau_n\right]^2 \to 0,
\]
as \( n \to \infty \), establishing mean square consistency.

Moreover, suppose $\sup_{n \in \mathbb{N}} d_n < \infty$, that is, there exists some $d_0 \in (0, \infty)$ such that $d_n \leq d_0$ for all $n$.
Since $\hat{\tau}_n - \tau_n$ follows some distribution with zero mean and variance $\sigma_n^2$,
it follows that $\sqrt{n}(\hat{\tau}_n - \tau_n)$ has variance $n\sigma_n^2$.
By the bound in Proposition~\ref{pp:sigma2}, we obtain
\begin{align*}
n\sigma_n^2 \leq \bar{\nu}^2 d_0 \left( \inf_{p, q \in S_2} \, \norm{\tilde{y}_i}_{\max, p}^n \, \norm{\psi_i}_{\max, q}^n\right)^2 < \infty,
\end{align*}
under the stated assumptions.
Hence, the aggregate Riesz estimator satisfies $O_p(n^{-1/2})$.
\end{proof}

\begin{lemma}\label{lm:wasserstein}
Let \( (X, d) \) be a metric space. Suppose \( W_n \) and \( Z \) are random variables taking values in \( X \), and \( d_W(W_n, Z) \to 0 \) as \( n \to \infty \), where \( d_W \) denotes the Wasserstein distance. Then \( W_n \xrightarrow{d} Z \), \emph{i.e.}, \( W_n \) converges in distribution to \( Z \).
\end{lemma}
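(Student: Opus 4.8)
The plan is to reduce the claim to the bounded-Lipschitz characterisation of weak convergence, using only the primal (coupling) definition of the Wasserstein distance. Recall that, writing $d_W$ for the Wasserstein distance of order one between the laws of $W_n$ and $Z$, one has $d_W(W_n,Z)=\inf_{\pi}\mathbb{E}_{\pi}\bigl[d(W_n,Z)\bigr]$, where the infimum runs over all couplings $\pi$ of $W_n$ and $Z$ on a common probability space. Since $W_1\le W_p$ for every $p\ge 1$, the argument below also covers the case where $d_W$ is a higher-order Wasserstein distance.

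First I would establish the elementary coupling bound. Let $f:X\to\mathbb{R}$ be a bounded Lipschitz function with Lipschitz constant $\operatorname{Lip}(f)$. For any coupling $\pi$ of $W_n$ and $Z$,
\begin{equation*}
\bigl|\mathbb{E}[f(W_n)]-\mathbb{E}[f(Z)]\bigr|
=\bigl|\mathbb{E}_{\pi}\bigl[f(W_n)-f(Z)\bigr]\bigr|
\le \operatorname{Lip}(f)\,\mathbb{E}_{\pi}\bigl[d(W_n,Z)\bigr],
\end{equation*}
the integrals being finite because $f$ is bounded. Taking the infimum over couplings on the right-hand side yields $\bigl|\mathbb{E}[f(W_n)]-\mathbb{E}[f(Z)]\bigr|\le \operatorname{Lip}(f)\,d_W(W_n,Z)$. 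Hence the hypothesis $d_W(W_n,Z)\to 0$ forces $\mathbb{E}[f(W_n)]\to\mathbb{E}[f(Z)]$ for every bounded Lipschitz $f$.

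Second, I would invoke the portmanteau theorem together with its standard refinement that the class of bounded Lipschitz functions is convergence-determining on a metric space: convergence in distribution $W_n\xrightarrow{d}Z$ is equivalent to $\mathbb{E}[f(W_n)]\to\mathbb{E}[f(Z)]$ for all bounded continuous $f$, and it suffices to test against the smaller class of bounded Lipschitz $f$ (see, e.g., \citet{billingsley1999convergence}, or Dudley's theorem that the bounded-Lipschitz metric metrises weak convergence on separable metric spaces). Combining this equivalence with the display above gives $W_n\xrightarrow{d}Z$. In the application of the lemma within this paper, $X=\mathbb{R}$ and $W_n,Z$ are real-valued, so any separability or measurability hypotheses needed for the refinement hold trivially.

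There is essentially no serious obstacle here, the result being classical; the only point requiring a little care is the direction of the argument. One should work from the coupling definition of $d_W$ rather than from the Kantorovich--Rubinstein dual, since the coupling bound above holds on an arbitrary metric space without completeness or separability assumptions, whereas the dual representation requires additional hypotheses that are inessential for the present purpose.
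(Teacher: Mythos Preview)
Your proposal is correct and follows essentially the same strategy as the paper, namely to show that $\mathbb{E}[f(W_n)]\to\mathbb{E}[f(Z)]$ for all bounded Lipschitz $f$ via the coupling bound, and then conclude weak convergence. The only difference is in how the last step is justified: you invoke directly that bounded Lipschitz functions are convergence-determining on a metric space (portmanteau/Dudley), whereas the paper makes this explicit by an $\varepsilon/3$ approximation, using that Lipschitz functions are uniformly dense in the bounded uniformly continuous functions and then citing \citet[Theorem~2.1]{billingsley1999convergence}. Your route is more concise but leans on a slightly stronger off-the-shelf result; the paper's route is more self-contained. Your additional remark about working from the primal (coupling) definition rather than the Kantorovich--Rubinstein dual is a nice point about generality that the paper does not comment on.
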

\begin{proof}
We will show that \( \mathbb{E}[f(W_n)] \to \mathbb{E}[f(Z)] \) for all bounded uniformly continuous functions \( f: X \to \mathbb{R} \), which characterises weak convergence; see \citet[Theorem 2.1]{billingsley1999convergence}.

Let \( f \in \mathrm{BUC}(X) \), the space of all bounded uniformly continuous functions.  
Since Lipschitz functions are dense in \( \mathrm{BUC}(X) \) under the supremum norm, for any \( \varepsilon > 0 \), there exists a Lipschitz function \( f_\varepsilon: X \to \mathbb{R} \) such that
\[
\|f - f_\varepsilon\|_\infty < \frac{\varepsilon}{3}.
\]

The function \( f_\varepsilon \) is Lipschitz with some finite constant \( L_\varepsilon \). Therefore,
\[
\left| \mathbb{E}[f_\varepsilon(W_n)] - \mathbb{E}[f_\varepsilon(Z)] \right| \leq L_\varepsilon \cdot d_W(W_n, Z).
\]
Since \( d_W(W_n, Z) \to 0 \), there exists \( N(\varepsilon) \in \mathbb{N} \) such that
\[
\left| \mathbb{E}[f_\varepsilon(W_n)] - \mathbb{E}[f_\varepsilon(Z)] \right| < \frac{\varepsilon}{3}, \quad \text{for all } n > N(\varepsilon).
\]

Using the triangle inequality, we have
\[
\begin{aligned}
\left| \mathbb{E}[f(W_n)] - \mathbb{E}[f(Z)] \right|
&\leq \left| \mathbb{E}[f(W_n)] - \mathbb{E}[f_\varepsilon(W_n)] \right| \\
&\quad + \left| \mathbb{E}[f_\varepsilon(W_n)] - \mathbb{E}[f_\varepsilon(Z)] \right| \\
&\quad + \left| \mathbb{E}[f_\varepsilon(Z)] - \mathbb{E}[f(Z)] \right| \\
&< \frac{\varepsilon}{3} + \frac{\varepsilon}{3} + \frac{\varepsilon}{3} = \varepsilon.
\end{aligned}
\]

Since \( \varepsilon > 0 \) and \( f \in \mathrm{BUC}(X) \) were arbitrary, we conclude that \( \mathbb{E}[f(W_n)] \to \mathbb{E}[f(Z)] \), \emph{i.e.}, \( W_n \xrightarrow{d} Z \).
\end{proof}

\begin{proof}[Proof of Theorem~\ref{th:normality}]
The proof proceeds in two steps. 
First, we establish Wasserstein convergence using the normal approximation bound for dependency graphs of \citet[Theorem 3.5]{ross2011fundamentals}. 
Second, we show that Wasserstein convergence implies weak convergence.

Clearly $\mathbb{E}[\hat{\theta}_i(z, \omega)]^4 < \infty$ is equivalent to $\mathbb{E}[\hat{\theta}_i(z, \omega) - \theta_i(\tilde{y}_i)]^4 < \infty$.
Define $X_i = \nu_{ni} (\hat{\theta}_i(z, \omega) - \theta_i(\tilde{y}_i))$,
and then $\mathbb{E}[X_i^4] < \infty$.
By Theorem~\ref{th:unbiasedness},
we have also $\mathbb{E}[X_i] = 0$.

Note that $\sum_{i=1}^n X_i = \hat{\tau}_n(z, \omega) - \tau_n$.
Define $W_n = \sum_{i=1}^n X_i / \sigma_n$.
Since the collection $\{X_1, X_2, ..., X_n\}$ have dependency neighborhoods $N_i$, $i = 1, ..., n$, with $D_n = \max_{1 \leq i \leq n} |N_i|$.
Then we obtain
\begin{equation}\label{eq:dw-upper-bound}
d_W(W_n, Z) \leq \frac{D_n^2}{\sigma_n^3} \sum_{i=1}^n \mathbb{E} |X_i|^3 + \frac{\sqrt{26}D_n^{3/2}}{\sqrt{\pi} \sigma_n^2} \sqrt{\sum_{i=1}^n \mathbb{E}[X_i^4]},
\end{equation}
where $d_W(W_n, Z)$ denotes the Wasserstein distance between $W_n$ and a standard normal one $Z$.
Observe that \eqref{eq:dw-upper-bound} gives an upper bound of the distance.

First, we examine the first term in \eqref{eq:dw-upper-bound}

\begin{align*}
\sum_{i=1}^{n} \mathbb{E} \abs{X_i}^3 &=  \sum_{i=1}^{n} \mathbb{E} \abs{\nu_{ni} \left(\hat{\theta}_i(z, \omega) - \theta_i(\tilde{y}_i)\right)}^3 \\
&= \sum_{i=1}^{n} \nu_{ni}^3 \mathbb{E} \abs{\hat{\theta}_i(z, \omega) - \theta_i(\tilde{y}_i)}^3 \\
&\leq \sqrt{\sum_{i=1}^{n} \nu_{ni}^6} \cdot \sqrt{\sum_{i=1}^{n} \left(\mathbb{E} \abs{\hat{\theta}_i(z, \omega) - \theta_i(\tilde{y}_i)}^3 \right)^2} &&\text{(Cauchy--Schwarz)} \\
&\leq n^{-5/2} \bar{\nu}^3 \cdot \sqrt{\sum_{i=1}^{n} \left( 2 \norm{\tilde{y}_i}_{p_1} \, \norm{\psi_i}_{q_1} \right)^6} &&\text{(Lemma~\ref{lm:nu-bound} and \ref{lm:r-bound})} \\
&\leq n^{-2} \bar{\nu}^3 \left( 2 \norm{\tilde{y}}_{\max,p_1}^n \, \norm{\psi}_{\max,q_1}^n \right)^3.
\end{align*}
Next, we examine the second term in \eqref{eq:dw-upper-bound}.
\begin{align*}
\sum_{i=1}^{n} \mathbb{E} [X_i]^4 &=  \sum_{i=1}^{n} \mathbb{E} \left[ \nu_{ni} \left(\hat{\theta}_i(z, \omega) - \theta_i(\tilde{y}_i)\right) \right]^4 \\
&= \sum_{i=1}^{n} \nu_{ni}^4 \mathbb{E} \left[ \hat{\theta}_i(z, \omega) - \theta_i(\tilde{y}_i) \right]^4 \\
&\leq \sqrt{\sum_{i=1}^{n} \nu_{ni}^8} \cdot \sqrt{\sum_{i=1}^{n} \left(\mathbb{E} \left[ \hat{\theta}_i(z, \omega) - \theta_i(\tilde{y}_i) \right]^4 \right)^2} &&\text{(Cauchy--Schwarz)} \\
&\leq n^{-7/2} \bar{\nu}^4 \cdot \sqrt{\sum_{i=1}^{n} \left( 2 \norm{\tilde{y}_i}_{p_2} \, \norm{\psi_i}_{q_2} \right)^8} &&\text{(Lemma~\ref{lm:nu-bound} and \ref{lm:r-bound})} \\
&\leq n^{-3} \bar{\nu}^4 \left( 2 \norm{\tilde{y}}_{\max,p_2}^n \, \norm{\psi}_{\max,q_2}^n \right)^4.
\end{align*}

Thus, we obtain
\begin{align}
d_W(W_n, Z) &\leq \frac{\bar{\nu}^3 D_n^2}{n^2 \sigma_n^3}  \left( 2 \norm{\tilde{y}}_{\max,p_1}^n \, \norm{\psi}_{\max,q_1}^n \right)^3 + 
\frac{\sqrt{26} \bar{\nu}^2 D_n^{3/2}}{\sqrt{\pi} n^{3/2} \sigma_n^2} \left( 2 \norm{\tilde{y}}_{\max,p_2}^n \, \norm{\psi}_{\max,q_2}^n \right)^2 \nonumber \\
&= O\left( \frac{D_n^2}{n^2 \sigma_n^3} \right) + O\left( \frac{D_n^{3/2}}{n^{3/2} \sigma_n^2} \right), \label{eq:bigO-terms}
\end{align}
under Assumption~\ref{as:bounded-norms}, which states that 
$\norm{\tilde{y}}_{\max,p_1}^n$, $\norm{\psi}_{\max,q_1}^n$, $\norm{\tilde{y}}_{\max,p_2}^n$,
and $\norm{\psi}_{\max,q_2}^n$ converge for some \( p_1, q_1, p_2, q_2 \in [1, \infty] \)
satisfying \( 1/p_1 + 1/q_1 = 1/3 \) and \( 1/p_2 + 1/q_2 = 1/4 \).

Given Assumption~\ref{as:lowerbound-sigma},
\eqref{eq:bigO-terms} can be further bounded by
\begin{equation}
    d_W(W_n, Z) \leq O\left( \frac{D_n^2}{n^{1/2}} \right) + O\left( \frac{D_n^{3/2}}{n^{1/2}} \right),
\end{equation}
which implies that, if $D_n = o(n^{1/4})$, then $d_W(W_n, Z) \to 0$.

Lemma~\ref{lm:wasserstein} shows that Wasserstein convergence implies weak convergence,
which completes the proof.
\end{proof}

\begin{lemma}\label{lm:delta-weighted}
Let $\{\zeta_i\}_{i=1}^n$ be random variables such that
\[
\mathbb{E}[\zeta_i]=0
\quad\text{and}\quad
\sup_{i\in\mathbb N}\mathbb{E}[\zeta_i^{4}]<\infty .
\]
For deterministic weights $\nu_{ni}$ satisfying Assumption~\ref{as:weights}, define
\begin{equation}\label{eq:delta-n}
\Delta_n
\;=\;
n\sum_{(i,j)\in\mathcal{E}_n}
\Bigl(
\nu_{ni}\nu_{nj} \zeta_i\zeta_j
-
\nu_{ni}\nu_{nj}\,\mathbb{E}[\zeta_i\zeta_j]
\Bigr).
\end{equation}
Then
\begin{equation}
\displaystyle
\Delta_n
\;=\;
O_p\bigl(n^{-1/2}\,D_n^{3/2}\bigr).
\end{equation}
\end{lemma}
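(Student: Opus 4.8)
The plan is to control the second moment $\mathbb{E}[\Delta_n^2]$ and then apply Markov's inequality. Since each summand in \eqref{eq:delta-n} is centred, $\mathbb{E}[\Delta_n]=0$, so $\mathbb{E}[\Delta_n^2]=\mathbb{V}[\Delta_n]$, and it suffices to show $\mathbb{E}[\Delta_n^2]=O(D_n^{3}/n)$. Markov's inequality then gives $\mathbb{P}\bigl(|\Delta_n|>M\,n^{-1/2}D_n^{3/2}\bigr)\le \mathbb{E}[\Delta_n^2]\big/\bigl(M^{2}n^{-1}D_n^{3}\bigr)$, which is uniformly small in $n$ for large $M$, that is, $\Delta_n=O_p(n^{-1/2}D_n^{3/2})$.

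Expanding the square,
\[
\mathbb{E}[\Delta_n^2]
=n^{2}\sum_{(i,j)\in\mathcal{E}_n}\sum_{(k,l)\in\mathcal{E}_n}
\nu_{ni}\nu_{nj}\nu_{nk}\nu_{nl}\,\mathrm{Cov}(\zeta_i\zeta_j,\zeta_k\zeta_l),
\]
and three bounds are combined. First, by Assumption~\ref{as:weights} every weight is at most $\bar\nu/n$, so each product of four weights is at most $\bar\nu^{4}n^{-4}$. Second, since $\sup_i\mathbb{E}[\zeta_i^{4}]<\infty$, the generalised Hölder inequality gives $\mathbb{E}|\zeta_i\zeta_j\zeta_k\zeta_l|\le\bigl(\mathbb{E}\zeta_i^{4}\,\mathbb{E}\zeta_j^{4}\,\mathbb{E}\zeta_k^{4}\,\mathbb{E}\zeta_l^{4}\bigr)^{1/4}$ and $\mathbb{E}|\zeta_i\zeta_j|\le(\mathbb{E}\zeta_i^{4}\,\mathbb{E}\zeta_j^{4})^{1/4}$, whence $|\mathrm{Cov}(\zeta_i\zeta_j,\zeta_k\zeta_l)|\le K$ for the single constant $K=2\sup_i\mathbb{E}[\zeta_i^{4}]$, uniformly over all index quadruples. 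Third, and here local dependence enters, $\mathrm{Cov}(\zeta_i\zeta_j,\zeta_k\zeta_l)=0$ unless the vertex sets $\{i,j\}$ and $\{k,l\}$ are joined by an edge of the dependency graph, that is, unless $\{k,l\}\cap(N_i\cup N_j)\neq\emptyset$. This is the standard factorisation property of dependency graphs (see \citet[Section~3]{ross2011fundamentals}), applicable here because each $\zeta_m$ is a function of $(\tilde y_m,\psi_m)$ which, by Definition~\ref{df:dep-neighbour}, is independent of $\{(\tilde y_{m'},\psi_{m'}):m'\notin N_m\}$; in particular $(i,j)\in\mathcal{E}_n$ forces $j\in N_i$ for $\mathcal{E}_n$ in \eqref{eq:dependency-structure}.

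It remains to count the non-vanishing quadruples. Since $(i,j)\in\mathcal{E}_n$ forces $j\in N_i$, there are at most $nD_n$ such pairs. For a fixed $(i,j)$, the set $N_i\cup N_j$ has at most $2D_n$ elements; for the covariance to be non-zero at least one of $k,l$ must lie in it, and the remaining index must then lie in a dependency neighbourhood of the first, so there are at most $4D_n^{2}$ admissible pairs $(k,l)$. Hence at most $4nD_n^{3}$ quadruples contribute, and
\[
\mathbb{E}[\Delta_n^2]\le n^{2}\cdot 4nD_n^{3}\cdot\frac{\bar\nu^{4}}{n^{4}}\cdot K
=4K\bar\nu^{4}\,\frac{D_n^{3}}{n},
\]
which closes the argument via the Markov step above. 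The main obstacle is the third step: one must check that the dependency-graph factorisation genuinely applies to the products $\zeta_i\zeta_j$, so that their joint dependence neighbourhoods are controlled by the $N_m$, and then count connected index-quadruples without overcounting, for which the uniform degree bound $D_n$ makes the bookkeeping routine.
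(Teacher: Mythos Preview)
Your proof is correct and follows essentially the same route as the paper: both bound $\mathbb{V}[\Delta_n]$ by controlling the individual covariances via the weight bound and the uniform fourth-moment assumption, then count the number of non-vanishing covariance terms using the dependency-graph structure to obtain $\mathbb{V}[\Delta_n]=O(D_n^{3}/n)$, and conclude via Chebyshev/Markov. The only cosmetic differences are that you make the constants in the counting step explicit ($4D_n^{2}$ rather than $O(D_n^{2})$) and invoke the generalised H\"older inequality where the paper uses Cauchy--Schwarz.
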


\begin{proof}
Set
\[
X_{ij}
\;:=\;
n\,\nu_{ni}\nu_{nj}\bigl(\zeta_i\zeta_j - \mathbb{E}[\zeta_i\zeta_j]\bigr),
\qquad (i,j) \in \mathcal{E}_n,
\]
so that $\Delta_n = \sum_{(i,j) \in \mathcal{E}_n} X_{ij}$ and
\begin{equation}\label{eq:delta-sum-cov}
\mathbb{V}[\Delta_n] = \sum_{(i,j), (l,k) \in \mathcal{E}_n} \operatorname{Cov}[X_{ij}, X_{lk}].
\end{equation}

Note that $\operatorname{Cov}[X_{ij}, X_{lk}] = 0$ whenever $X_{ij}$ and $X_{lk}$ are independent, so the sum involves only a restricted number of dependent pairs.

Since $\nu_{ni} \le \bar{\nu}/n$, we have
$n\,\nu_{ni}\nu_{nj} \le \bar{\nu}^2/n$.
Then, for any dependent $(i,j), (l,k) \in \mathcal{E}_n$, we apply Cauchy--Schwarz and the uniform fourth-moment bound $\sup_i \mathbb{E}[\zeta_i^4] < \infty$ to obtain
\begin{equation}\label{eq:cov-xx}
\operatorname{Cov}[X_{ij}, X_{lk}]
\le \left( \frac{\bar{\nu}^2}{n} \right)^2
\cdot \mathbb{E}\Big|\big(\zeta_i\zeta_j - \mathbb{E}[\zeta_i\zeta_j]\big)
\big(\zeta_l\zeta_k - \mathbb{E}[\zeta_l\zeta_k]\big)\Big|
\le \frac{\bar{\nu}^4 C}{n^2}
\end{equation}
for some constant $C > 0$.
This bound also applies when $(i,j) = (l,k)$.

Fix $(i,j)\in\mathcal{E}_n$.
A second pair $(k,\ell)$ can be dependent with $(i,j)$ only if it lies within the corresponding neighbourhoods,
that is,
$(k, \ell) \cap (N_i \cup N_j) \neq \emptyset$.
Each unit has at most $D_n$ neighbours, so there are at most
\( D_n^2 - 1 =O(D_n^{2})\) such pairs.
Hence every $X_{ij}$ is dependent with at most $C_1D_n^{2}$ other summands
for some constant $C_1>0$.

There are at most $nD_n$ ordered pairs in $\mathcal{E}_n$, so
\begin{equation}\label{eq:var-delta}
\mathbb{V}[\Delta_n]
    \;\le\;
    C_1D_n^{2}\cdot nD_n\cdot \frac{\bar\nu^{4}C}{n^{2}}
    \;=\;
    C_2\,\bar\nu^{4}\,\frac{D_n^{3}}{n},
\end{equation}
for a constant $C_2>0$.

By Chebyshev's inequality, for any $\varepsilon>0$,
\[
\mathbb{P}\bigl(|\Delta_n|>\varepsilon\bigr)
    \;\le\;
    \frac{\operatorname{Var}(\Delta_n)}{\varepsilon^{2}}
    \;\le\;
    \frac{C_2\,\bar\nu^{4}}{\varepsilon^{2}}\,
    \frac{D_n^{3}}{n}.
\]
Since the right-hand side is of order
\(
n^{-1}\,D_n^{3},
\)
we have
\[
\Delta_n \;=\; O_p\bigl(n^{-1/2}\,D_n^{3/2}\bigr),
\]
which completes the proof.
\end{proof}

\begin{proof}[Proof of Theorem~\ref{th:consistency-variance}]
By Theorem~\ref{th:unbiasedness}, we have \( \mathbb{E}[\zeta_i] = 0 \), where \( \zeta_i := \hat{\theta}_i(z, \omega) - \theta_i(\tilde{y}_i) \).  
Assumption~\ref{as:ffm}\((b)\) ensures that \( \sup_{i \in \mathbb{N}} \mathbb{E}[\hat{\theta}_i(z, \omega)^4] < \infty \), which is equivalent to  
\( \sup_{i \in \mathbb{N}} \mathbb{E}[\zeta_i^4] < \infty \), since \( \theta_i(\tilde{y}_i) \) is constant.

Lemma~\ref{lm:delta-weighted} together with Assumption~\ref{as:weights} implies that
\[
n \hat{\sigma}_n^2 - n \sigma_n^2 = n \sum_{(i,j) \in \mathcal{E}_n} \nu_{ni} \nu_{nj} \left( \zeta_i \zeta_j - \mathbb{E}[\zeta_i \zeta_j] \right)= O_p(n^{-1/2} D_n^{3/2}).
\]

If Assumption~\ref{as:mns} holds with \( D_n = o(n^{1/3}) \),
\eqref{eq:var-delta} implies that this expression converges to zero in mean square.
\end{proof}

\begin{proof}[Proof of Corollary~\ref{co:normality}]
The result follows directly from Theorems~\ref{th:normality} and~\ref{th:consistency-variance}.  
We write
\begin{align*}
\hat{\sigma}_n^{-1} \left( \hat{\tau}_n(z, \omega) - \tau_n \right) = \frac{\sqrt{n}\left( \hat{\tau}_n(z, \omega) - \tau_n \right)}{\sqrt{n}\sigma_n} \cdot \frac{\sqrt{n}\sigma_n}{\sqrt{n}\hat{\sigma}_n}.
\end{align*}
The first factor converges in distribution to \( \mathcal{N}(0, 1) \) by Theorem~\ref{th:normality}, and the second factor converges to 1 in probability by Theorem~\ref{th:consistency-variance}, together with Assumption~\ref{as:lowerbound-sigma}, which guarantees that \( \sqrt{n}\sigma_n \) is bounded away from zero.

Since the product of a sequence converging in distribution and another converging in probability (to a constant) also converges in distribution, Slutsky's theorem yields the desired result.
\end{proof}

\begin{lemma}\label{lm:vvest}
$\Delta_n^c \;=\; O_p\bigl(n^{-1/2}\,D_n^{3/2}\bigr)$.
\end{lemma}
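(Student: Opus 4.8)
The plan is to mirror the proof of Lemma~\ref{lm:delta-weighted} almost verbatim. Here $\Delta_n^c$ denotes the centred statistic $n\hat{\sigma}_{cn}^2 - n\sigma_n^2$; since $\mathbb{E}[\zeta_i\zeta_j]=0$ for every pair outside $\mathcal{E}_n^c$, the population term in \eqref{eq:population-variance} may be restricted to $\mathcal{E}_n^c$, so that
\begin{equation}
\Delta_n^c = n\sum_{(i,j)\in\mathcal{E}_n^c}\bigl(\nu_{ni}\nu_{nj}\,\zeta_i\zeta_j - \nu_{ni}\nu_{nj}\,\mathbb{E}[\zeta_i\zeta_j]\bigr).
\end{equation}
The first observation is that $\mathcal{E}_n^c\subseteq\mathcal{E}_n$, because uncorrelatedness is implied by independence; hence the number of ordered pairs summed over is still at most $nD_n$.

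I would then set $X_{ij} := n\,\nu_{ni}\nu_{nj}(\zeta_i\zeta_j - \mathbb{E}[\zeta_i\zeta_j])$ for $(i,j)\in\mathcal{E}_n^c$, so that $\Delta_n^c=\sum_{(i,j)\in\mathcal{E}_n^c} X_{ij}$, and expand $\mathbb{V}[\Delta_n^c] = \sum_{(i,j),(k,\ell)\in\mathcal{E}_n^c}\operatorname{Cov}[X_{ij},X_{k\ell}]$. The per-term bound $\operatorname{Cov}[X_{ij},X_{k\ell}]\le \bar{\nu}^4 C/n^2$, obtained exactly as in Lemma~\ref{lm:delta-weighted} from the weight bound $n\nu_{ni}\nu_{nj}\le\bar{\nu}^2/n$ (Assumption~\ref{as:weights}), Cauchy--Schwarz, and the uniform fourth-moment bound (Assumption~\ref{as:ffm}$(b)$), transfers without any change, since it rests only on moment control and not on which index set labels the summands.

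The substantive step is the dependence count. Although the sum now ranges over the smaller set $\mathcal{E}_n^c$, statistical dependence between $X_{ij}$ and $X_{k\ell}$ is still governed by the dependency neighbourhoods of Definition~\ref{df:dep-neighbour}, not by second-order correlation: $X_{ij}$ can be dependent with $X_{k\ell}$ only when $\{k,\ell\}\cap(N_i\cup N_j)\neq\emptyset$, of which there are at most $O(D_n^2)$ such pairs in $\mathcal{E}_n^c\subseteq\mathcal{E}_n$. Combining with the $nD_n$ bound on $|\mathcal{E}_n^c|$ gives $\mathbb{V}[\Delta_n^c]\le C_1 D_n^2\cdot nD_n\cdot\bar{\nu}^4 C/n^2 = O(D_n^3/n)$, and Chebyshev's inequality then yields $\Delta_n^c = O_p(n^{-1/2}D_n^{3/2})$.

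I expect the main obstacle to be purely conceptual rather than computational: one must keep the distinction between \emph{correlation} (which legitimately shrinks the index set entering the estimator) and \emph{dependence} (which still controls the variance) scrupulously clear, since uncorrelatedness of $\zeta_i$ and $\zeta_j$ does not make their higher joint moments vanish. It is precisely for this reason that the dependency-graph parameter $D_n$, rather than some potentially smaller ``correlation degree'', must reappear in the variance bound, so that the rate matches that of Lemma~\ref{lm:delta-weighted} exactly.
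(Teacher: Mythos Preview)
Your proposal is correct and follows essentially the same approach as the paper: define $\Delta_n^c$ as the sum over $\mathcal{E}_n^c\subseteq\mathcal{E}_n$, transfer the per-term covariance bound $|\operatorname{Cov}[X_{ij},X_{k\ell}]|\le\bar{\nu}^4 C/n^2$ from Lemma~\ref{lm:delta-weighted}, count dependent pairs via the same $O(D_n^2)$ neighbourhood argument to obtain $\mathbb{V}[\Delta_n^c]=O(D_n^3/n)$, and conclude by Chebyshev. Your exposition is in fact slightly cleaner than the paper's, which briefly detours through the sign of $\mathbb{V}[\Delta_n]-\mathbb{V}[\Delta_n^c]$ before noting that the absolute-value bound renders that comparison unnecessary.
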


\begin{proof}[Proof of Lemma~\ref{lm:vvest}]
Define
\begin{equation}\label{eq:delta-n-c}
\Delta_n^c
\;=\;
n \sum_{(i,j) \in \mathcal{E}_n^c}
\left(
\nu_{ni} \nu_{nj} \zeta_i \zeta_j
-
\nu_{ni} \nu_{nj} \, \mathbb{E}[\zeta_i \zeta_j]
\right),
\end{equation}
so that \( n \hat{\sigma}_{cn}^2 - n \sigma_n^2 = \Delta_n^c \).
To bound the variance of \( \Delta_n^c \), observe that
\begin{equation}\label{eq:delta-sum-cov2}
\mathbb{V}[\Delta_n^c] = \sum_{(i,j), (l,k) \in \mathcal{E}_n^c} \operatorname{Cov}[X_{ij}, X_{lk}].
\end{equation}
Since \( \mathcal{E}_n^c \subseteq \mathcal{E}_n \),
\begin{equation*}
\mathbb{V}[\Delta_n] - \mathbb{V}[\Delta_n^c] = \sum_{(i,j), (l,k) \in \mathcal{E}_n /\mathcal{E}_n^c} \operatorname{Cov}[X_{ij}, X_{lk}].
\end{equation*}
This difference can be positive when $\sum_{(i,j), (l,k) \in \mathcal{E}_n /\mathcal{E}_n^c} \operatorname{Cov}[X_{ij}, X_{lk}] \geq 0$,
but this cannot be guaranteed in general, as some of the covariance terms may be negative.

Nonetheless, we can assert that the number of terms in the upper bound of $\mathbb{V}[\Delta_n^c]$ is no greater than $C_1 D_n^2$, and that the bound in \eqref{eq:var-delta} applies as well, since it is derived by summing the absolute values of the covariances.
Therefore,
\[
\Delta_n^c \;=\; O_p\bigl(n^{-1/2}\,D_n^{3/2}\bigr),
\]
as claimed.
\end{proof}

\begin{proof}[Proof of Theorem~\ref{th:consistency-variance2}]
By Lemma~\ref{lm:vvest},
$\Delta_n^c = O_p\left(n^{-1/2} D_n^{3/2} \right)$.
In particular, if Assumption~\ref{as:mns} holds with \( D_n = o(n^{1/3}) \), then \( \hat{\sigma}_{cn}^2 - \sigma_n^2 \) goes to zero in mean square.
\end{proof}

\begin{proof}[Proof of Corollary~\ref{co:normality2}]
The proof follows the same argument as in Corollary~\ref{co:normality}, replacing \( \hat{\sigma}_n^2 \) with \( \hat{\sigma}_{cn}^2 \).
\end{proof}

\begin{proof}[Proof of Theorem~\ref{th:consistency-variance3}]
Define
\begin{equation}
\tilde{\Delta}_n
\;=\;
n \sum_{(i,j) \in \tilde{\mathcal{E}}_n}
\left(
\nu_{ni} \nu_{nj} \zeta_i \zeta_j
-
\nu_{ni} \nu_{nj} \, \mathbb{E}[\zeta_i \zeta_j]
\right).
\end{equation}
Obviously, by Assumption~\ref{as:conserv}, the upper bound of $\mathbb{V}[\tilde{\Delta}_n]$ lies between those of $\mathbb{V}[\Delta_n^c]$ and $\mathbb{V}[\Delta_n]$, and the result follows directly.
\end{proof}

\begin{proof}[Proof of Corollary~\ref{co:normality3}]
The proof follows the same argument as in Corollary~\ref{co:normality}, replacing \( \hat{\sigma}_n^2 \) with \( \tilde{\sigma}_{n}^2 \).
\end{proof}

\begin{proposition}[Expected Inverse Neighbourhood Size under Blockwise Erd\H{o}s--Rényi Graphs]\label{pp:inv-degree}
Fix a unit $i$ and let $B_i$ denote the block to which unit $i$ belongs, with block size
$m_i = |B_i|$.
Assume that edges are formed independently within each block according to an Erd\H{o}s--Rényi model with edge probability $p_{i} \in (0,1)$, and that $A_{ii}(\omega)=1$ so that neighbourhoods are defined in the closed sense.
Then
\begin{equation}\label{eq:inv-degree-expectation}
\mathbb{E}_\omega\!\left[\frac{1}{|N_i(\omega)|}\right]
=
\frac{1-(1-p_{i})^{m_i}}{m_i\,p_{i}}.
\end{equation}
\end{proposition}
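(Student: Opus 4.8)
The plan is to reduce the claim to a standard identity for the reciprocal mean of a binomial random variable, followed by a short elementary integration. First I would note that, by construction, each of the $m_i-1$ units $j\in B_i\setminus\{i\}$ is joined to $i$ by an edge independently with probability $p_i$, and that $A_{ii}(\omega)=1$, so the closed neighbourhood size decomposes as $|N_i(\omega)|=1+X$, where $X$ counts the within-block neighbours of $i$ other than itself. By independence of the edge indicators, $X\sim\mathrm{Binomial}(m_i-1,p_i)$, and since $X\ge 0$ we have $|N_i(\omega)|\ge 1$ almost surely; hence $1/|N_i(\omega)|$ is a bounded (by $1$) random variable and the expectation in \eqref{eq:inv-degree-expectation} is finite. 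The task thus reduces to evaluating $\mathbb{E}\!\left[1/(1+X)\right]$ for $X\sim\mathrm{Binomial}(m_i-1,p_i)$.

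For this I would use the representation $1/(1+X)=\int_0^1 t^X\,\dif t$, which holds pointwise for every nonnegative integer value of $X$. Since the integrand is nonnegative and jointly measurable, Tonelli's theorem permits interchanging expectation and integration, and then the probability generating function of the binomial law gives
\begin{equation*}
\mathbb{E}\!\left[\frac{1}{1+X}\right]
=\int_0^1 \mathbb{E}\bigl[t^X\bigr]\,\dif t
=\int_0^1 \bigl(1-p_i+p_i t\bigr)^{m_i-1}\,\dif t .
\end{equation*}
The remaining integral is elementary: substituting $s=1-p_i+p_i t$ yields $\frac{1}{p_i}\int_{1-p_i}^{1}s^{m_i-1}\,\dif s=\frac{1-(1-p_i)^{m_i}}{m_i\,p_i}$, which is exactly \eqref{eq:inv-degree-expectation}.

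As an alternative route I could compute the expectation directly from the binomial mass function, using the combinatorial identity $\frac{1}{k+1}\binom{m_i-1}{k}=\frac{1}{m_i}\binom{m_i}{k+1}$; reindexing the resulting sum by $j=k+1$ turns it into $\frac{1}{m_i p_i}\sum_{j=1}^{m_i}\binom{m_i}{j}p_i^{\,j}(1-p_i)^{m_i-j}=\frac{1}{m_i p_i}\bigl(1-(1-p_i)^{m_i}\bigr)$ by the binomial theorem, recovering the same formula. I do not anticipate a genuine obstacle here; the only points needing a word of care are the measurability/integrability justification for the Tonelli interchange (immediate, since $0\le t^X\le 1$) and the boundary cases, which the hypotheses $p_i\in(0,1)$ and $m_i\ge 1$ handle automatically, ensuring in particular that $m_i p_i\neq 0$ and $|N_i(\omega)|\ge 1$ so that the left-hand side is well defined.
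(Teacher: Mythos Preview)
Your proof is correct and follows the same approach as the paper: both reduce to $|N_i(\omega)|=1+X$ with $X\sim\mathrm{Binomial}(m_i-1,p_i)$ and then invoke the identity $\mathbb{E}[1/(1+X)]=\bigl(1-(1-p)^{N+1}\bigr)/\bigl((N+1)p\bigr)$. The paper simply quotes this identity, whereas you supply two short derivations of it (via the integral representation and via the combinatorial reindexing), so your argument is strictly more detailed but not substantively different.
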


\begin{proof}
Since $A_{ii}(\omega)=1$, the neighbourhood size can be written as
\[
|N_i(\omega)| = 1 + X_i,
\]
where
\[
X_i = \sum_{\substack{j\in B_i\\ j\neq i}} A_{ij}(\omega)
\sim \mathrm{Binomial}(m_i-1,p_i),
\]
by independence of edges within block $B_i$.

Therefore,
\[
\mathbb{E}_\omega\!\left[\frac{1}{|N_i(\omega)|}\right]
=
\mathbb{E}\!\left[\frac{1}{1+X_i}\right].
\]
Using the identity
\[
\mathbb{E}\!\left[\frac{1}{1+X}\right]
=
\frac{1-(1-p)^{N+1}}{(N+1)p},
\qquad
X\sim\mathrm{Binomial}(N,p),
\]
with $N=m_i-1$ and $p=p_i$ yields \eqref{eq:inv-degree-expectation}.
\end{proof}

\section{Representer Estimation in Practice}\label{sec:estimation}

The algorithms below construct the Riesz representer conditional on a realised latent environment $\omega_0$.
This reflects the practical setting in which both the treatment assignment and the observed outcomes correspond to a single realised experiment.

At the theoretical level, the representer is defined as a random element over the latent space $\Omega$.
The asymptotic results in the main text explicitly allow $\psi_i$ to vary across $\omega$ and therefore do not rely on the representer being deterministic.

Conditional on a realised environment $\omega_0$, however, the representer is determined entirely by the treatment-effect functional $\theta_i$ and the randomisation design.
The resulting construction is therefore computational and carried out pointwise in $\omega_0$.

\subsection{Computation in Finite-Dimensional Model Spaces}

In finite-dimensional model spaces \(\mathcal{M}_i\), we can estimate \(\psi_i(z,\omega_0)\) via basis expansion and moment matching. Let \(\{g_{i,1}, \dots, g_{i,m}\}\) be a basis for \(\mathcal{M}_i\) and $\mathrm{dim}(\mathcal{M}_i) = m$. We solve for coefficients \(\boldsymbol{\beta}_i \in \mathbb{R}^m\) such that
\begin{equation}
    \hat{\psi}_i(z, \omega_0) = \sum_{k=1}^m \beta_{i,k} \, g_{i,k}(z, \omega_0)
\end{equation}
satisfies the identity \(\theta_i(\tilde{y}_i) = \langle \tilde{y}_i, \hat{\psi}_i \rangle^0\).
Algorithm~\ref{alg:alg1} summarises this procedure.
This construction aligns with the approach described by \citet{harshaw2022riesz} (preprint version), where Riesz representers are computed via basis expansion and moment matching under the randomisation distribution.

Throughout the construction of the Riesz representer, all objects are understood conditionally on the realised latent state \( \omega_0 \).
Accordingly, basis functions are evaluated at \( \omega_0 \) and can be treated as functions of the treatment assignment \( z \) alone.
Formally, this corresponds to working on the fibre \( \mathcal{M}_i(\omega_0) \subset L^2(\mathcal{Z}) \) induced by the Bochner space representation of RPO.

When the treatment space \( \mathcal{Z} \) is finite with \( K \) distinct levels, a simple and interpretable basis consists of treatment indicators.
In this case, each basis function is defined as a function of \( z \) that equals one if \( z \) corresponds to the \( k \)-th treatment condition and zero otherwise.
This yields a saturated approximation space with a separate coefficient for each treatment arm and does not rely on any structural assumptions about the response surface.

When \( z \) is continuous or high-dimensional, implementable approximation spaces are typically constructed using low-order basis functions of \( z \), such as polynomials or splines.
These bases offer a balance between expressiveness and computational tractability and can be tailored to the structure of the experimental design.
Although the underlying potential outcome functions may depend jointly on \( z \) and the latent environment \( \omega \), this dependence is fixed at \( \omega_0 \) during the representer construction stage and is therefore absorbed into the realised section of the function space.

This procedure is fully offline in the sense that it depends only on the design distribution for \(z\), the basis functions, and the known functional \(\theta_i\).
It does not require access to observed outcomes \(Y_i = \tilde{y}_i(z_0,\omega_0)\).

The construction relies on Assumption~\ref{as:randomisation}, which ensures that the treatment assignment distribution remains equal to the known design distribution conditional on the realised environment $\omega_0$.
This permits expectations with respect to \(z\) to be computed under the randomisation measure \(\mu\).

In symmetric settings with common \(\mathcal{M}_i\) and \(\theta_i\), the same representer \(\psi\) can be reused across units.

\subsection{Approximation in Infinite-Dimensional Model Spaces}\label{subsec:riesz-infinite}

In many practical applications, the model space \( \mathcal{M}_i \) is infinite-dimensional. 
This arises naturally when either the treatment assignment \( z \) or the latent variable \( \omega \) lies in a continuous domain, or when potential outcomes depend on complex functional relationships. 
To compute the Riesz representer in such settings, we project the problem onto a sequence of growing finite-dimensional subspaces.

The estimation procedure again proceeds conditionally on a realised environment $\omega_0$.
Accordingly, the relevant inner product is taken over the treatment assignment distribution conditional on $\omega_0$, yielding a standard projection problem in a Hilbert space.
The computation therefore reduces to constructing finite-dimensional approximations of the representer through projected moment equations.

\begin{assumption}[Separable Model Space]\label{as:separable}
The model space \( \mathcal{M}_i \subset L^2(\mathcal{Z} \times \Omega) \), equipped with the inner product \( \langle \cdot, \cdot \rangle^0 \) defined in~\eqref{eq:inner-product0} for some $\omega_0 \in \Omega$, is a separable Hilbert space.
\end{assumption}

By Assumption~\ref{as:separable}, the space \( \mathcal{M}_i \) admits a countable orthonormal basis \( \{e_k\}_{k \ge 1} \).
For each dimension \( m \in \mathbb{N} \), define the subspace
\begin{equation}
    \mathcal{M}_i^{(m)} = \operatorname{span} \{e_1, \dots, e_m\},
\end{equation}
and let \( P_m: \mathcal{M}_i \to \mathcal{M}_i^{(m)} \) denote the orthogonal projection.
Then the projected representer
\begin{equation}
    \psi_i^{(m)} := P_m \psi_i = \sum_{k=1}^{m} \langle \psi_i, e_k \rangle^0 \, e_k
\end{equation}
converges to \( \psi_i \) in norm as \( m \to \infty \), by the completeness of the Hilbert space \( \mathcal{M}_i \), since \( \{e_k\}_{k \geq 1} \) forms a total orthonormal system and \( P_m \) is the orthogonal projection onto the span of the first \( m \) basis elements.

The estimation of \( \psi_i^{(m)} \) proceeds by solving the finite-dimensional moment equation system that results from projecting the Riesz representation identity onto the span of the first \( m \) basis functions. This yields a linear system involving a Gram matrix and a target vector, as shown in Algorithm~\ref{alg:alg2}.

This procedure is still fully offline in the sense that
both \( G_i^{(m)} \) and \( T_i^{(m)} \) can be computed offline from the design distribution and the known linear functional \( \theta_i \).
It does not require access to observed outcomes \(Y_i = \tilde{y}_i(z_0, \omega_0)\).
In symmetric settings with common \(\mathcal{M}_i\) and \(\theta_i\), the same representer \(\psi\) can be reused across units.

\comm{Remark (Consistency of Representer Estimation).}

The present paper focuses on inference for expectation-based causal estimands under stochastic dependence rather than on optimal nonparametric estimation of infinite-dimensional representers.
The estimation procedure above follows a standard sieve approximation framework based on finite-dimensional projections and empirical moment equations; see, for example, \citet{chen2007sieve, chen_pouzo2012moment, van2000asymptotic}.

Under standard regularity conditions, including sufficiently regular bases and well-conditioned Gram matrices, the projected estimator converges to the population representer.
For completeness, a typical convergence rate takes the form
\begin{equation}
    \| \widehat{\psi}_i^{(m)} - \psi_i \|
    =
    O_p\left( m^{-s} + \sqrt{m/n} \right)
\end{equation}
for some \( s > 0 \), balancing approximation bias and estimation variance.

The orthonormal system $\{e_k\}$ can either be specified \emph{a priori}, using classical bases such as Fourier series, wavelets, or splines, or derived from the data through methods like principal components or kernel eigenfunctions.

\section{Algorithms}

\begin{algorithm}
\footnotesize
\caption{Computing the Riesz representer \(\psi_i(z, \omega_0)\)}\label{alg:alg1}
\begin{algorithmic}
\State \textbf{Inputs:}
\State $\bullet$ Basis functions $\{g_{i,1}, \dots, g_{i,m}\}$ for \(\mathcal{M}_i\)
\State $\bullet$ Known linear functional \(\theta_i\)
\State $\bullet$ Randomisation distribution for \(z\)

\Statex
\State \textbf{Step 1: Compute Gram Matrix \(\hat{\mathbf{G}}_i\)}
\[
[\hat{\mathbf{G}}_i]_{\ell,k} = \mathbb{E}_z\bigl[g_{i,\ell}(z, \omega_0) g_{i,k}(z, \omega_0)\bigr]
\]

\Statex
\State \textbf{Step 2: Compute Target Vector \(\hat{\mathbf{T}}_i\)}
\[
[\hat{\mathbf{T}}_i]_\ell = \theta_i(g_{i,\ell})
\]

\Statex
\State \textbf{Step 3: Solve Linear System}
\[
\hat{\mathbf{G}}_i \boldsymbol{\beta}_i = \hat{\mathbf{T}}_i \quad \Rightarrow \quad \boldsymbol{\beta}_i = \hat{\mathbf{G}}_i^{-1} \hat{\mathbf{T}}_i
\]

\Statex
\State \textbf{Output:} Estimated Riesz representer \(\hat{\psi}_i(z,\omega_0)\)
\end{algorithmic}
\end{algorithm}

\begin{algorithm}
\footnotesize
\caption{Approximating the Riesz representer \(\psi_i(z, \omega_0)\) in infinite-dimensional settings}\label{alg:alg2}
\begin{algorithmic}
\State \textbf{Inputs:}
\State $\bullet$ Truncated orthonormal basis $\{e_1, \dots, e_m\}$ of $\mathcal{M}_i$
\State $\bullet$ Known linear functional \(\theta_i\)
\State $\bullet$ Randomisation distribution for $z$

\Statex
\State \textbf{Step 1: Compute Gram Matrix \(\hat{\mathbf{G}}_i^{(m)}\)}
\[
[\hat{\mathbf{G}}_i^{(m)}]_{\ell,k} = \mathbb{E}_z\bigl[e_\ell(z, \omega_0)\, e_k(z, \omega_0)\bigr]
\]

\Statex
\State \textbf{Step 2: Compute Target Vector \(\hat{\mathbf{T}}_i^{(m)}\)}
\[
[\hat{\mathbf{T}}_i^{(m)}]_\ell = \theta_i(e_\ell)
\]

\Statex
\State \textbf{Step 3: Solve Regularised Linear System}
\[
\hat{\mathbf{G}}_i^{(m)} \boldsymbol{\beta}_i = \hat{\mathbf{T}}_i^{(m)} \quad \Rightarrow \quad \boldsymbol{\beta}_i = \left(\hat{\mathbf{G}}_i^{(m)}\right)^{-1} \hat{\mathbf{T}}_i^{(m)}
\]

\Statex
\State \textbf{Output:} Estimated Riesz representer
\[
\hat{\psi}_i^{(m)}(z, \omega_0) = \sum_{k=1}^{m} \beta_{i,k} \, e_k(z, \omega_0)
\]
\end{algorithmic}
\end{algorithm}

\end{document}